%% file: main-icml26.tex
\documentclass[preprint]{article}
\usepackage{icml2026}

\usepackage{microtype}
\usepackage{graphicx}
\usepackage{subcaption}
\usepackage{booktabs} %

\usepackage{amsmath, amsthm, mathtools, dsfont,amssymb}
\usepackage{tcolorbox}
\usepackage{amsfonts}
\usepackage{csquotes}
\usepackage{nicefrac}
\usepackage{enumitem}
\usepackage{xspace,nicefrac}

\usepackage{xcolor}

\usepackage{hyperref}
\usepackage[capitalize,noabbrev]{cleveref}

\usepackage{color-edits}
\addauthor{as}{red}     %
\addauthor{kb}{blue}    %

\addtocontents{toc}{\protect\setcounter{tocdepth}{0}}

\usepackage{numprop} %

\input{macros}

\input{mac-specific.tex}

\icmltitlerunning{Bandit Social Learning with Exploration Episodes}

\begin{document}

\twocolumn[
  \icmltitle{Bandit Social Learning with Exploration Episodes}

  \icmlsetsymbol{equal}{*}

   \begin{icmlauthorlist}
    \icmlauthor{Kiarash Banihashem}{UMD}
    \icmlauthor{Natalie Collina}{Penn}
    \icmlauthor{Aleksandrs Slivkins}{MSR}
  \end{icmlauthorlist}

  \icmlaffiliation{UMD}{University of Maryland, College Park, MD, USA.}
  \icmlaffiliation{Penn}{University of Pennsylvania, Philadephia, PA, USA.}
  \icmlaffiliation{MSR}{Microsoft Research, New York, NY, USA}

  \icmlcorrespondingauthor{Kiarash Banihashem}{kiarash@umd.edu}
  \icmlcorrespondingauthor{Aleksandrs Slivkins}{slivkins@microsoft.com}

  \icmlkeywords{Machine Learning, ICML}

  \vskip 0.3in
]

\printAffiliationsAndNotice{K.B. and N.C. were interns at Microsoft Research during the active phase of this research project.\vspace{1mm}\\}

\begin{abstract}
\input{abstract.tex}
\end{abstract}

\input{sec-intro.tex}

\input{sec-related.tex}

\input{sec-prelims.tex}

\input{sec-failure.tex}

\input{sec-theorems.tex}

\input{sec-theorems-sketches.tex}

\input{sec-discussion.tex}

\input{impact-statement.tex}

\bibliographystyle{icml2026}
\bibliography{bib-abbrv,bib-AGT,bib-bandits,bib-slivkins}

\newpage
\appendix
\onecolumn

\crefalias{section}{appendix}
\crefalias{subsection}{appendix}

\addtocontents{toc}{\protect\setcounter{tocdepth}{2}}
\renewcommand{\contentsname}{APPENDICES}
\tableofcontents
\newpage

\input{appendix}

\input{sec-failure-proof.tex}

\input{sec-proof-cost.tex}

\input{sec-proof-main.tex}

\newpage
\input{sec-proof-f.tex}

\newpage
\input{sec-counter-example}

\newpage
\input{new-appendix-kb-1}
\newpage
\input{new-appendix-kb-2}

\end{document}

%% file: macros.tex
\newcommand{\OMIT}[1]{}
\newcommand{\xhdr}[2][1mm]{\vspace{#1} \noindent{\bf #2}}
\newcommand{\ie}{{\em i.e.,~\xspace}}

\newcommand{\eg}{{\em e.g.,~\xspace}}

\newcommand{\tsum}[1]{{\textstyle\sum_{#1}}\,}
\newcommand{\N}{\mathbb{N}}
\newcommand{\mE}{\mathcal{E}}

\renewcommand{\eqref}[1]{(\ref{#1})}
\renewcommand{\refeq}[1]{Eq.~(\ref{#1})}

\DeclareMathOperator{\argmax}{arg\,max}

\newcommand{\wtilde}{\widetilde}

\newcommand{\EXP}{\mathbb{E}}
\newcommand{\PROB}{\textnormal{Pr}}
\newcommand{\IND}{\mathds{1}}
\renewcommand{\Pr}[1]{{\PROB \sbr{#1}}}
\newcommand{\Pru}[2]{\ensuremath{\PROB_{#1}\left[#2\right]}}
\newcommand{\Ex}[1]{{\EXP \sbr{#1}}}
\newcommand{\Exu}[2]{\ensuremath{\mathbb{E}_{#1}\left[#2\right]}}
\newcommand{\ind}[1]{{\IND \left\{ #1 \right\}}}

\newcommand{\R}{\mathbb{R}}

\newcommand{\floor}[1]{\left\lfloor\,#1\,\right\rfloor}

\newcommand{\rbr}[1]{\left(#1\right)}
\newcommand{\abs}[1]{\left|#1\right|}
\newcommand{\sbr}[1]{\left[#1\right]}
\newcommand{\cbr}[1]{\left\{#1\right\}}

\theoremstyle{plain}
\newtheorem{theorem}{Theorem}[section]

\newtheorem{lemma}[theorem]{Lemma}
\newtheorem{corollary}[theorem]{Corollary}

\newtheorem{claim}[theorem]{Claim}

\theoremstyle{definition}
\newtheorem{definition}[theorem]{Definition}

\theoremstyle{remark}
\newtheorem{remark}[theorem]{Remark}

\newcommand{\EqComment}[1]{\text{\emph{(#1)}}}
\newcommand{\term}[1]{\ensuremath{\mathtt{#1}}\xspace}

%% file: mac-specific.tex
\newcommand{\tape}{\term{tape}}
\newcommand{\event}{\term{Ev}}

\newcommand{\tstop}{t_{\text{stop}}}

\newcommand{\cost}{c_{\textnormal{sel}}}
\newcommand{\failexpl}{\term{FAIL}}
\newcommand{\costupbound}{c_{\textnormal{max}}}
\newcommand{\badset}{\term{badset}}
\newcommand{\seq}{\term{seq}}

\newcommand{\Reg}{\term{Reg}}
\newcommand{\StrongFail}{\term{StrongFail}}
\newcommand{\Bern}{\term{Bernoulli}}
\newcommand{\Bridge}{\term{Bridge}}
\newcommand{\postU}{\term{PU}}
\newcommand{\estopt}{\postU_{\textnormal{opt}}}
\newcommand{\estalt}{\postU_{\textnormal{alt}}}
\newcommand{\nottau}{\kappa}

\newcommand{\rvec}{\vec{r}} %
\newcommand{\muvec}{\vec{\mu}} %
\newcommand{\prior}{\mathcal{P}} %
\newcommand{\EpHist}{\widetilde{H}} %
\newcommand{\BReg}{\term{BReg}} %
\newcommand{\UGap}{G} %

\newcommand{\subp}{_{\prior}}
\newcommand{\costup}{\costupbound}

\newcommand{\EpiBSL}{\term{EpiBSL}} %
\newcommand{\Greedy}{\term{Greedy}} %

\newcommand{\stableEvent}{arm1-stability\xspace}

%% file: abstract.tex
We study a stylized social learning dynamics where self-interested agents collectively follow a simple multi-armed bandit protocol. Each agent controls an ``episode": a short sequence of consecutive decisions. Motivating applications include users repeatedly interacting with an AI, or repeatedly shopping at a marketplace. While agents are incentivized to explore within their respective episodes, we show that the aggregate exploration fails: e.g., its Bayesian regret grows linearly over time. In fact, such failure is a (very) typical case, not just a worst-case scenario. This conclusion persists even if an agent's per-episode utility is some fixed function of the per-round outcomes: e.g., $\min$ or $\max$, not just the sum. Thus, externally driven exploration is needed even when some amount of exploration happens organically.

%% file: sec-intro.tex
\section{Introduction}
\label{sec:intro}

We consider a common type of social learning dynamics with a feedback loop. Self-interested users/customers (henceforth, \emph{agents}) choose among available alternatives, under some uncertainty on the quality of these alternatives, and return feedback on their experiences. This feedback is aggregated by an online platform and served to the future agents, affecting their decisions. Collectively, the agents face  \emph{exploration-exploitation tradeoff},%
\footnote{This tradeoff -- between experimentation and making optimal near-term decisions -- is a central and extremely well-studied issue in multi-armed bandits and reinforcement learning.}
but individual agents favor exploitation, reluctant to explore for the sake of others.

This misalignment of incentives often leads to aggregate inefficiency. In fact, agents may converge on suboptimal alternative(s), essentially giving up on further exploration. Such outcomes, called \emph{learning failures}, are widely observed empirically and are very typical in theory. In fact, they provably happen (with positive probability) as a very common case in a wide variety of underlying learning problems \citep{BSL-myopic23,StructuredGreedy-neurips25}.

While prior work mostly focused on agents making one decision each, we allow each agent to make \emph{several} decisions (as is common in practice). The key difference is that agents may now explore for the sake of improving their own future decisions. This exploration adds up over time, even if each agent explores very little, possibly side-stepping learning failures mentioned above. Thus, we ask:

\begin{itemize}
\item[(Q)] Can learning failures be avoided when self-interested agents make a few decisions each?
\end{itemize}

Despite the failure results mentioned above, we emphasize that learning failure is \emph{not} a foregone conclusion for exploitation-only learning dynamics. Indeed, such dynamics may converge to the best alternatives (and, under some conditions, even learn optimally fast) if the underlying learning problem exhibits a suitable structure
\citep[\eg][]{kannan2018smoothed,bastani2017exploiting,StructuredGreedy-neurips25}.

For a concrete example tailored to (Q), suppose there are two arms with (unknown) deterministic payoffs, each agent makes two consecutive decisions, and her aggregate utility from these decisions is the largest of the two rewards. Then the first agent tries both arms (receiving the maximal possible utility), after which both arms are known.

\xhdr{Our contributions.}
We investigate (Q) in a stylized multi-armed bandit scenario. There are two arms, each associated with a fixed but unknown reward distribution with bounded rewards. In each round, an arm is chosen and its random reward is realized and observed.%
\footnote{When each round is controlled by an algorithm, this is the quintessential case of multi-armed bandits, as they are studied in machine learning and computer science.}
Time is partitioned into \emph{episodes} of $m\geq 2$ bandit rounds each. In each episode, a new agent arrives and controls decisions within this episode. To avoid degenerate outcomes when agents explore despite deriving little or no utility from doing so, we posit \emph{selection costs} (which are typically very small), and an option to skip rounds (getting zero payoff). The overall framing is Bayesian: there is a common Bayesian prior, and each agent chooses a policy (\ie a bandit algorithm for her episode) that is Bayesian-optimal given the current data.

The case of single-round episodes ($m=1$) corresponds to ``pure exploitation" in Bayesian bandits and leads to learning failures \citep{BSL-myopic23}, whereas any $m\geq 2$ allows for within-episode exploration.

Somewhat surprisingly, we find that learning failures are the typical case, for any fixed episode length $m$. Essentially, learning failures happen with positive probability \emph{for every problem instance}, not just as a worst case. It follows that Bayesian regret of the learning dynamics scales linearly as a function of time. In contrast, sublinear Bayesian regret is a standard ``minimal desiderata" for a useful bandit algorithm.

This conclusion is robust to the choice of agents' utility model. An agent's aggregate utility for the episode is, in general, some fixed function of rewards received during the episode. The most natural options are the sum (``all rewards matter"), the $\max$ (``even one good round suffices"), and the $\min$ (``can't tolerate a mistake"). We handle these three options, and in fact allow an arbitrary function $f$ of the rewards under mild conditions. Namely, $f$ must be non-constant, non-decreasing in each coordinate, and symmetric. We also provide a similar result for non-symmetric $f$, restricted to episode length $m=2$.

{We also provide partial extensions that go beyond two arms and touch upon the case of no selection costs (\Cref{sec:discussion}).}

\xhdr{Additional motivation.}
We zoom in on two features of bandit-like social learning dynamics: agents making multiple decisions and evaluating the joint outcome of these decisions in different possible ways. While our model is extremely stylized, these two features are present in several application scenarios.

First, consider human users repeatedly interacting with an AI, \eg via prompting, and sharing their interaction strategies and experiences with others via the AI platform or social media. A given user may make multiple attempts to accomplish a specific task, \eg write a CV or generate an image, and choose the best one. This corresponds to $f=\max$. Alternatively, a user may have a workload consisting of many instances of the same task type, \eg use AI for coding; or process many job applications. This maps to fairly generic symmetric functions $f$.
\footnote{In particular, $f = \text{sum}$, \eg deriving utility from each AI-assisted piece of code; $f=\min$, \eg all applications must be processed fairly; or $f=\max$, if spotting one great candidate suffices; or any combination of the above.}

Second, customers repeatedly shopping for products or experiences. This may motivate a variety of functions $f$, depending on complementarities between the products.

Third, real-world activities such as autopilot-assisted driving (which conceptually decomposes into many small instances of driving). This corresponds to $f=\min$ if one does not wish to crash, or $f=\text{sum}$ if one values the average driving experience, or anything in between.

In all these scenarios, users/customers may explore within relatively short bursts of activity, corresponding to the episodes in our model, but may be mostly myopic over longer time scales. So, a new burst of activity of the same agent can  reasonably be interpreted as a new agent.

%% file: sec-related.tex
\subsection{Related Work}
\label{sec:related}

Multi-armed bandits is a vast subject, background can be found in \citep{slivkins-MABbook,LS19bandit-book}. Most relevant to this paper are bandit problems with fixed but unknown reward distributions (a.k.a. \emph{stochastic bandits}), studied since \citet{Lai-Robbins-85,bandits-ucb1}. We consider a basic version without any ``reward structure" that relates arms to one another. A large literature on stochastic bandits under various reward structures is outside our scope. Stochastic bandits with a Bayesian prior (a.k.a. \emph{Bayesian bandits}) is also a fairly standard model, often/usually studied in conjunction with a specific approach called \emph{posterior sampling}, a.k.a. \emph{Thompson Sampling}
\citep{TS-survey-FTML18}.

The social learning dynamics described above can be interpreted as a ``greedy" bandit algorithm (\Greedy) which only exploits. As such, it has been believed to perform poorly, so that exploration is needed.%
\footnote{Accordingly, most of literature on multi-armed bandits is dedicated to balancing exploration and exploitation.}
A classic example to substantiate this belief considers a two-armed bandit problem with 0-1 rewards, and posits that the greedy algorithm is initialized with a single sample of each arm. If the best arm initially returns a $0$ while another arm returns a $1$, \Greedy gets permanently stuck on the bad arm. Thus, we have a type of a learning failure discussed above.

\citet{BSL-myopic23} extend this example in various directions, proving that similar learning failures (i) hold with substantial probability even if the greedy algorithm is initialized with \emph{many} samples; (ii) happen even if the agents exhibit modest amounts of optimism/pessimism and/or other behavioral biases; and (iii) extend to Bayesian bandits. \citet{StructuredGreedy-neurips25} investigated \Greedy for bandit problems with an arbitrary known structure and characterized whether learning failures happen. One take-away message is that positive-probability learning failure is the typical case for $K$-armed bandits. Earlier work considered \Greedy for bandit problems with specific one-dimensional reward structures, and derived learning failures for these problems \citep{Zeevi-ManSci12,denBoer-ManSci14,Lai-Robbins-control82,Keskin-OpRe18}.

Most notable \emph{positive} results for \Greedy focus on linear contextual bandits (\ie bandit problems in which a \emph{context vector} is revealed before each round, and rewards are linear in this vector). \Greedy achieves near-optimal regret rates if the context vectors are sufficiently diverse or smoothed
\citep{kannan2018smoothed,bastani2017exploiting,Greedy-Manish-18}. In a different direction, \Greedy is also known to attain $o(T)$ regret in various scenarios with a very large number of near-optimal arms \citep{Bayati-nips20,Jedor-Greedy21}. \citet{StructuredGreedy-neurips25} derives ``asymptotic success" (\ie sublinear regret) for various reward structures in contextual bandits, and in fact characterizes asymptotic failure vs success for arbitrary finite structures.

An explicit connection between exploitation in a bandit-like environment and social-learning dynamics of self-interested agents is made in \citet{Kremer-JPE14,Che-13} and the subsequent the literature on \emph{incentivized exploration} \citep{IncentivizedExploration-chapter}. The perspective is different, though: instead of studying a \emph{given} learning dynamics induced by self-interested behavior, they design interventions to change the dynamics and incentivize the agents to explore.

The line of work on ``strategic experimentation'', starting from \citet{Bolton-econometrica99,Keller-econometrica05}, see \citet{Horner-survey16} for a survey, studies long-lived agents facing the same bandit problem and observing actions and rewards of one another. While each agent can make multiple decisions, like in our model, their scenario is quite different: the agents live forever and act in parallel (arriving at random times). In particular, learning failures cannot happen, essentially because each agent alone can eventually learn the best arm. The technicalities are different, too: one of the two arms is known, and  rewards are time-discounted. The agents engage in a complex repeated game where they explore but prefer to free-ride on others. The focus is on how much exploration happens in the equilibria of this game.

%% file: sec-prelims.tex
\section{Model and Preliminaries}
\label{sec:prelims}

Self-interested agents collectively follow a simple bandit protocol. There is a fixed set of $K$ arms. Each arm $i$ is associated with an (unknown) reward distribution $D_i$. In each round $t \in \mathbb{N}$, an arm $i=i_t$ is chosen by an agent (as defined below) and the respective reward $r_t\sim D_i$ is realized and observed. Before making the decision, the agent observes the history for all previous rounds,
    $H_t := \rbr{ (i_s,r_s): s\in [t-1] }$,
called the \emph{round-$t$ history}. Agents do not receive any other information, \eg no private signals. If all agents were controlled by an algorithm, this would be a standard $K$-armed bandit problem with stochastic rewards.

Each agent controls $m$ consecutive rounds. Formally, time is partitioned into \emph{episodes}: sequences of $m$ consecutive rounds. In each episode $e \in \mathbb{N}$, a new agent arrives and controls the decisions within this episode. Equivalently, agent $e$ chooses a \emph{per-episode policy} $\pi_e$: a bandit algorithm for making decisions throughout the episode, possibly adapting to the observations. Before choosing $\pi_e$, the agent observes the current history, \ie the history of all past agents. The case of single-round episodes ($m=1$)
corresponds to the ``greedy" bandit algorithm, as discussed in Related Work. This paper focuses on episode length $m\geq 2$.

We call this problem \emph{Episodic Bandit Social Learning} (\EpiBSL).%
\footnote{\citet{BSL-myopic23} studied the case $m=1$ from the social learning perspective, calling it ``Bandit Social Leaning".}
Below we spell out some details on rewards and utilities, which are included into \EpiBSL by default.

\xhdr{Arms and rewards.} There are two arms $i \in \{1,2\}$, each with a Bernoulli reward distribution with (unknown) mean $\mu_i\in[0,1]$. Initially, each $\mu_i$ is drawn independently from a known Beta prior
    $\prior_i = \mathrm{Beta}(\alpha_i,\, \beta_i)$,
where $\alpha_i, \beta_i>0$. There's also a third arm, called the \emph{skip-arm}, which is known to always yield reward 0 and corresponds to skipping the round. The other two arms will be called \emph{non-skip arms}.

\xhdr{Agent utility.}
Rewards received by a given agent $e$ throughout the episode are aggregated as follows. Let
    \[ \rvec_e := \rbr{ r_{t+(e-1)m}: t\in[m]}\in \{0,1\}^m \]
denote the reward vector for this episode. The aggregate reward (a.k.a. \emph{score)} of agent $e$ is defined as $f(\rvec_e)$, for some known \emph{aggregation function} $f \colon \{0,1\}^m \to [0,m]$. The paradigmatic examples are the sum, the $\max$ and the $\min$ over the observed rewards.

We consider a generic $f$, subject to three mild properties:

\begin{itemize}
\item $f$ is coordinatewise non-decreasing,
\item $f$ is non-constant (\ie $f(\vec{0})<f(\vec{1})$),

\item $f(\vec{0})=0$ (it's w.l.o.g.: replace $f(\cdot)$ with $f(\cdot) - f(\vec{0})$).
\end{itemize}
We assume these properties without further mention.

Our main result posits that $f$ is symmetric, \ie permuting the coordinates in $\rvec\in \{0,1\}^m$ does not change $f(\rvec)$.
\footnote{We also provide a result, \Cref{thm:m_2_gen}, when $f$ need not be symmetric. This result is limited to the case $m=2$.}

Agents also incur selection costs (henceforth, simply \emph{costs}), defined as some known $\cost\in(0,1]$ each time a non-skip arm is selected, and $0$ for the skip-arm. Finally, agent's utility $U_e$  is defined as the score $f(\rvec_e)$ minus the total cost.

\xhdr{Agent behavior.}
Each agent $e$ chooses its policy $\pi_e$ so as to maximize its Bayesian-expected utility given history. Formally, let $U(\pi_e)$ be the agent's utility under policy $\pi_e$. Then policy $\pi_e$ is chosen as a per-episode policy that maximizes $\Ex{U(\pi_e) \mid \EpHist_e}$, where $\EpHist_e$ be the history at the start of the episode. We assume w.l.o.g. that policy $\pi_e$ is deterministic.

{We emphasize that each agent acts in its own interest, ignoring the interests of the other agents. In particular, the agents would not cooperate to run any bandit algorithm that deviates from the self-interested behavior proscribed above.}

\xhdr{Problem instance.} Thus, an instance of \EpiBSL is specified by  episode length $m$, joint prior $\prior = (\prior_1,\prior_2)$, aggregation function $f$, and selection cost $\cost$. The ``realized" problem instance contains all of the above and the vector of realized mean rewards, $\muvec = (\mu_1,\mu_2)$.

\subsection{Regret} 

Agents' collective behavior on a given problem instance comprises a bandit algorithm, and therefore can be evaluated as such. We use standard notions of regret from bandit literature, suitably instantiated to \EpiBSL. Let
    $V\rbr{\pi\mid\muvec} =\Ex{U(\pi) \mid \muvec}$
be the expected utility of per-episode policy $\pi$ given $\muvec$, a specific realization of the mean rewards. Let $U^*\rbr{\muvec} = \sup_{\pi} V\rbr{\pi\mid \muvec}$ the supremum of that over all policies $\pi$; this is our comparator. The \emph{pseudoregret} after $T$ episodes is defined as
\begin{align} \label{eq:def_regret}
\Reg\rbr{T\mid\muvec}
    := T\cdot U^*\rbr{\muvec}
      - \tsum{e\in [T]} V\rbr{\pi_e\mid \muvec}.
\end{align}

We are interested in \emph{Bayesian regret},
    \[\BReg(T) := \Ex{\Reg\rbr{T\mid\muvec}}.\]

\xhdr{Background.} \refeq{eq:def_regret} specializes to the standard notion of pseudo-regret in stochastic bandits when $m=1$. Optimal regret rates $\Reg\rbr{T\mid\muvec}$ for bandit algorithms scale as
    $O\rbr{\frac{\log T}{|\mu_1-\mu_2|}}$
for a particular problem instance \citep{Lai-Robbins-85,bandits-ucb1}, and as
    $\widetilde{O}\rbr{\sqrt{T}}$
in the worst case over all problem instances
    \citep{bandits-ucb1,bandits-exp3}.

\subsection{Preliminaries}

{We emphasize that a newly arrived agent has exactly the same information as the previous agent at this time. In particular, their Bayesian posteriors coincide.}

We leverage well-known facts about conjugate Beta priors. First, Bayesian posteriors are also Beta distributions. Fix a non-skip arm $i$ and round $t$. Let $S_{i,t}$ be the total reward from this arm after round $t$. Then Bayesian posterior of $\mu_i$ after round $t$ is
    $\prior_{i,t} = \mathrm{Beta}(\alpha_{i, t},\, \beta_{i, t})$,
where
    $\alpha_{i, t} = \alpha_i + S_{i,t}$
and
    $\beta_{i, t} = \beta_i + (t-1-S_{i,t})$.
Second, the prior-mean reward for arm $i$ is
    $\Ex{\mu_i} = \frac{\alpha_i}{\alpha_i + \beta_i}$.
Likewise, the posterior-mean reward of arm $i$ after round $t$ is
    $\Ex{\mu_i\mid H_t} = \frac{\alpha_{i, t}}{\alpha_{i, t} + \beta_{i, t}}$.

Throughout, we posit w.l.o.g. that arm $1$ is \emph{prior-good}, \ie $\Ex{\mu_1}>\Ex{\mu_2}$. We say that arm $i$ is \emph{posterior-good} (resp., \emph{posterior-bad}) after round $t$ if it has a larger (resp., smaller) posterior-mean reward $\Ex{\mu_i\mid H_t}$.

Among the two non-skip arms, the \emph{good} (resp., \emph{bad}) arm is the one with a larger (resp., smaller) mean reward. Clearly, the optimal per-episode policy given $\muvec$ is restricted to the good arm and the skip-arm (see \Cref{lm:optimal_policy_good_arm}).

\subsection{Additional Discussion}

{
\xhdr[0mm]{Selection costs}
in our model represents agent's time/effort for one more pull of a non-skip arm. They do \emph{not} represent the full cost of exploration, \ie the opportunity cost of not exploiting. In substance, they ensure that (i) choosing an arm with a super-small reward should be strictly worse than not doing anything in this round, and so (ii) agents would not keep exploring under little/no incentive to do so. We believe this reflects the motivating applications.

\xhdr{Full observability}
of the history is crucial for our results, and it is often achieved by online platforms that aggregate user reviews. Note that aggregate statistics (the number of pulls and average reward for each arm) suffice for our purposes. Characterizing what happens when the history is only partially observed is an important open question which is not understood even for $m=1$.

\xhdr{Conjugate priors} are essential to our analysis, allowing for explicit posterior updates.%
\footnote{We conjecture that our analysis should carry over to any bounded rewards and any priors under some mild assumptions (\eg the posterior should be monotone in realized rewards), but we don't know how to handle the technicalities.}
Beta-Bernoulli conjugate priors is a very standard modeling choice throughout all Bayesian analyses in the literature. Going beyond conjugate priors stretches the motivating story, as  human agents are unlikely to engage in complex Bayesian reasoning.

\xhdr{Early termination.} An agent can ``skip" all remaining rounds, getting a reward of $0$ and no cost for each skipped round. This corresponds to terminating the episode early for paradigmatic aggregation functions such as sum and max.
}

%% file: sec-failure.tex
\section{Learning Failures and Regret}
\label{sec:failure}

This section discusses several notions of \emph{learning failure} relevant to \EpiBSL and our analysis and draws pertinent connections between them. (We establish existence of these failures in \Cref{sec:theorems}.)

A fundamental failure of exploration is when the good arm is not played more than a few times, \ie the agents give up on this arm before realizing that it is good. We  use a slightly stronger notion to derive concrete implications for regret.

\begin{definition} \label[definition]{def:fail}
Fix $c>0$ and $N\in\N$. Event $\failexpl_{c, N}$  occurs if the following three conditions hold:
\begin{itemize}
    \item (Bounded mean rewards)
    $\mu_1, \mu_2 \in (c, 1-c)$.
    \item (reward-gap)
    $\mu_2 \ge \mu_1 + c$
    \item
    (Bounded \#pulls)
    Arm $2$ is pulled at most $N$ times.
\end{itemize}
\end{definition}

This is the type of learning failure that we establish going forward (as holding with constant probability). Crucially, we need the best arm is better by some margin. For technical convenience, we also include mean rewards being bounded and arm $2$ being good (whereas arm $1$ is prior-good).

A standard ``end-to-end" notion of a learning failure is linear Bayesian regret, as a function of time. (Because \emph{sublinear} Bayesian regret is a standard ``minimal desiderata" for Bayesian bandit algorithms.) Deriving it from \Cref{def:fail} is somewhat non-trivial technically, requiring an intermediary: a version of \Cref{def:fail} which requires the good arm to not even be \emph{considered} by the per-episodic policies.

\begin{definition}
A per-episodic policy \emph{considers} an arm if it chooses this arm during the episode with positive probability. An episode $e$ considers an arm if the resp. policy $\pi_e$ does.
\end{definition}

\begin{definition}
Fix $c > 0$ and $N\in\N$. Event $\StrongFail_{c,N}$, called the \emph{strong failure}, occurs if
\begin{itemize}
    \item (\emph{reward-gap}) $\mu_2 \ge \mu_1 + c$, and
    \item  At most $N$ episodes consider arm $2$.
\end{itemize}
\end{definition}

We obtain the strong failure from $\failexpl_{c,N}$ for any problem instance, as per the following lemma.

\begin{lemma}\label[lemma]{thm:weak-to-strong}
Fix $c\in(0,\nicefrac12)$ and $N\in\N$. Consider an \EpiBSL instance such that
    $\delta:=\Pr{\failexpl_{c,N}}>0$.
Then
\begin{align}\label{eq:weak-to-strong}
    \Pr{\StrongFail_{c,N'} \mid \failexpl_{c,N}} \ge \nicefrac12,
\end{align}
for some $N'<\N$ determined by $c,N,\delta$ and length $m$.
\end{lemma}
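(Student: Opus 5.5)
The idea is to show that, on the event that both means lie in $(c,1-c)$, every episode that considers arm $2$ actually pulls it with conditional probability at least $c^m$. Many considering episodes with few actual pulls of arm $2$ is then exponentially unlikely. We then choose $N'$ so that this bad probability is at most $\delta/2$.

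\emph{Step 1: a per-episode lower bound.} Fix $\muvec$ with $\mu_1,\mu_2\in(c,1-c)$, and fix an episode $e$ whose (deterministic) policy $\pi_e$ considers arm $2$. Then some realization of within-episode rewards leads $\pi_e$ to select arm $2$ in some round of the episode.

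Take the earliest such round. Before it, $\pi_e$ selects only arm $1$ or the skip-arm. The skip-arm's reward is deterministic. Each pull of arm $1$ produces the required reward value with probability at least $\min(\mu_1,1-\mu_1)>c$.

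There are at most $m-1$ rounds before arm $2$ is chosen. Hence, conditional on $\muvec$ and on the history $\EpHist_e$, arm $2$ is pulled during episode $e$ with probability at least $c^{m-1}\ge c^m$.

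\emph{Step 2: coupling with Bernoulli trials.} Let $\tau_1<\tau_2<\dots$ be the (random) indices of the episodes that consider arm $2$. Each $\tau_j$ is a stopping time with respect to the history filtration, since whether $\pi_e$ considers arm $2$ is determined by $\EpHist_e$.

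Let $X_j$ indicate that arm $2$ is pulled in episode $\tau_j$. By Step 1, on the event that the means lie in $(c,1-c)$, we have $\Pr{X_j=1\mid \muvec, \EpHist_{\tau_j}}\ge c^m$ whenever $\tau_j<\infty$.

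A standard coupling (or a martingale/Azuma-type argument on $\sum_j (X_j - c^m)$) then gives the following bound:
\[
\Pr{\mu_1,\mu_2\in(c,1-c),\ \tau_{N'}<\infty,\ \tsum{j\le N'} X_j\le N}\le \Pr{\mathrm{Bin}(N',c^m)\le N}=:\eta(N').
\]
This step is the main technical point. The conditioning on $\muvec$ and the possibility that $\tau_j=\infty$ must be handled carefully. I would extend the sequence by independent Bernoulli$(c^m)$ dummies when $\tau_j=\infty$, and work conditionally on $\muvec$ throughout.

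\emph{Step 3: conclusion.} On $\failexpl_{c,N}$ the means lie in $(c,1-c)$, the reward gap holds, and arm $2$ is pulled at most $N$ times, so $\sum_j X_j\le N$. If $\StrongFail_{c,N'}$ fails on $\failexpl_{c,N}$, then more than $N'$ episodes consider arm $2$ (the gap condition already holds). Hence
\[
\Pr{\failexpl_{c,N}\setminus \StrongFail_{c,N'}}\le \eta(N').
\]
Since $\eta(N')\to 0$ as $N'\to\infty$ for fixed $N$ and $c^m$ (e.g., by a Chernoff bound), we choose $N'$, depending only on $c$, $N$, $m$ and $\delta$, so that $\eta(N')\le \delta/2$. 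Dividing by $\Pr{\failexpl_{c,N}}=\delta$ yields \eqref{eq:weak-to-strong}.
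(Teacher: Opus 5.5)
Your proposal is correct and follows essentially the same route as the paper. Both arguments lower-bound by $c^m$ the probability that an episode considering arm $2$ actually pulls it, on the event $\mu_1,\mu_2\in(c,1-c)$. Both then pad the indicator sequence beyond the last considering episode, use concentration to get $\Pr{\failexpl_{c,N}\setminus\StrongFail_{c,N'}}\le\delta/2$, and divide by $\delta$. The differences are cosmetic: you use binomial stochastic domination with Bernoulli$(c^m)$ dummies and condition on $\muvec$, whereas the paper pads with $1$'s, conditions on the event $\mE$, and invokes its Azuma--Hoeffding bound (\Cref{lm:alex-process}).
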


\begin{proof}[Proof Sketch]
As a shorthand, we say that a \emph{good} episode is one that considers arm $2$. On a high level, we want to argue that arm $2$ is chosen more than $N$ times if there are  sufficiently many good episodes, contradicting $\failexpl_{c,N}$.

Making this argument formal requires some care. Condition on event $\mE$ that $\mu_1,\mu_2\in(c,1-c)$. Then in each good episode, arm $2$ is chosen with some minimal probability $q>0$ (determined by $c,m$). Choosing $N'$ large enough, define event $\mE^*$ as follows: if there are at least $N'$ good episodes, then arm $2$ is chosen more than $N$ times in the first $N'$ good episodes. Invoking Azuma-Hoeffding inequality, we obtain that $\Pr{\mE^*\mid\mE} \geq 1-\delta/2$. Now, event $\failexpl_{c,N} \cap \mE^*$ immediately implies $\StrongFail_{c,N'}$ and happens with probability at least $\delta/2$. This implies \refeq{eq:weak-to-strong}.
\end{proof}

Now, let's focus on the strong failure and prove that it implies linear Bayesian regret. Most immediately, we get a statement in terms of (essentially) the gap in expected per-episode utility between the good arm and the bad arm, as per the following definition.

\newcommand{\badPi}{\Pi_{\mathtt{bad}}}

\begin{definition}\label[definition]{def:util-gap}
Fixing $\muvec$, the \emph{utility-gap} is
\[ \UGap(\muvec) := U^*(\muvec) - \sup_{\pi\in\badPi} V\rbr{\pi\mid \muvec} \geq 0,\]
where $\badPi$ is the set of all per-episode policies that do not consider the good arm (but may skip some rounds).
\end{definition}

\begin{theorem}\label{thm:strong-failure-regret}
Consider an instance of \EpiBSL. If event $\StrongFail_{c,N}$ holds, for some $c>0$ and $N\in\N$, then
\begin{align}\label{eq:thm:strong-failure-regret}
\Reg(T) \ge (T-N)\cdot \UGap(\muvec)
\quad\text{for all episodes $T$}.
\end{align}
\end{theorem}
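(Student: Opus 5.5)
The plan is to lower-bound each summand of the pseudoregret \refeq{eq:def_regret} separately, splitting episodes according to whether they consider arm $2$. Every summand is non-negative, since $U^*(\muvec)$ is the supremum of $V(\pi\mid\muvec)$ over all per-episode policies and hence $U^*(\muvec)-V(\pi_e\mid\muvec)\ge 0$ for every $e$. So it suffices to find at least $T-N$ episodes $e\in[T]$ whose summand is at least $\UGap(\muvec)$.

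Under $\StrongFail_{c,N}$, the reward-gap condition $\mu_2\ge\mu_1+c>\mu_1$ makes arm $2$ the good arm. The second condition says at most $N$ episodes in total (hence at most $N$ among the first $T$) consider arm $2$. For every other episode $e\in[T]$, the policy $\pi_e$ chooses arm $2$ with probability zero, so it belongs to $\badPi$ from \Cref{def:util-gap}. Therefore $V(\pi_e\mid\muvec)\le \sup_{\pi\in\badPi}V(\pi\mid\muvec)$, and the summand $U^*(\muvec)-V(\pi_e\mid\muvec)$ is at least $\UGap(\muvec)$.

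Summing over these at least $T-N$ episodes, and bounding the remaining summands below by zero, gives $\Reg(T)\ge (T-N)\cdot\UGap(\muvec)$. When $T\le N$ the right-hand side is non-positive and the bound is immediate.

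The one point needing care is that "considers" is defined with respect to the randomness inside the episode: $\pi_e$ is a deterministic policy fixed from $\EpHist_e$, and the randomness comes from rewards. So "does not consider arm $2$" should be read pathwise on the realized history: the fixed policy $\pi_e$ never selects arm $2$ on any reward realization. That is exactly membership in $\badPi$. We must also check that $V(\pi_e\mid\muvec)$ is evaluated with $\pi_e$ held fixed, as a function of $\muvec$ only. This matches the definition of $\Reg$, so no further obstacle arises.
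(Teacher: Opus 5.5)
Your proposal is correct and matches the paper's argument. The paper gives that argument only implicitly, in the remark after the theorem and in the proof of the $K$-arm regret corollary: at least $T-N$ of the first $T$ episodes do not consider the good arm $2$, so their policies lie in the class of policies not considering the good arm and each contributes at least the utility-gap, while every other summand of the pseudoregret is non-negative (your treatment of $T\le N$ via non-negativity of the utility-gap is likewise fine).
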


\begin{remark}
\refeq{eq:thm:strong-failure-regret} holds deterministically whenever the realized mean rewards $\muvec$ and the social dynamics satisfy the strong failure event. This follows directly from our definition of regret in \Cref{sec:prelims}.
\end{remark}

The utility gap $\UGap(\muvec)$ is typically bounded away from $0$. Here's a concrete (even if somewhat inefficient) general statement to this extent, which lower-bounds $\UGap(\muvec)$ whenever the cost is sufficiently small.

\begin{lemma}\label[lemma]{lm:util-gap}
Fix $c>0$. Consider a realized problem instance of \EpiBSL such that $\mu_2 \ge \mu_1 + c$. Then
\begin{align*}
\cost \le \tfrac{c'}{2m}
    \quad\text{implies}\quad
    \UGap(\muvec) \ge c'/2,
\end{align*}
where $c'>0$ is determined by $c$ and function $f$.
\end{lemma}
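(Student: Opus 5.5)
We lower-bound $U^*(\muvec)$ by the value of one explicit good-arm policy, and upper-bound $\sup_{\pi\in\badPi}V(\pi\mid\muvec)$ by coupling. Write $g$ for the good arm and $b$ for the bad arm, so $\mu_g\ge\mu_b+c$. Since $\mu_b\le 1-c$ and $\mu_g\ge c$, both lie in $[0,1]$ with a gap of at least $c$. Define
\[ F(p) := \Ex{f(\rvec)},\quad \rvec\sim\Bern(p)^{\otimes m}, \]
the expected score when all $m$ rounds are pulled from an arm of mean $p$.

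\textbf{Step 1 (bad policies).} Any $\pi\in\badPi$ only pulls arm $b$ or skips. Couple a skip with a Bernoulli draw of mean $\mu_b$ replaced by $0$. Since $f$ is coordinatewise non-decreasing, every round's reward is pointwise dominated by a fresh $\Bern(\mu_b)$ sample. These dominating samples are i.i.d., because the policy's decision in round $t$ depends only on past rewards, and we can generate all $m$ draws of arm $b$ in advance. Hence the score of $\pi$ is at most $f$ of an i.i.d. $\Bern(\mu_b)^{\otimes m}$ vector, and costs are nonnegative. This gives
\[ \sup_{\pi\in\badPi}V(\pi\mid\muvec)\le F(\mu_b). \]

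\textbf{Step 2 (good policy).} The policy that pulls $g$ in all $m$ rounds has value at least $F(\mu_g)-m\,\cost$, so $U^*(\muvec)\ge F(\mu_g)-m\,\cost$.

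\textbf{Step 3 (gap in $F$).} We need $F(\mu_g)-F(\mu_b)\ge c'$ for some $c'>0$ depending only on $c$ and $f$. The function $F$ is a polynomial in $p$. It is non-decreasing by the standard monotone coupling: realize the draws as $\ind{u_j\le p}$ with $u_j$ uniform and use monotonicity of $f$. It is also non-constant, since $F(0)=f(\vec0)=0<f(\vec1)=F(1)$. We show it is strictly increasing. Write $F'(p)=\sum_j \Ex{f(\rvec^{j\to1})-f(\rvec^{j\to0})}$, a sum of nonnegative terms. Choose a minimal chain from $\vec0$ to $\vec1$ flipping one coordinate at a time; at least one flip strictly increases $f$. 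At that point some pair $(\rvec^{j\to1},\rvec^{j\to0})$ differs with positive probability for every $p\in(0,1)$. Hence $F'>0$ on $(0,1)$.

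We then define
\[ c' := \min\cbr{F(y)-F(x): 0\le x,\; y\le 1,\; y\ge x+c}. \]
This is a minimum of a continuous function over a compact set. It is positive because, for $y>x$, $F(y)-F(x)=\int_x^y F'>0$. It depends only on $c$ and $f$ (and $m$, which is built into $f$).

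Combining the three steps with $\cost\le c'/(2m)$ gives $\UGap(\muvec)\ge F(\mu_g)-F(\mu_b)-m\,\cost\ge c'-c'/2=c'/2$. The main care point is Step 1's coupling: adaptivity must not break the domination. This is handled by generating the arm-$b$ reward tape in advance, so that a policy pulling $b$ reads the next tape entry while a skip contributes $0$, which is at most that entry.
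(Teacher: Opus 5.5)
Your proof is correct and shares the paper's skeleton. You lower-bound $U^*(\muvec)$ by the all-good-arm policy, losing at most $m\cost$. You upper-bound every bad policy by the cost-free value of the all-bad-arm policy. Then you show a gap $c'$ between the two cost-free values.

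The real difference is how that gap is obtained. The paper couples the two all-arm reward vectors monotonically: each bad-arm $0$ is upgraded to a $1$ with probability $(\mu_2-\mu_1)/(1-\mu_1)\ge c$. Then $\rvec^{(2)}\ge\rvec^{(1)}$ always, and $(\rvec^{(1)},\rvec^{(2)})=(\vec0,\vec1)$ with probability at least $c^{2m}$. This gives the explicit constant $c'=c^{2m}\,(f(\vec1)-f(\vec0))$ in a few lines, which is convenient for tracking $\costup$ and the regret constant in \Cref{cor:high_regret_fail}.

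You instead prove that $F$ is strictly increasing, via the derivative formula $F'(p)=\sum_j \Ex{f(\rvec^{j\to1})-f(\rvec^{j\to0})}$ and a chain argument, and then extract $c'$ by compactness. This yields the sharpest constant for the inequality $F(\mu_g)-F(\mu_b)\ge c'$, but it is non-explicit.

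Your Step 1 is more careful than the paper's one-line remark that skipping cannot increase the score. The domination survives adaptivity because each round gets a fresh $\Bern(\mu_b)$ draw, which is either the actual reward or the value dominating a skip. Keep these draws indexed by round, as in your first description, rather than by pull count. A pull-indexed tape would misalign coordinates when $f$ is not symmetric, so your closing phrase about reading the next tape entry should be read in the per-round sense.

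One trivial loose end: for $c>1$ your minimizing set is empty. In that case no instance satisfies $\mu_2\ge\mu_1+c$, so the lemma is vacuous.
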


\begin{proof}[Proof Sketch]
The high level idea is as follows. We first ignore the effect of the cost, comparing policies only by the expected value of $f(.)$ under their execution, which we denote with $V'(\pi)$.
Let $\pi_i$ denote the policy choosing arm $i$ for all rounds; we show that $V'(\pi_2) - V'(\pi_1) \ge c'$. To do this, let $\rvec_1, \rvec_2$ denote the rewards obtained from executing each of these two policies. Since $\mu_2 \ge \mu_1 + c$, we can couple the rewards in a way that $\rvec_1$ is always coordinate-wise larger than or equal to $\rvec_2$. Since $f(.)$ is coordinate-wise non-decreasing, this means that with probability $1$ we have
$f(\rvec_2) \ge f(\rvec_1)$. Moreover, using $\mu_2 \ge \mu_1 + c$, we show that with some strictly positive probability (depending only on $c, f$) we have $\rvec_2=\vec{1}$ and $\rvec_1=\vec{0}$. Since
$f(\vec{1}) > f(\vec{0})$, taking expectation we conclude that
$V'(\pi_2) - V'(\pi_1) \ge c'$ for some suitable $c'$ depending only on $c, f$.

The above argument compares $\pi_1$ and $\pi_2$ which attain the maximum value for $V'(\pi)$ when restricting, respectively, to policies not choosing arm $2$ and policies not choosing arm $1$.
When we look at $V(\pi)$ however,
both policies may be sub-optimal and both the optimal policy and the policy maximizing $V(\pi)$ in $\badPi$ may in principle use the skip arm.
To address this, we use the bound on cost to show that the gain obtained from skipping is smaller than the gap between $V'(\pi_2) - V'(\pi_1)$.
Formally, $\cost \le c'/2m$  ensures that for all $\pi$ we have $\abs{V(\pi) - V'(\pi)} \le c'/2$. This allows us to lower bound
$U^*$ with $V'(\pi_2) - c'/2$ and
upper bound $\sup_{\pi \in \badPi} V(\pi)$ with
$V'(\pi_1)$. For the latter bound, we are relying on the fact that $\pi_1$ maximizes $V'(\pi)$ in $\badPi$. Taking the difference finishes the proof.
\end{proof}

While the previous lemma provides a general lower bound on $\UGap(\muvec)$ which is fairly opaque, one can get more lucid bounds for specific functions $f$. For example:

\begin{lemma}\label[lemma]{lm:special_case_gap}
If $f$ is the $\min$ function then
    \begin{align*}
        \UGap(\muvec) \ge
        \max\cbr{0,\, |\mu_2^m - \mu_1^m| - m\,\cost}.
    \end{align*}
If $f$ is the $\max$ function then
    \begin{align*}
        \UGap(\muvec) \ge
        \max\cbr{0,\, |(1-\mu_1)^m - (1-\mu_2)^m| - m\,\cost}.
    \end{align*}
\end{lemma}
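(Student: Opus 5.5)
Both bounds come from the same two-sided estimate. Let $g$ be the good arm and $b$ the bad arm, so $\mu_g=\max(\mu_1,\mu_2)$ and $\mu_b=\min(\mu_1,\mu_2)$. Then $|\mu_2^m-\mu_1^m|=\mu_g^m-\mu_b^m$ and $|(1-\mu_1)^m-(1-\mu_2)^m|=(1-\mu_b)^m-(1-\mu_g)^m$. Since $\UGap(\muvec)\ge 0$ by \Cref{def:util-gap}, it suffices to show
\[ \UGap(\muvec)\ge (\text{score gap}) - m\,\cost. \]
We bound $U^*(\muvec)$ from below and $\sup_{\pi\in\badPi}V(\pi\mid\muvec)$ from above, and the cost enters only through the lower bound.

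\emph{Lower bound on $U^*$.} Let $\pi_g$ be the policy that pulls arm $g$ in all $m$ rounds. It pays exactly $m\,\cost$. Its expected score is $\Pr{\text{all rewards are }1}=\mu_g^m$ for $f=\min$, and $\Pr{\text{some reward is }1}=1-(1-\mu_g)^m$ for $f=\max$. Hence $U^*(\muvec)\ge V(\pi_g\mid\muvec)$, which equals $\mu_g^m-m\,\cost$ for $\min$ and $1-(1-\mu_g)^m-m\,\cost$ for $\max$.

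\emph{Upper bound over $\badPi$.} Take any $\pi\in\badPi$; it may be adaptive and may skip rounds. Costs are nonnegative, so $V(\pi\mid\muvec)\le\Ex{f(\rvec)}$.

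For $f=\min$, the score is $1$ only if every round is a pull of arm $b$ that returns reward $1$; a skipped round gives reward $0$. By the chain rule over rounds, each conditional factor is at most $\mu_b$, so this probability is at most $\mu_b^m$.

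For $f=\max$, couple $\pi$ with a tape of $m$ i.i.d.\ $\Bern(\mu_b)$ samples, where the $j$-th pull of arm $b$ reads the $j$-th entry. The score is $1$ only if some entry that was read equals $1$. This event is contained in the event that some tape entry equals $1$, which has probability $1-(1-\mu_b)^m$.

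Subtracting the two estimates gives $\UGap(\muvec)\ge \mu_g^m-\mu_b^m-m\,\cost$ for $\min$ and $\UGap(\muvec)\ge (1-\mu_b)^m-(1-\mu_g)^m-m\,\cost$ for $\max$. Combining each with $\UGap(\muvec)\ge0$ yields both claimed bounds.

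The only delicate step is handling adaptive and skipping policies in the upper bound. The product bound (for $\min$) and the tape coupling (for $\max$) handle this cleanly, so I expect no real obstacle.
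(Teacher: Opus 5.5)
Your proposal is correct and is essentially the paper's argument. You lower-bound $U^*(\muvec)$ by the policy that pulls the good arm in all $m$ rounds, paying exactly $m\,\cost$, and you upper-bound every policy in $\badPi$ by its cost-free expected score, which is at most $\mu_b^m$ for $\min$ and $1-(1-\mu_b)^m$ for $\max$. The one difference is that you explicitly justify this last step for adaptive and skipping policies (the product bound and the tape coupling) and handle either ordering of $\mu_1,\mu_2$; the paper assumes $\mu_2\ge\mu_1$ and simply asserts $\sup_{\pi\in\badPi}V'(\pi)=V'(\pi_1)$.
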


Note that the lower bounds in \Cref{lm:special_case_gap} are quite strong for small $m$ and ``most" $\muvec$ values.

%% file: sec-theorems.tex
\section{Failure Results and Techniques}
\label{sec:theorems}

This section spells out our results on existence and prevalence of exploration failures, and the respective implications on Bayesian regret.
Throughout, we focus on the failure event $\failexpl_{c, N}$ from \Cref{def:fail} and guarantee that it happens with positive probability for some prior-dependent parameters $c,N$. We guarantee this as a common case: for any prior $\prior$ and any aggregation function $f$, as they are defined in \Cref{sec:prelims}.

Our main result allows arbitrary episode length $m\geq 2$.

\begin{theorem}\label{thm:m_gen_symm}
Consider an \EpiBSL instance with an arbitrary episode length $m\geq 2$ and symmetric function $f$. Then the following property holds:

\begin{property}
\item \label{pr:fail} There exist constants $N\subp<\infty$ and $c\subp,q\subp>0$ depending only on prior $\prior$, and constant $\costup > 0$ depending only $\prior$ and $f$,  such that small enough cost $\cost < \costup$ implies
    \begin{align*}
        \Pr{\failexpl_{c\subp, N\subp}}
        \ge q\subp^m>0.
    \end{align*}
\end{property}
\end{theorem}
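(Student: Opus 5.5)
Our plan is to construct an explicit event of probability at least $q\subp^m$ on which $\failexpl_{c\subp,N\subp}$ holds. We first fix $c\subp$ small and a window of "bad-for-arm-2, fine-for-arm-1" configurations: $\mu_2\in(\mu_1+c\subp,1-c\subp)$ with $\mu_1\in(c\subp,\tfrac12)$, say, which has positive prior probability depending only on $\prior$. Then we force arm $2$ to look bad as soon as it is tried, and arm $1$ to look good enough that nobody returns to arm $2$. The constant $q\subp$ lower-bounds the probability of a single realized reward of the required sign: $\mu_i\in(c\subp,1-c\subp)$ gives each Bernoulli outcome probability at least $c\subp$. The factor $q\subp^m$ accounts for controlling all rewards of the first episode, or of a bounded number of episodes.

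\textbf{Step 1 (behavior of a Bayesian agent).} We show a monotonicity/index property of the optimal per-episode policy when $f$ is symmetric and nondecreasing. If, at the start of an episode, arm $1$ has posterior mean strictly larger than arm $2$ and arm $2$'s posterior is sufficiently concentrated below arm $1$'s, then the optimal policy never considers arm $2$. We argue this by an exchange argument, using symmetry of $f$ to compare a policy with a modified one that replaces each pull of arm $2$ by a pull of arm $1$. The value of information from one pull of arm $2$ within an episode of length $m$ is bounded by a quantity that shrinks as arm $2$'s posterior sharpens, while the immediate loss is of order the gap in posterior means. The selection cost $\cost<\costup$ only ensures agents do not skip when arm $1$ has decent posterior mean, so the dynamics keep pulling arm $1$.

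\textbf{Step 2 (driving the dynamics).} Consider the first episode. Either it never considers arm $2$, or it pulls arm $2$ some number $k\le m$ of times. On the event that every arm-$2$ reward in the first few episodes is $0$ and every arm-$1$ reward is $1$, which has probability at least $q\subp^{m}$ after absorbing a bounded number of episodes into the choice of $q\subp$, arm $2$'s posterior mean drops by a prior-determined amount. Arm $1$'s posterior mean meanwhile rises. After at most $N\subp$ such pulls, chosen via the Beta update formulas so that Step 1 applies, arm $2$ is never considered again unless arm $1$'s posterior mean later falls below arm $2$'s frozen value.

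\textbf{Step 3 (arm 1 stays good forever).} Arm $2$'s posterior is now frozen at mean $\rho$ below arm $1$'s current mean. We need that, with probability bounded below given $\mu_1$ in the chosen window, the running posterior mean of arm $1$ never drops to the threshold of Step 1. Since rewards of arm $1$ form an i.i.d. Bernoulli$(\mu_1)$ sequence whose posterior mean converges to $\mu_1$, we pick the window so that $\mu_1$ exceeds this threshold by a margin. A maximal (Hoeffding/ballot-type) inequality, as in \citet{BSL-myopic23}, then gives a uniform positive probability that the empirical mean never crosses. The threshold must also lie below $\mu_1$ while $\mu_2\ge\mu_1+c\subp$. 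This is consistent, because the frozen posterior of arm $2$ is based on all-zero observations and so can be well below the true $\mu_2$.

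\textbf{Main obstacle.} Step 1 is the crux: proving that a symmetric monotone $f$ with arbitrary $m$ yields a clean threshold rule "do not consider arm $2$" with constants uniform in the history. We would first try to bound the value of information by the total variation between arm $2$'s posterior and a point mass, and to compare, episode by episode, against the policy that uses only arm $1$ and skips. Here the bound $\cost<\costup$ is chosen so the skip option does not distort the comparison. If this index argument fails in general, we would fall back on a direct induction over $m$ on the dynamic program for the episode, using symmetry to reduce the state to counts of successes.
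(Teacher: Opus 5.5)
Your overall template matches the paper's: a few zero samples make arm~$2$ look hopeless, an event keeps arm~$1$'s posterior up, and $\mu_2$ sits above $\mu_1$. But Step~1, which you rightly call the crux, is not proved. As stated it is false, so the replacement exchange cannot be rescued by sharper value-of-information bounds.

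Take $m=2$, $f=\max$, arm~$1$ posterior $\mathrm{Beta}(0.6,0.4)$ and arm~$2$ posterior $\mathrm{Beta}(400,600)$. Arm~$2$ is sharply concentrated and stays below arm~$1$ even after a hypothetical success ($401/1001<0.6$). Yet for small $\cost$ the optimal policy is ``arm~$1$; if it returns $0$, then arm~$2$'' (expected score $0.76$), because a failure drops arm~$1$'s posterior mean to $0.3<0.4$. Replacing that arm-$2$ pull by an arm-$1$ pull gives $0.72$ at the same expected cost. So no condition comparing arm~$2$'s posterior with arm~$1$'s current mean makes the optimal policy \emph{never consider} arm~$2$. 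The effect your heuristic misses is bad news about arm~$1$ itself.

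What the paper proves instead is a round-level statement. If $\frac{\alpha_{2,t-1}+m-1}{\alpha_{2,t-1}+\beta_{2,t-1}+m-1}<\gamma_{1,t-1}$ at the realized round $t$ (arm~$2$ is strongly posterior-bad) and the cost is small, then arm~$2$ is not selected in round $t$. The proof is by induction on $m$. After an opening pull of arm~$2$, the rest of the episode is a length-$(m-1)$ instance with score $\ell\mapsto f(\ell+r_1)$ in which arm~$2$ is still strongly posterior-bad, so round two is arm~$1$ or skip. The continuation ``arm~$1$ next'' is ruled out by \emph{swapping}, not replacing, the first two pulls: this preserves the outcome law by symmetry of $f$ and independence of the arms, yet after arm~$1$ returns $1$ the induction hypothesis makes arm~$2$ strictly suboptimal, a contradiction. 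Continuations that abort after one of the two first-round outcomes are ruled out by the small-cost condition. That condition guarantees that whenever further successes can still raise $f$, playing arm~$1$ for all remaining rounds strictly beats aborting, and it must be shown to pass to the sub-instances. This is where $\costup$ enters; it is not there to keep agents pulling arm~$1$, which the failure event does not need. Because the lemma is round-level, you then need arm~$1$'s \emph{realized} posterior above the threshold at every round, including within episodes.

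The second gap is in Step~2. Conditioning on the rewards of ``the first few episodes'' does not bound arm~$2$'s total pulls. Arm~$2$ need not be pulled early at all (you allow this, and forced arm-$1$ successes make it likelier). Later, as arm~$1$'s running mean drifts toward $\mu_1<\tfrac12$, arm~$2$ can be pulled with uncontrolled outcomes. With only $k<N\subp$ recorded zeros, its optimistic value $\frac{\alpha_2+m-1}{\alpha_2+\beta_2+k+m-1}$ can be close to $1$ for large $m$, so your Step~3 threshold cannot lie below $\mu_1$.

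The paper's fix is the reward-tape view:
\begin{itemize}
\item Condition on the first $N\subp=1+\lfloor (m-1)\beta_2/\alpha_2\rfloor$ entries of arm~$2$'s tape being $0$, \emph{whenever} they are revealed. After its $N\subp$-th pull, arm~$2$'s optimistic value is then at most $\gamma_{2,0}$.
\item Condition on arm~$1$'s tape posterior never falling below $\gamma_{1,0}-\Delta/4>\gamma_{2,0}$. This has probability at least $\Delta/4$ by optional stopping of a Doob martingale, without conditioning on $\mu_1$.
\end{itemize}
The threshold is then fixed and below $\gamma_{1,0}$. This matters because arm~$2$ may collect all $N\subp$ pulls before arm~$1$ is pulled at all, so the threshold must already be met at arm~$1$'s prior mean. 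Putting $\mu_2$ near $1$ and $\mu_1$ in a central window gives the gap $c\subp$, and $q\subp^m$ comes from $N\subp$ being linear in $m$. Once the threshold is fixed this way, your conditional maximal-inequality argument could replace the martingale step, but Steps~1 and~2 must be rebuilt as above.
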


\begin{remark}
Here, $N\subp = 1+\floor{(m-1)\,\beta_2/\alpha_2}$.
\end{remark}

Next, we focus on length $m=2$ and remove the symmetry assumption in \Cref{thm:m_gen_symm}.

\begin{theorem}\label{thm:m_2_gen}
Consider an \EpiBSL instance with episode length $m=2$ and not (necessarily) symmetric function $f$. Then \refprop{pr:fail} holds.
\end{theorem}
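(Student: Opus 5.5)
Our plan is to exhibit an explicit ``bad start'' for episode $1$, after which the dynamics lock onto arm $1$ forever with constant probability. With $m=2$ and prior-good arm $1$, the first agent's Bayesian-optimal policy is a depth-two decision tree over $\{1,2,\text{skip}\}$.

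\textbf{Step 1: show that arm $2$ is ever explored only when arm $1$ looks bad.} For small enough $\cost$, skipping the first round is never optimal, and we assume the first pull is a non-skip arm. Choosing arm $2$ first is dominated by choosing arm $1$ first. The reason is that $f$ is monotone and $\Ex{\mu_1}>\Ex{\mu_2}$, so any tree starting with arm $2$ can be mirrored by one starting with arm $1$. The mirror pulls arm $1$ first, then plays the same conditional continuation, and has weakly higher value.

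This mirroring needs care because $f$ need not be symmetric. The score $f(r_1,r_2)$ depends on the order of rewards. However, with $m=2$ the policy space is finite, so one can compare the eight relevant trees directly. We expect the symmetric argument to fail only in degenerate ties, which we handle by perturbing the choice of $\costup$.

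So episode $1$ pulls arm $1$ first. After a reward of $1$, the continuation either repeats arm $1$ or skips. After a reward of $0$, it may switch to arm $2$.

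\textbf{Step 2: define the failure path.} Restrict to the event that $\mu_1$ lies in a small interval near a value where its reward stream keeps arm $1$ posterior-good, and $\mu_2\in(\mu_1+c,1-c)$. Also require every reward from arm $1$ to be $1$ up to some point, and require that arm $2$, if tried, returns $0$.

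The key quantity is the posterior of arm $1$ after rewards of all ones. Its mean $\frac{\alpha_1+k}{\alpha_1+\beta_1+k}$ increases toward $1$, so it dominates $\Ex{\mu_2}$, the untouched prior of arm $2$, by an increasing margin. By Step 1, the agent never considers arm $2$ in the second round after a success.

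Making every reward equal to $1$ forever has probability zero, though. Instead we use a standard random-walk argument, as in \citet{BSL-myopic23} for $m=1$. If the posterior mean of arm $1$ stays above a threshold $\theta>\Ex{\mu_2}$ forever, arm $2$ is never pulled. Conditional on $\mu_1$ being close to $1-c$ and the first few rewards being $1$, a maximal inequality keeps the empirical mean above $\theta$ forever with constant probability.

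Here $c\subp$ is chosen so that $1-c\subp>\theta$ with room. A constant $q\subp$ comes from the probability of the initial successes and the prior mass of the $\muvec$ region. The value $N\subp$ covers the possibility that arm $2$ is never pulled at all.

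\textbf{Step 3: handle the threshold near $\cost$.} We need that after arm $1$ succeeds, the second-round decision never prefers arm $2$ over arm $1$, even for nonsymmetric $f$. In that position the agent compares $f(1,1)\Ex{\mu_1}+f(1,0)(1-\Ex{\mu_1})$ with the same expression using $\Ex{\mu_2}$. Since $f(1,1)\ge f(1,0)$, the comparison is monotone in the posterior mean, so arm $1$ wins whenever it is posterior-good. Skipping only helps the failure.

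After an arm-$1$ failure in the first round, a later agent might try arm $2$. This is excluded on our path because we condition on all arm-$1$ rewards staying high.

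\textbf{Main obstacle.} We expect the hard part to be Step 1 in the nonsymmetric case: proving that the optimal tree never opens with arm $2$ while arm $1$ is posterior-good. In the asymmetric case, $f(0,1)$ may exceed $f(1,0)$, so the first pull is worth less, and exploring arm $2$ first could carry option value. We would first try a direct case analysis over the finitely many depth-two trees, using only monotonicity of $f$ and $\cost<\costup$. If that does not work, we would weaken the goal and show only that arm $2$ is opened first with posterior mean bounded away from arm $1$'s. Even if arm $2$ is tried once, the event that it returns $0$ has constant probability, so $N\subp$ can absorb a bounded number of such pulls.
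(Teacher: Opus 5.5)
There is a genuine gap, and it is in your Step~1. You claim that when $\Ex{\mu_1}>\Ex{\mu_2}$, opening with arm~$2$ is dominated by opening with arm~$1$. This is false even for symmetric $f$ and arbitrarily small $\cost$. In \Cref{sec:counter-example-simpler-thing} ($f=\min$, priors $\mathrm{Beta}(2,9)$ and $\mathrm{Beta}(1,5)$), two successes from arm~$2$ have probability $\frac16\cdot\frac27=\frac1{21}$, versus $\frac2{11}\cdot\frac14=\frac1{22}$ from arm~$1$. So the agent strictly prefers to open with the posterior-bad arm.

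Your mirroring fails because a tree that opens with arm~$2$ uses what the first sample reveals about arm~$2$. After the swap, the continuation pulls arm~$2$ at $\gamma_2$ rather than its updated mean, and loses exactly that information value. The correct bad-start condition is on arm~$2$'s \emph{optimistic} posterior: $\frac{\alpha_{2,t-1}+1}{\alpha_{2,t-1}+\beta_{2,t-1}+1}<\gamma_{1,t-1}$.

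For non-symmetric $f$ even this is insufficient, and the reason is not degenerate ties. In \Cref{sec:counter-example}, arm~$2$ satisfies the optimistic condition, yet the policy ``arm~$2$; then arm~$1$ if $r_1=0$, skip if $r_1=1$'' is \emph{strictly} optimal. What is actually needed is \Cref{lm:no_pull_gen_f_advanced}. It adds two hypotheses:
\begin{itemize}
\item a specific cost condition, $\cost<\gamma_{1,t-1}(f(1,1)-f(1,0))$ or $\cost>f(0,1)-f(1,0)$, which rules out precisely that policy;
\item the non-coincidence $\gamma_1^-\neq\gamma_2$, where $\gamma_1^-=\frac{\alpha_{1,t-1}}{\alpha_{1,t-1}+\beta_{1,t-1}+1}$.
\end{itemize}
Neither hypothesis comes from perturbing $\costup$ to break ties.

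The second gap follows from the first. Step~2 asserts that arm~$2$ is never pulled while arm~$1$'s posterior mean stays above some $\theta>\Ex{\mu_2}$. With the correct lemma, however, $\theta$ must exceed arm~$2$'s optimistic prior mean, which may be larger than $\Ex{\mu_1}$. Then nothing controls the early episodes, where agents may legitimately open with arm~$2$ before arm~$1$ has collected its successes.

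The paper does not try to prevent pulls of arm~$2$ at all. It conditions on two events:
\begin{itemize}
\item the first $N_0+N_1$ tape entries of arm~$2$ are $0$ (still constant probability with $\mu_2\in(1-\nottau,1-\nottau/2)$);
\item the \stableEvent event that arm~$1$'s posterior mean never drops below $\gamma_{1,0}-\Delta/4$ (probability at least $\Delta/4$ by optional stopping).
\end{itemize}
Then consider the $(n+1)$-st pull of arm~$2$, for a suitable $n\in[N_0,N_0+N_1]$. At that point arm~$2$'s optimistic posterior is at most $\gamma_{2,0}<\gamma_{1,0}-\Delta/4\le\gamma_{1,t-1}$, contradicting the no-pull lemma. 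The value $n$ is chosen by a counting argument so that the tie $\gamma_1^-=\gamma_2$ cannot occur there.

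Your second-round comparison in Step~3 and your use of a stability event for arm~$1$ do match the paper. Your fallback of letting $N\subp$ absorb zero-reward pulls is also the right instinct. It only closes, though, once you have a no-pull lemma whose hypothesis becomes permanently true after those zeros, and ``posterior means bounded apart'' is not such a hypothesis.
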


\begin{remark}
To make the parameter choice more explicit,
\begin{align*}
N\subp
    &= 2+ \floor{\frac{\beta_2}{\alpha_2}}
        +\floor{ \frac{4}{3}\cdot \frac{\Ex{\mu_2}}{\Ex{\mu_1-\mu_2}}}; \\
\costup
    &= \frac{\Ex{\mu_1} - \Ex{\mu_1-\mu_2}/4}{f(1, 1) - f(1, 0)}
\end{align*}
if $f(1, 1) > f(1, 0)$. We can take $\costup = 1$
otherwise.
\end{remark}

To motivate non-symmetric aggregation functions $f$, one example is $f(r_1, r_2) = r_2$. Here, the first round is a ``trial run", and the second round determines the score. Another example is linear functions:  $f(r_1, r_2) = a\cdot r_1 + b\cdot r_2$ with $a \ne b$, so the two rounds do not matter equally.

Leveraging the machinery developed in the previous section, we obtain linear Bayesian regret for problem instances covered by \Cref{thm:m_2_gen,thm:m_gen_symm}.
\begin{corollary}\label[corollary]{cor:high_regret_fail}
Assume $m=2$ or that $f$ is symmetric. Then for some constants $c_0>0$ and $N_0,\costup<\infty$ determined by the problem parameters $(f,\prior,m)$, small enough cost  $\cost < \costup$ implies linear Bayesian regret:
\[ \BReg(T)  \geq c_0\cdot (T-N_0) \qquad\text{for any episode $T$}.\]
 \end{corollary}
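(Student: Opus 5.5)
We chain the results stated earlier: the failure event of \Cref{thm:m_gen_symm} (for symmetric $f$) or \Cref{thm:m_2_gen} (for $m=2$) is upgraded to the strong failure by \Cref{thm:weak-to-strong}. \Cref{thm:strong-failure-regret} then converts the strong failure into per-realization linear regret, and \Cref{lm:util-gap} makes the utility gap a positive constant. Taking expectations gives the Bayesian bound.

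\textbf{Step 1 (constant-probability failure).} Under either hypothesis, \refprop{pr:fail} gives constants $c\subp,q\subp>0$, $N\subp<\infty$ and a threshold $\costup^{(1)}>0$ such that $\cost<\costup^{(1)}$ implies $\delta:=\Pr{\failexpl_{c\subp,N\subp}}\ge q\subp^m>0$. Since $\failexpl$ can only become easier as $c$ shrinks, we may shrink $c\subp$ so that $c:=c\subp<\nicefrac12$, as required by \Cref{thm:weak-to-strong}. The bound $\delta \ge q\subp^m$ holds uniformly for every small enough cost. This uniformity matters, because the $N'$ produced by \Cref{thm:weak-to-strong} depends on $\delta$. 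We therefore invoke that lemma with the lower bound $q\subp^m$ in place of $\delta$. Its proof only uses a lower bound on $\delta$, namely Azuma with failure probability $\delta/2$. This yields $N_0:=N'$ determined by $(c,N\subp,q\subp^m,m)$, hence by $(\prior,m)$, with
\[ \Pr{\StrongFail_{c,N_0}} \ge \Pr{\StrongFail_{c,N_0}\mid \failexpl_{c,N\subp}}\cdot \delta \ge \tfrac12 q\subp^m . \]

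\textbf{Step 2 (regret on the failure event).} Set $\costup:=\min\{\costup^{(1)},\, c'/(2m)\}$, where $c'>0$ is given by \Cref{lm:util-gap} for gap $c$ and the function $f$. Now fix any realization in $\StrongFail_{c,N_0}$. This event includes $\mu_2\ge\mu_1+c$, so \Cref{lm:util-gap} gives $\UGap(\muvec)\ge c'/2$. \Cref{thm:strong-failure-regret} then gives $\Reg(T\mid\muvec)\ge (T-N_0)\,c'/2$ for every $T$.

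\textbf{Step 3 (expectation).} Pseudoregret is nonnegative for every realization: $U^*(\muvec)$ is a supremum over all per-episode policies, so each term $U^*(\muvec)-V(\pi_e\mid\muvec)\ge0$. Strictly, \Cref{thm:strong-failure-regret} is stated in terms of a realization of the dynamics, while $\Reg(T\mid\muvec)$ as written conditions only on $\muvec$. We handle this by reading $\BReg(T)$ as the expectation over both $\muvec$ and the history of the sum of per-episode gaps, which is the same quantity by the tower rule. Then
\[ \BReg(T)\ \ge\ \Pr{\StrongFail_{c,N_0}}\cdot (T-N_0)\tfrac{c'}{2}\ \ge\ \tfrac14 q\subp^m c'\,(T-N_0), \]
which is the claim with $c_0:=q\subp^m c'/4$. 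When $T<N_0$ the right-hand side is negative and the bound is trivial.

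The main obstacle is Step 1's uniformity issue. We must make sure $N_0$ and $c_0$ do not depend on $\cost$ through $\delta$; the lower bound $q\subp^m$ from \refprop{pr:fail} is what resolves it. A lesser point is justifying the swap from $\delta$ to its lower bound inside \Cref{thm:weak-to-strong}.
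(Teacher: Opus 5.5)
Your proposal is correct and takes the same route as the paper, which proves the corollary by chaining \Cref{thm:m_2_gen,thm:m_gen_symm}, \Cref{thm:weak-to-strong}, \Cref{thm:strong-failure-regret}, and \Cref{lm:util-gap}, with $\costup$ implicitly the minimum of the two cost thresholds. You also compute $N'$ in \Cref{thm:weak-to-strong} from the cost-independent lower bound $q\subp^m$ rather than from $\delta$, so that $N_0$ and $c_0$ do not depend on $\cost$; this is a valid detail that the paper's sketch leaves implicit.
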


\begin{proof}[Proof Sketch]
We start from the failure events guaranteed by \Cref{thm:m_2_gen,thm:m_gen_symm} and consecutively apply \Cref{thm:weak-to-strong}, \Cref{thm:strong-failure-regret}, and \Cref{lm:util-gap}.
\end{proof}

Finally, we achieve \refprop{pr:fail} without an upper bound on cost, for some special cases.

\begin{theorem}\label{thm:m_2_symm}
Consider an \EpiBSL instance. Assume either that (a) $m=2$ and function $f$ is symmetric, or that (b) function $f$ is specifically the $\max$ or the $\min$. Then \refprop{pr:fail} holds with $\costup=1$.
\end{theorem}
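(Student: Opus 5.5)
We establish \refprop{pr:fail} with $\costup=1$ by the same kind of argument that underlies \Cref{thm:m_gen_symm}: find a positive-probability event on which arm $2$ is pulled a bounded number of times and then the posterior on arm $2$ is so pessimistic that no later agent ever considers it. The only role of the cost bound in \Cref{thm:m_gen_symm} is to rule out agents who skip every round, thereby stalling learning about arm $1$ while arm $2$ still looks attractive. In cases (a) and (b) we remove this bound by replacing the generic monotonicity argument with a direct analysis of the Bayes-optimal episode policy. That analysis has two parts. First, we show that an agent who considers arm $2$ at all must find arm $2$ sufficiently attractive relative to arm $1$. Second, we show that whether agents explore arm $1$ or skip does not affect the failure event, because skipping freezes the history.

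The key observation for the second part is the following. If at some point the Bayes-optimal policy skips every round, then the history never changes, so every subsequent agent also skips forever. Arm $2$ is then pulled only finitely often, which is exactly what $\failexpl$ needs. We therefore do not need agents to keep pulling arm $1$; we only need them never to pull arm $2$ too often.

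The plan proceeds in four steps.

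\textbf{Step 1 (arm $1$ stays good).} We fix a constant-probability event on $\muvec$: $\mu_1\in(c\subp,1-c\subp)$, $\mu_2\ge\mu_1+c\subp$, and $\mu_2<1-c\subp$. We also condition on an arm-$1$ tape event under which the running posterior mean of arm $1$ stays above some threshold $\theta_1$ forever. Such a tape event has positive probability, by a standard maximal inequality for the Beta--Bernoulli martingale, using that $\mu_1$ is bounded away from $0$.

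\textbf{Step 2 (arm $2$ gets killed early).} We condition on the first $N\subp$ rewards from arm $2$ being all zeros, which has probability at least $c\subp^{N\subp}$. After $k$ zeros, the posterior mean of arm $2$ is $\alpha_2/(\alpha_2+\beta_2+k)$, and it drops below any fixed threshold $\theta_2$ once $k$ is a constant depending only on the prior. Combined, Steps 1 and 2 give probability $q\subp^m$, once we check that each episode consumes at most $m$ arm-$2$ samples.

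\textbf{Step 3 (the posterior-bad arm is never considered).} We must show that when arm $2$ has posterior mean at most $\theta_2$ and arm $1$ has posterior mean at least $\theta_1$, the optimal policy never considers arm $2$, for every $\cost\in(0,1]$.

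For $f=\max$, pulling arm $2$ only pays through the chance of obtaining a $1$. A one-step exchange argument then shows that replacing any pull of arm $2$ by a pull of arm $1$ weakly increases utility, because arm $1$ has a higher one-step predictive success probability and the continuation value after a success is the same (the score is already saturated). After a failure, arm $1$'s posterior stays high while arm $2$'s drops further.

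For $f=\min$, the score is nonzero only if all $m$ pulls succeed. A policy mixing arms is dominated by committing to one arm, and committing to arm $2$ is dominated since the predictive probability of $m$ successes, $\prod_j (\alpha+j)/(\alpha+\beta+j)$, is monotone in the posterior parameters. The policy may also skip; skipping under $\min$ forfeits the score, so the only competitor is skipping all rounds, which is harmless by the freezing observation.

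For $m=2$ with symmetric $f$, the score is determined by $f(1,0)$ and $f(1,1)$. We enumerate the finitely many deterministic policies and compare their expected utilities directly. Each comparison of a policy using arm $2$ with the corresponding policy using arm $1$ reduces to an inequality between predictive probabilities. That inequality holds once arm $2$'s posterior mean is small relative to arm $1$'s, uniformly in $\cost$, because the cost terms are identical for the swapped policies.

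\textbf{Step 4 (closing the argument).} We verify the condition on the path before arm $2$ dies: in each episode arm $2$ is considered only while its posterior is still relatively high, and those pulls are covered by the all-zeros event of Step 2.

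The main obstacle is Step 3 in the $\max$ and $m=2$ cases, because it requires a cost-independent dominance argument for adaptive policies. For this I would first try a coupling or exchange argument, round by round from the end of the episode, on the reward tapes of the two arms.
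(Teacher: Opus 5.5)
You have the paper's scaffolding: an arm-$1$ stability event, an all-zero prefix of arm $2$'s tape, a no-pull lemma, and independence across arms. But Step~3 is the whole content of this theorem, since it is exactly what removes the cost bound, and you leave it open. The mechanisms you sketch for it would not work.

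\textbf{$m=2$.} The cost terms of a policy and its arm-swapped counterpart are \emph{not} identical once the second action depends on $r_1$. ``Arm $2$, then arm $1$ iff $r_1=1$'' has expected cost $\cost(1+\gamma_2)$, while ``arm $1$, then arm $1$ iff $r_1=1$'' has expected cost $\cost(1+\gamma_1)$. The paper closes this case (Case IV of \Cref{lm:no_pull}) only by noting that $-\cost+\gamma_1 f(1,1)+(1-\gamma_1)f(1,0)$ must be positive, since otherwise skipping both rounds wins. Conversely, for ``arm $2$, then arm $1$ iff $r_1=0$'', the swapped policy has the same score distribution by symmetry and wins \emph{only} through the cost saving $\cost(\gamma_1-\gamma_2)$. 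So positive cost does not cancel: it is what makes the dominance strict (also when $f(1,1)=f(1,0)$). Strictness is necessary, because weak dominance does not stop an optimal agent from pulling arm $2$.

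\textbf{$\max$.} Replacing a pull of arm $2$ by a pull of arm $1$ does not preserve the continuation after a failure. Arm $1$'s mean drops to $\gamma_1^-=\alpha_1/(\alpha_1+\beta_1+1)$, which can lie below $\gamma_2$ when $\alpha_1+\beta_1$ is small. Then ``arm $2$, then arm $1$ after a failure'' beats ``arm $1$, then arm $1$ after a failure'' by $f(1)(1-\gamma_1)(\gamma_2-\gamma_1^-)-\cost(\gamma_1-\gamma_2)$, which is positive for small $\cost$. What works (\Cref{lm:no_pull_sym_large_m_min_max}) is the order swap ``arm $2$, arm $1$, \dots'' $\to$ ``arm $1$, arm $2$, \dots''. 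It has the same outcome law, because the posterior after two failures is order-independent, and its expected cost is lower by $\cost(\gamma_1-\gamma_2)$. This is combined with induction on $m$, applied round by round with the \emph{current} posteriors; your stability event then keeps arm $1$'s current mean above threshold even after within-episode failures. So the right statement is per-round, not ``the episode policy never considers arm $2$'', which fails in the branch above.

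\textbf{$\min$.} Committing to arm $1$ raises the probability of reaching later rounds, and hence the expected cost. Monotonicity of $\prod_j(\alpha+j)/(\alpha+\beta+j)$ is therefore not enough. \Cref{lm:special_case_sym_and} instead uses the recursion $u(x_i,\dots,x_m)=-\cost+x_i\,u(x_{i+1},\dots,x_m)$ together with positivity of all continuation values (otherwise skip-all wins).

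The hypothesis of Step~3 is also the wrong invariant. A bound on arm $2$'s posterior mean alone is not robust to within-episode successes: with small $\alpha_2+\beta_2$, one success can lift arm $2$ above arm $1$ (cf.\ \Cref{sec:counter-example-simpler-thing}, where a posterior-bad arm is pulled first). For $\min$, a mean-only threshold would have to be at most about $\theta_1^m$. That forces exponentially many zeros and a doubly exponential probability bound instead of $q\subp^m$. The paper instead requires arm $2$ to stay posterior-bad after $m-1$ hypothetical successes, $(\alpha_{2,t-1}+m-1)/(\alpha_{2,t-1}+\beta_{2,t-1}+m-1)<\gamma_{1,t-1}$. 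On the all-zeros path this needs only $N\subp=1+\lfloor(m-1)\beta_2/\alpha_2\rfloor$ zeros, and that, not ``at most $m$ arm-$2$ samples per episode'', is where $q\subp^m$ comes from.

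Smaller points:
\begin{itemize}
\item Conditional on $\mu_1$ the posterior mean is not a martingale, so your maximal-inequality step needs rework. The paper applies optional stopping to the Doob martingale under the prior to get $\Pr{\event_1}\ge\Delta/4$ (\Cref{lm:event1-prob}), then intersects with $\mu_1\in(\tau,1-\tau)$ by a union bound.
\item Your event $\mu_2\ge\mu_1+c\subp$ couples the two arms and breaks the product of probabilities. The paper avoids this with the separate ranges $\mu_1<1-\tau$ and $\mu_2>1-\tau/2$.
\item The cost bound in \Cref{thm:m_gen_symm} is not about agents stalling learning by skipping. It is used inside the inductive no-pull lemma to rule out within-episode aborts.
\end{itemize}
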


\begin{remark}
Here, $N\subp = 1+ \floor{\beta_2/\alpha_2}$.
\end{remark}

{We do not provide a new regret corollary since our current techniques would still require an upper bound on the cost, akin to \Cref{cor:high_regret_fail}.}

\OMIT{ %
Again, we obtain linear Bayesian regret as a corollary. We state this corollary in terms utility-gap $\UGap(\muvec)$ from \Cref{def:util-gap}, to avoid re-imposing an upper bound on the costs. Our regret bound depends on the conditional expectation of $\UGap(\muvec)$ (and are non-vacuous whenever this conditional expectation is positive). Specifically, given some $c>0$, let $\UGap_c$ be the \asedit{infimum}
of $\UGap(\muvec)$ over all $\mu_1,\mu_2\in \rbr{c, 1-c}$ such that $\mu_2\geq \mu_1+c$. The corollary is as follows:

\begin{corollary}\label[corollary]{corr:follow_thm_m_2_symm}
Consider an \EpiBSL instance from \Cref{thm:m_2_symm}.
 Then for some constants $c\subp,q\subp>0$ and $N\subp<\infty$ determined by prior $\prior$ it holds that
\[ \BReg(T)  \geq q^m\subp\cdot \UGap_{c\subp}\cdot (T-N\subp)/2
\quad\text{for any episode $T$}.\]
\end{corollary}

\begin{proof}[Proof Sketch]
We start from the failure events guaranteed by \Cref{thm:m_2_symm} and consecutively apply \Cref{thm:weak-to-strong} and \Cref{thm:strong-failure-regret}. Recall that the integral $\UGap_c$ can be quite large for specific aggregation functions $f$, \eg  see \Cref{lm:special_case_gap}.
\end{proof}

} %

%% file: sec-theorems-sketches.tex
\subsection{Proof Sketches for the Theorems}
\label{sec:proof-outline}

We sketch out the proofs of \Cref{thm:m_2_gen,thm:m_gen_symm,thm:m_2_symm} in what follows. The logic proceeds in the opposite order: we start with \Cref{thm:m_2_symm}(a) as the base case, then refine the argument to obtain the other results.

Analyses from prior work do not directly carry over. Indeed, \citet{BSL-myopic23,StructuredGreedy-neurips25}
 focus on single-round episodes, and analyses therein focus on comparing \emph{current} posterior-mean rewards (or current sample-average rewards, for non-Bayesian models). In \EpiBSL, however, an agent may rationally select a posterior-bad arm,
 since a success early in the episode can substantially improve its posterior and make it preferable in later rounds. This phenomenon is illustrated via a simple example in \Cref{sec:counter-example-simpler-thing}. This within-episode effect is present even for length $m=2$ and simple symmetric utilities.

Instead, we reason about \emph{optimistic} posteriors of a given arm: ones that would be induced by hypothetical \emph{positive} future observations of this arm in the same episode.

Our proofs follow a common template. We are interested in the failure event when arm $2$ is the good arm by some margin \emph{and} it is not chosen except in a few episodes. To this end, we identify events under which arm~1’s posterior remains sufficiently strong, while arm~2 accumulates enough unfavorable evidence so that --- even under optimistic posteriors --- it is never selected by any future agent. The main technical difficulty is controlling the trajectories of optimistic posteriors and understanding how these posteriors interact with episode length, selection costs, and the structure of the utility function.

\xhdr{\Cref{thm:m_2_symm}(a)} targets the simplest episodic setting, with episode length $m=2$ and symmetric function $f$, which already exhibits the challenges outlined above.

The first key ingredient is the \emph{bad-start condition} on the arms' posteriors: a condition that forces an agent to not use the good arm in the first round of the episode.
\footnote{We focus on the first round of the episode, as this suffices for our argument. Our condition alone does not preclude arm~2 from being played in the second round.}
Essentially, this condition needs to be robust to optimistic posteriors. A sufficient bad-start condition is that arm $2$ is \emph{strongly posterior-bad}: posterior-bad currently and (hypothetically) even after $m-1$ more positive samples from this arm. This is established in \Cref{lm:no_pull}.

Second, we isolate the behavior of arm~1 via an event under which its posterior mean never drops too far below $\Ex{\mu_1}$, call this event \emph{\stableEvent}. \Cref{lm:event1-mart} and \Cref{lm:event1-prob} show that this event occurs with constant probability over the prior, using a stopping-time and martingale argument.%
\footnote{This particular argument is similar to the one used in \citet{BSL-myopic23} for the non-episodic case ($m=1$).}
Conditioned on this event, we can guarantee that if arm $2$ starts out as strongly posterior-bad, it remains so while  it is not played (and, in particular, this arm is not chosen in the second round of the episode).

Unlike single-round analyses in prior work, we cannot argue that arm~2 is never chosen by the agents. Instead, we focus on the event that the first few samples of this arm are all zeroes.%
\footnote{It would have sufficed to consider a short prefix with abnormally low average reward.}
Then \Cref{lm:bounded-pulls} shows that the optimistic posterior mean reward of arm 2 is driven permanently below arm~1’s posterior (under \stableEvent), after which \Cref{lm:no_pull} prevents any further pulls. Combining these events with the realization of the mean rewards such that $\mu_2$ is significantly larger than $\mu_1$
yields the failure event with constant probability.

\begin{remark}
We need to follow the same proof outline even if function $f$ is simple --- max or sum --- but proving \Cref{lm:no_pull} simplifies somewhat.
\end{remark}

\xhdr{\Cref{thm:m_gen_symm} (longer episodes, symmetric utilities).}
We show that ``arm 2 is strongly posterior-bad" is a bad-start condition. However, we need to use induction on episode length $m$ to handle the additional rounds.
(As before, \stableEvent ensures that the property of arm 2 being strongly posterior-bad
is preserved while arm~2 is not played.)

The inductive argument proceeds as follows. Suppose, for the sake of contradiction, that arm $2$ is strongly posterior-bad at the start of an episode, and yet an agent starts by choosing this arm. Even if arm~2 yields a reward of~$1$, the remaining interaction corresponds to a shorter episode of length $m-1$, and arm 2 is strongly posterior-bad for length $m-1$. Therefore, by the inductive hypothesis,
arm~2 will not be selected in the \enquote{first} round of the remaining episode (\ie the second round of the original episode). Thus, for the second round, the policy must either play arm~1 or skip.

If agent's policy always plays arm~1 in the second round, symmetry of function $f$ implies that this policy is equivalent in expectation to a policy that swaps the first two rounds (playing arm~1 in round~1 and arm~2 in round~2).
Such a policy cannot be posterior-optimal, however. This is because --- even if arm~1 gives reward~1 in the first round --- in the remaining episode, arm~2 is again strongly posterior-bad for length $m-1$, so by the inductive hypothesis arm 2 cannot be chosen in round 2.

For $m>2$, a policy may in principle admit a richer set of behaviors, including decisions that depend on observed rewards and potentially involve skipping. To keep the analysis within a tractable inductive framework, we impose an additional assumption that the selection cost is sufficiently small. This assumption is used to restrict the class of policies that need to be considered --- ensuring that never skipping is strictly preferred in expectation if at all possible (\ie if receiving positive rewards for the rest of the episode strictly maximizes $f$).
This allows the inductive argument to go through. We refer to \Cref{lm:no_pull_sym_large_m} for more details.

\xhdr{\Cref{thm:m_2_gen} (general utilities, $m=2$).}
\Cref{thm:m_2_gen} highlights a subtle issue that does not arise for symmetric utilities: ``arm 2 being strongly posterior-bad" no longer suffices as a bad-start condition.

The issue is as follows. Call arm $i$ \emph{hopeful} in round $t$ if it can become posterior-good given one sample of some arm (\ie either with one positive sample of the same arm, or with one negative sample of the other non-skip arm). While it may seem that the bad-start condition causes arm $2$ to be not hopeful, this is not the case: it is possible that sampling from arm $1$ and observing reward $0$ causes the posterior mean of arm $1$ to drop below that of arm $2$. This issue is important for functions where, intuitively, the second round matters more and, crucially, it only matters more when the first reward is $0$. In these cases, choosing arm $1$ in the first round can be \enquote{risky} since, if get a reward of $0$, neither of the two arms have a high posterior for the second round. A concrete example of this is provided in \Cref{sec:counter-example} where we show that the agent should first choose arm $2$ and then choose arm $1$ if the reward was $0$ and choose skip otherwise.

We deal with this issue by assuming an upper bound on the cost similar to \Cref{thm:m_gen_symm}, ensuring that the skip arm is not preferred unless a reward of $1$ in the second round does not increase $f$ (in which case it is trivially preferred). The assumption allows us to characterize functions for which skip is chosen in the second round, showing that examples such as the one in \Cref{sec:counter-example} are not possible.

%% file: sec-discussion.tex
\section{Discussion and Extensions}
\label{sec:discussion}

\xhdr[0mm]{Two arms and beyond.}
Our paper is motivated by Question (Q) in the Introduction and resolves it in the negative. While even one counterexample rules out a general positive result, what we proved is much stronger: learning failure is ubiquitous for two arms. Therefore, ``learning success" for $K>2$ arms, if any, must be confined to some exotic regimes which completely exclude our 2-armed setting.

We emphasize that idealized models, and particularly ones limited to two arms, are very common in closely related work, \eg the work on ``strategic experimentation" and ``incentivized exploration" referenced in \Cref{sec:related}, as well as the literature on ``sequential social learning" \citep[surveyed in][]{Golub-survey16}.

That said, our analysis easily extends to a broad family of problem instances with $K>2$ arms, where two arms look much better than all others according to the prior. (The sufficient condition for this, \Cref{eq:extra_arm_cond}, depends only on the prior and $m$.)
Our analysis deals with some events for these two arms that cause a failure, and under these events all other arms are never chosen. We spell out the results and the required modifications to the analysis in \Cref{sec:k-arms}.

\xhdr{Costs/skips and beyond.}
Our model with costs and skips reflects the motivating applications, ensuring that the agents would not keep exploring under little/no incentive to do so.

That said, our analysis sheds light on the model with $\cost=0$. For concreteness, focus on $m=2$ and symmetric aggregation function $f$. Suppose skips are allowed and ties are broken in favor of skips. Then our analysis carries over with (only) minor modification, see \Cref{sec:zero-cost}.

Without skips, selection costs don't matter (because each agent incurs the same cost per round no matter what), and our failure analysis seems to crumble. In fact, we have \enquote{learning success} for $f=\max$ (resp., $f=\min$) under any random tie-breaking. Indeed, if round $1$ of an episode returns reward 1 (resp., reward 0), then the subsequent round(s) does not affect agent's utility. So, we have a tie, and each arm gets chosen with some probability.

\xhdr{Episode length.}
Our lower bound on failure probability  decays exponentially in episode length $m$ (see \Cref{thm:m_gen_symm}). The small-$m$ regime, in which this exponential decay is relatively insignificant, captures realistic scenarios: \eg try a few prompts or products. How the \emph{actual} failure probability scales with $m$ is an open question.

In general, quantitatively strong lower bounds on failure probability of the ``greedy" bandit algorithm are hard to come by: essentially, they are only known for two-armed bandits (\citet{BSL-myopic23}, \ie the special case of our model with $m=1$). But even a tiny-but-positive failure probability is important as it implies linear regret, the basic notion of failure in bandit learning.

%% file: impact-statement.tex
\section*{Impact Statement}

Our goal is to advance theoretical understanding of machine learning. We do not foresee any immediate societal consequences of our work. Longer-term consequences, if any, will be cumulative over a large body of machine learning theory. While this point applies generically across most/all of learning theory, one potential implication specific to this paper is to emphasize the need for exploration in bandit-like social-learning dynamics, even when individual agents may explore for themselves.

%% file: Appendix.tex
\section{Additional Preliminaries}
\label{sec:app-prelims}

\paragraph{Notation.}
Throughout the proofs, we use $\gamma_{i, t}$ to denote $\frac{\alpha_{i, t}}{\alpha_{i, t} + \beta_{i, t}}$.
We also frequently use $\alpha_{i, 0}, \beta_{i, 0}$ to denote the initial prior from which $\mu_i$ is drawn.
Note that $\gamma_{i, 0} = \Ex{\mu_i}$.
We use $\Delta := \gamma_{1, 0} - \gamma_{2, 0}$ to denote the gap between the prior mean of the arms $1$ and $2$.
We additionally use $\Theta\subp(.)$ to hide constants under the prior; e.g.,
$\Theta\subp(1)$ denotes some constant that depends on the prior $\prior$.

Given a policy $\pi$ and an episode $e$, we use
$\postU_\pi =
\Ex{U(\pi_e) \mid \wtilde{H}_e}$
to denote its \emph{Posterior Utility} which we define as the expected utility of the policy conditioned on the history at the start of the episode. As previously mentioned, we assume that the agent chooses some policy maximizing $\postU_\pi$. We refer to any such policy as a \emph{posterior optimal} policy.

\xhdr{Reward Tapes.}
For analytical clarity we represent rewards via \emph{reward tapes}.
At the start of the horizon an infinite sequence
$(\tape_{i,1},\tape_{i,2},\ldots)$ is drawn for every arm $i$.
Conditional on $\mu_i$, the entries are i.i.d.\ Bernoulli$(\mu_i)$ and are
independent across arms.
Whenever arm $i$ is pulled for the $\ell$-th time, the agent observes the next
unrevealed entry $\tape_{i,\ell}$.
In this view, all randomness is determined upfront; the only uncertainty that remains is
which tape entries are revealed by the sequence of pulls. This perspective is commonly
used in the bandit literature.

We use $n_{a, t}$ to denote the number of times that arm $a$ has been pulled from rounds $1$ to $t$ (inclusive).
We define
$\alpha_{a}^{(n)}$ to be the posterior value of $\alpha_{a}$ if we were to see $\tape_{a, 1}, \dots, \tape_{a, n}$;
i.e.,
$\alpha_{a}^{(n)} = \alpha_{a, 0} + \sum_{\ell \le n}\tape_{a, \ell}$. Note that
$\alpha_{a, t} = \alpha_{a}^{(n_{a, t})}$. The values $\beta_{a}^{(n)}$ and $\gamma_{a}^{(n)}$ are defined similarly.

The environment first draws $\mu_i$ and then draws $(\tape_{i, 1}, \dots, \tape_{i, 2}, \dots)$ for each $i$ from $\mathrm{Bernoulli}(\mu_i)$.
Since the algorithm never actually sees $\mu_i$ however, for the purposes of our analysis it is easier to view it as a hidden variable and sample $\tape_{i, \ell}$ directly.
Concretely, we assume that
$\tape_{i, 1}$ is drawn
from the distribution $\mathrm{Bernoulli}(\gamma_{i, 0})$ and after sampling $\tape_{i, 1}, \dots, \tape_{i, \ell-1}$, the value
$\tape_{i, \ell}$
is drawn from the distribution
$\mathrm{Bernoulli}(\gamma_{i}^{(\ell - 1)})$.
Note that this perspective is correct because, conditioned on $\tape_{i, 1}, \dots \tape_{i, \ell-1}$, the random variable
$\mu_i$ has the distribution
$\mathrm{Beta}(\alpha_{i}^{(\ell - 1)}, \beta_{i}^{(\ell - 1)})$ which
means
\begin{align*}
    \Pr{\tape_{i, \ell} = 1 \mid \tape_{i, 1}, \dots \tape_{i, \ell - 1}}
    &=
    \Exu{\mu_i \sim \mathrm{\mathrm{Beta}(\alpha_{i}^{(\ell - 1)}, \beta_{i}^{(\ell - 1)})}}{
    \Pru{
    \tape_{i, \ell}
    \sim \mathrm{Bernoulli}(\mu_i)
    }{\tape_{i, \ell} =1
    }
    }
    \\&=
    \Exu{\mu_i \sim \mathrm{\mathrm{Beta}(\alpha_{i}^{(\ell - 1)}, \beta_{i}^{(\ell - 1)})}}{
    \mu_i
    }
    =
    \gamma_{i}^{(\ell - 1)}
    .
\end{align*}

\xhdr{Concentration.}
We use the following standard concentration inequality which follows from Azuma–Hoeffding.
\begin{lemma}\label[lemma]{lm:alex-process}
Fix $c,\delta>0$ and $n\in \mathbb{N}$. Consider a sequence of random variables
     $\rbr{\nu_1, X_1; \,\ldots\, ;\nu_n,X_n}$
realized in this order, such that each $X_t$, $t\in [n]$, is an independent Bernoulli draw with mean $\nu_t \in [c,1]$. If $n_0 \ge \frac{8}{c^2}\log(1/\delta)$, then the sum $S_n = \tsum{t\in[n]} X_i$
satisfies
    $ \Pr{ S_n \geq cn/2 } \geq 1-\delta.$
\end{lemma}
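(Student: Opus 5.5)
Conditioned on the drift sequence, the sum $S_n$ will be compared to the martingale $M_k=\tsum{t\le k}(X_t-\nu_t)$. Because each $X_t$ is Bernoulli with mean $\nu_t$ given everything realized before it (including $\nu_t$ itself and $X_1,\dots,X_{t-1}$), $M_k$ is a martingale with respect to the natural filtration. Its increments lie in $[-1,1]$; more precisely, each increment lies in an interval of length $1$, namely $[-\nu_t,1-\nu_t]$. The plan is to apply Azuma--Hoeffding to $M_n$ and then use the pointwise lower bound $\nu_t\ge c$ to convert this into a bound on $S_n$. (I read the hypothesis ``$n_0\ge\ldots$'' as a condition on $n$, i.e.\ $n\ge \frac{8}{c^2}\log(1/\delta)$.)

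The first step is the deterministic lower bound $\tsum{t\le n}\nu_t\ge cn$. It implies
\[ \cbr{S_n<cn/2}\subseteq\cbr{M_n< -cn/2}. \]
The second step is the one-sided Azuma--Hoeffding inequality for increments in $[-1,1]$:
\[ \Pr{M_n\le -\lambda}\le \exp\rbr{-\lambda^2/(2n)}. \]
Setting $\lambda=cn/2$ gives $\Pr{S_n<cn/2}\le\exp(-c^2n/8)$. The third step is to note that $n\ge \frac{8}{c^2}\log(1/\delta)$ makes this at most $\delta$, which is the claimed bound.

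The only point needing care is the measurability setup. Since $\nu_t$ may be random and depend on the past, the conditional mean of $X_t$ given $\mathcal F_{t-1}\vee\sigma(\nu_t)$ is $\nu_t$. I will take the filtration $\mathcal F_t=\sigma(\nu_1,X_1,\ldots,\nu_t,X_t,\nu_{t+1})$ so that each $\nu_{t+1}$ is predictable. With that choice, the increments are bounded martingale differences and Azuma applies verbatim. The constant $8$ comes from the crude range bound of $2$; it has slack, so nothing is delicate.
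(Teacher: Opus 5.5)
Your proposal is correct and matches the paper's approach. The paper gives no separate proof, only the remark that the lemma is a standard consequence of Azuma--Hoeffding. Your argument fills that in: the inclusion $\{S_n<cn/2\}\subseteq\{M_n<-cn/2\}$ via $\sum_t\nu_t\ge cn$, followed by Azuma with $|X_t-\nu_t|\le 1$, reproduces the stated constant $8$ exactly. Your reading of $n_0$ as $n$ is right, and your filtration containing $\nu_{t+1}$ is the right way to make the conditional-mean hypothesis precise.
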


\xhdr{Optimal policy.}
The following lemma shows that the optimal policy never chooses the bad arm.
\begin{lemma}\label[lemma]{lm:optimal_policy_good_arm}
    For any $\muvec$ with $\mu_1 > \mu_2$, the optimal per-episode policy does not consider arm $2$.
\end{lemma}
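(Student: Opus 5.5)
The plan is a \emph{simulation-plus-coupling} argument. Since $\muvec$ is fixed and known to the comparator, rewards carry no information, so a policy's expected utility depends only on how it maps reward histories to arm choices. Take any per-episode policy $\pi$ that considers arm $2$. From it, build a policy $\pi'$ that never uses arm $2$ and show $V(\pi'\mid\muvec)\ge V(\pi\mid\muvec)$, with strict inequality under suitable conditions.

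\textbf{Step 1 (the simulating policy).} Realize every round's randomness by an independent uniform $u_t\in[0,1]$. A pull of arm $i$ in round $t$ returns $\ind{u_t<\mu_i}$. The policy $\pi'$ runs $\pi$ internally, with one change: whenever $\pi$ asks for arm $2$, $\pi'$ pulls arm $1$ instead. It then feeds $\pi$ the \emph{simulated} arm-$2$ outcome $\ind{u_t<\mu_2}$ rather than the observed arm-$1$ reward. A randomized policy is fine here, since a deterministic optimizer exists anyway; the simulation bit can also be obtained by randomizing conditionally on the observed reward, because $\mu_2<\mu_1$.

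\textbf{Step 2 (pointwise comparison).} Under this coupling, $\pi$ and $\pi'$ see identical internal histories, so they make the same skip decisions and the same number of non-skip pulls, hence incur equal costs. Their reward vectors agree in every round except those where $\pi$ pulls arm $2$. In those rounds $\ind{u_t<\mu_1}\ge\ind{u_t<\mu_2}$, so $\rvec'\ge\rvec$ coordinatewise. Because $f$ is coordinatewise non-decreasing, $f(\rvec')\ge f(\rvec)$ pointwise, and taking expectations gives $V(\pi'\mid\muvec)\ge V(\pi\mid\muvec)$. This already shows that an optimal policy exists among those not considering arm $2$.

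\textbf{Step 3 (strictness; the main obstacle).} To conclude that \emph{every} optimal policy avoids arm $2$, fix the first history $h$, reached with positive probability, at which $\pi$ pulls arm $2$. Let $\sigma_0$ and $\sigma_1$ be $\pi$'s continuations after reward $0$ and reward $1$.

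\emph{Case A: under $\sigma_0$, the reward in this round strictly matters.} That is, the expected score of $\sigma_0$ with this coordinate set to $1$ strictly exceeds its score with the coordinate set to $0$. The event $u_t\in[\mu_2,\mu_1)$ has probability $\mu_1-\mu_2>0$. On it, $\pi'$ collects reward $1$ while still following $\sigma_0$, so the gain is at least $(\mu_1-\mu_2)$ times that positive difference, and $\pi$ is strictly suboptimal.

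\emph{Case B: the round's reward is irrelevant under $\sigma_0$.} Then compare $\pi$ with the policy that, at $h$, \emph{skips} and thereafter follows whichever of $\sigma_0,\sigma_1$ is better when started with a $0$ in this coordinate. Skipping saves $\cost>0$. The reward-$1$ branch of $\pi$ can be dominated by using the Step 2 simulation on the alternative arm, and I would check this carefully.

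I expect Case B to be the delicate part, because $\sigma_1$ may be strictly more valuable than $\sigma_0$. I would first try to handle it by combining the arm-$1$ substitution with a randomized choice of continuation: a fresh coin with bias $\mu_2/\mu_1$ on reward-$1$ outcomes reproduces $\pi$'s branch distribution exactly, while obtaining weakly larger rewards. The strict gain would then come from the event $u_t\in[\mu_2,\mu_1)$ in Case A, or from the saved cost in Case B. If full strictness turns out to be false for degenerate $f$, I would weaken the lemma to ``some optimal policy does not consider arm $2$''. That weaker form is all that \Cref{def:util-gap} and \Cref{lm:util-gap} actually use.
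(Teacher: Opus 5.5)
Your Step 2 is correct and already shows that \emph{some} optimal policy avoids arm $2$. The lemma, however, asserts that \emph{every} optimal policy does, and your strictness argument in Step 3 does not close.

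In Case B, your modification pulls arm $1$ at $h$ and, on a reward of $1$, flips a coin of bias $\mu_2/\mu_1$ to choose between $\sigma_1$ and $\sigma_0$. This policy still pays $\cost$ at $h$. On the branch where the coin sends you to $\sigma_0$, the extra reward is worthless by the Case B hypothesis. So at $h$ this policy does exactly as well as $\pi$ and no better, and the ``saved cost'' you invoke never appears. The plain skip-at-$h$ alternative fails for the reason you identified yourself: $\sigma_1$ with a $1$ in this coordinate can be worth strictly more than anything available after a $0$.

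The missing idea is to flip the coin \emph{before} pulling, and to use skip rather than arm $1$ on the tails outcome. At every history where $\pi$ plays arm $2$, let $\pi'$ play arm $1$ with probability $\mu_2/\mu_1$ and skip otherwise. Then let $\pi'$ continue as $\pi$ would after observing that same outcome from arm $2$. Conditional on the history, the round's reward is $1$ with probability exactly $\frac{\mu_2}{\mu_1}\cdot\mu_1=\mu_2$. Hence the entire reward vector has the same law under $\pi'$ as under $\pi$, and monotonicity of $f$ is not even needed. Meanwhile, the expected cost at each such history drops from $\cost$ to $\frac{\mu_2}{\mu_1}\cost<\cost$.

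Since $\pi$ reaches an arm-$2$ history with positive probability and $\cost>0$, we get $V(\pi'\mid\muvec)>V(\pi\mid\muvec)$. Because $\pi'$ is a mixture of deterministic policies, some deterministic policy strictly beats $\pi$. This is the paper's proof.

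In effect, your coupling spends the slack on extra reward, which helps only when $f$ is sensitive to that coordinate. The paper instead spends it on saved cost, which is strictly positive for every $f$. This removes the need for your Case A/B split, and your proposed fallback to the weaker statement is unnecessary.
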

\begin{proof}
    We will show that any deterministic policy $\pi$ considering arm $2$ has a strictly lower expected utility than some randomized policy.
    Since any randomized policy is itself a distribution over deterministic policies, this would imply that there must exist some deterministic policy with expected utility higher than $\pi$.

    We proceed with a proof. Consider the policy $\pi'$ that, for each history $h$ such that $\pi$ chooses arm $2$, randomizes between the skip arm and arm $1$ by playing arm $1$ with probability
    $\frac{\mu_2}{\mu_1}$ and skip with probability
    $1-\frac{\mu_2}{\mu_1}$. For histories for which $\pi$ chooses either arm $1$ or skip, $\pi'$ chooses the same arm as $\pi$.

    It is clear that the probability of observing a reward $1$ using this approach is
    $\frac{\mu_2}{\mu_1} \cdot \mu_1 = \mu_2$. As such, the reward sequence obtained by $\pi'$ has the same distribution as the reward sequence obtained by $\pi$.
    Importantly however, the expected cost incurred by $\pi'$ is strictly lower. Concretely, for any input, either the two policies have the same output, in which case they incur the same cost, or $\pi$ chooses arm $2$, in which case $\pi$ pays the cost $\cost$ while $\pi'$ pays, in expectation, the cost $\frac{\mu_2}{\mu_1}\cost$. Since the reward distributions of $\pi$ and $\pi'$ are the same, this implies that the expected cost incurred by $\pi'$ is not higher than that of $\pi'$. Additionally, since $\pi$ chooses arm $2$ with positive probability over the draws of the arms, the inequality is strict. It follows that $V(\pi') > V(\pi)$, finishing the proof.
\end{proof}

%% file: sec-failure-proof.tex
\section{Proofs for \Cref{sec:failure}: Failures and Regret}

\subsection{Proof of \Cref{thm:weak-to-strong}: from $\failexpl_{c,N}$ to strong failure}

    Fix the algorithm and reveal randomness episode by episode.
    Let $\pi_e$ denote the per-episode policy used in episode $e$, which is fully
    determined by the history before episode $e$.

    Let $N'$ be an integer to be determined later.
    We want to show that if sufficiently many episodes
    consider arm $2$ then arm $2$ will be chosen $N$ times.
    To prove this,
    we define a sequence
    $(\widetilde Y_1,\dots,\widetilde Y_{N'})\in\{0,1\}^{N'}$ as follows.
    Let $E_j$ denote the $j$-th episode that considers arm~$2$ (if it exists).
    Set $\wtilde{Y}_j = 1$ if $E_j$ does not exist or arm~$2$ is chosen in episode $E_j$, and $\wtilde{Y}_j = 0$ otherwise.
    (Setting $\wtilde{Y}_j=1$ when $E_j$ does not exist is a convention that only strengthens the lower bound on $\sum_j \wtilde{Y}_j$.)

    Let $\mE$ denote the event that
    $\mu_1, \mu_2 \in (c, 1-c)$.
    Since the episodes are considered sequentially, the values $\wtilde{Y}_j$ are realized in order. We claim that, conditioned on $\mE$,
    whenever $\wtilde{Y}_j$ is realized, we have
    $\Ex{\wtilde{Y}_j} \ge c^m$. To see why, consider two cases.

    \begin{itemize}
        \item
            If the reason the value is being realized is we had at most $j-1$ episodes considering arm $2$, then the value is set to $1$ and the claim holds.
        \item
            Otherwise, the value $\wtilde{Y}_j$ is being realized because in some episode $e$ the
            policy $\pi_e$ considers arm $2$.
            By definition, there exists a within-episode reward
            realization of length at most $m$ that causes $\pi_e$ to play arm $2$.
            Because $\mu_1,\mu_2\in(c,1-c)$, any fixed such realization occurs with
            probability at least $c^m$.
    \end{itemize}

    Let $\Bridge_{N'}$ denote the event
    $\cbr{\sum_{j=1}^{N'} \wtilde{Y}_j \ge N + 1}$.
    We claim that if
    \begin{align*}
        N' \ge \max\cbr{
        \frac{8}{c^{2m}}\log\frac{2}{\delta},
        \frac{2}{c^m}(N + 1)
        },
    \end{align*} then
    \begin{align}
        \Pr{\Bridge_{N'} \mid \mE} \ge 1-\delta/2.
        \label{eq:lower_bound_bridge}
    \end{align}
    To prove this, apply \Cref{lm:alex-process} to $(\widetilde Y_j)_{j=1}^{N'}$ with
    $c=c^m$ and parameter $\delta/2$.
    Since
    \begin{math}
        N' \ge \frac{8}{c^{2m}}\log\frac{2}{\delta},
    \end{math}
    we have
    \begin{align*}
        \Pr{\sum_{j=1}^{N'} \widetilde Y_j \ge \tfrac{c^m}{2}N'} \ge 1-\delta/2.
    \end{align*}
    Since
    \begin{math}
        \tfrac{c^m}{2}N' \ge N+1,
    \end{math}
    this implies that with probability at least $1-\delta/2$,
    \begin{align*}
        \sum_{j=1}^{N'} \widetilde Y_j \ge N+1,
    \end{align*}
    finishing the proof of \Cref{eq:lower_bound_bridge}.

    We next show that
    \begin{align*}
        \failexpl_{c,N} \cap \Bridge_{N'} \subseteq \StrongFail_{c,N'}.
    \end{align*}
    To see why, assume $\failexpl_{c,N}$ and $\Bridge_{N'}$ both occur.
    If at least $N'$ episodes consider arm $2$, then the $\wtilde{Y}_j$ values equal to $1$ correspond to episodes where arm $2$ was pulled and as such $\Bridge_{N'}$ implies that
    arm $2$ was pulled more than $N$ times in total, contradicting
    $\failexpl_{c,N}$.
    Therefore, under $\failexpl_{c,N}$ we must have fewer than $N'$ episodes that
    consider arm $2$, which is exactly the structural condition in $\StrongFail_{c,N'}$.

    Since $\Pr{\failexpl_{c,N}} \ge \delta$ by assumption,
    letting $\Bridge_{N'}^c$ denote the complement event of $\Bridge_{N'}^c$ we have
    \begin{align*}
        &\Pr{\StrongFail_{c,N'} \mid \failexpl_{c,N}}
        \\&\ge \Pr{\Bridge_{N'} \mid \failexpl_{c,N}} \\
        &= 1-\Pr{\Bridge_{N'}^c \mid \failexpl_{c,N}} \\
        &= 1-\frac{\Pr{\Bridge_{N'}^c \cap \failexpl_{c, N}}}{\Pr{\failexpl_{c,N}}} \\
        \\&\ge
        1-\frac{\Pr{\Bridge_{N'}^c \cap \mE}}{\Pr{\failexpl_{c,N}}} \\
        &\ge 1-\frac{\delta/2}{\delta}
        = \tfrac12,
    \end{align*}
    which completes the proof.

\subsection{Proof of \Cref{lm:util-gap}: utility-gap}

    Define $\pi'_1$ and $\pi'_2$ to be the policies that pull arm $1$ and arm $2$, respectively, in all $m$ rounds and never skip.
    We begin by comparing their cost-free utilities.

    Sample $\rvec^{(1)}=(r^{(1)}_1,\dots,r^{(1)}_m)$ with independent coordinates distributed as $\Bern(\mu_1)$.
    Conditioned on $\rvec{(1)}$, construct $\rvec{(2)}=(r^{(2)}_1,\dots,r^{(2)}_m)$ as follows.
    If $r^{(1)}_i=1$, set $r^{(2)}_i=1$.
    If $r^{(1)}_i=0$, set $r^{(2)}_i$ to $0$ with probability $(1-\mu_2)/(1-\mu_1)$ and to $1$ otherwise.
    Since $\mu_2 \ge \mu_1 + c$, this ensures that each coordinate of $\rvec{(2)}$ is distributed as $\Bern(\mu_2)$;
    this holds because
    $\Pr{r_i^{(2)}=0} = \Pr{r_i^{(1)}=0} \Pr{r_i^{(2)}=0 \mid r_i^{(1)}=0} = (1-\mu_1)\frac{1-\mu_2}{1-\mu_1} = 1-\mu_2$.
    Moreover, we always have $\rvec{(2)} \ge \rvec{(1)}$ coordinate-wise.

    Because $f$ is coordinatewise increasing and $\rvec{(2)} \ge \rvec{(1)}$ coordinate-wise, the random variable
    $f(\rvec{(2)})-f(\rvec{(1)})$ is always non-negative.
    We now lower bound the probability that it is bounded below by a constant positive.
    Since $\mu_2 \ge \mu_1 + c$, we have $1-\mu_1 \ge 1- \mu_2 + c \ge c$, and therefore
    \begin{align*}
    \Pr{\rvec^{(1)}=\vec{0}} = (1-\mu_1)^m \ge c^m.
    \end{align*}
    Conditioned on this event, each coordinate of $\rvec{(2)}$ flips to $1$ with probability $1 - \frac{1-\mu_2}{1-\mu_1} = \frac{\mu_2 - \mu_1}{1-\mu_1} \ge c$, independently across coordinates.
    Thus,
    \begin{align*}
    \Pr{\rvec^{(2)}=\vec{1}\mid \rvec{(1)}=\vec{0}} \ge c^m.
    \end{align*}
    Hence, with probability at least $c^{2m}$, we have simultaneously
    $\rvec{(1)}=\vec{0}$ and $\rvec{(2)}=\vec{1}$.
    On this event,
    \begin{align*}
        f(\rvec{(2)})-f(\rvec{(1)}) =
        f(\vec{1})-f(\vec{0}).
    \end{align*}
    Let $V'(\pi)$ denote the expected value of $f(\rvec)$ when using policy $\pi$.
    Since we always have $f(\rvec{(2)})-f(\rvec{(1)})$, taking expectation we obtain
    \begin{align*}
    V'(\pi'_2)-V'(\pi'_1)
    &
    =
    \mathbb{E}[f(\rvec{(2)})-f(\rvec{(1)})]
    \\
    &
    \ge c^{2m}\cdot \big(f(\vec{1})-f(\vec{0})\big).
    \end{align*}
    Define
    \begin{align*}
    c' := c^{2m}\cdot \big(f(\vec{1})-
    f(\vec{0})\big),
    \end{align*}
    which depends only on $c$ and $f$.

    We now relate cost-free and actual utilities.
    For any $m$-round policy $\pi$, the total incurred cost is non-negative and at most $m\cdot\cost$. Therefore
    \begin{align*}
        V'(\pi)-m\cdot\cost \le V(\pi) \le V'(\pi).
    \end{align*}
    Since $f$ is coordinatewise increasing and $V'$ ignores costs, skipping cannot increase the score, and hence
    \begin{align*}
    V'(\pi^*_1) \le V'(\pi'_1).
    \end{align*}
    By optimality of $\pi^*_2$ among policies using only arm $2$ and $\mathsf{skip}$, we also have
    $V(\pi^*_2) \ge V(\pi'_2)$.
    Combining these inequalities,
    \begin{align*}
    V(\pi^*_2)-V(\pi^*_1)
    &\ge V(\pi'_2)-V'(\pi'_1) \\
    &\ge \big(V'(\pi'_2)-m\cdot\cost\big) - V'(\pi'_1) \\
    &= \big(V'(\pi'_2)-V'(\pi'_1)\big) - m\cdot\cost \\
    &\ge c' - m\cdot\cost.
    \end{align*}
    Thus, whenever $2m\cdot\cost \le c'$, we obtain
    \begin{align*}
    V(\pi^*_2)-V(\pi^*_1) \ge c'/2,
    \end{align*}
completing the proof.

\subsection{Lower bounds on utility-gap in special cases: proof of \Cref{lm:special_case_gap}}
\label{sec:special_case_gap}

Let's restate the lemma for convenience.

\begin{lemma}%
    Assume $\mu_2 \ge \mu_1 $. If
    $f$ is the $\min$ function then
    \begin{align*}
        \UGap(\muvec) \ge
        \max\{0, \mu_2^m - \mu_1^m - m\cost\},
    \end{align*}
    and if $f$ is the $\max$ function then
    \begin{align*}
        \UGap(\muvec) \ge
        \max\{0, (1-\mu_1)^m - (1-\mu_2)^m - m\cost\}.
    \end{align*}
\end{lemma}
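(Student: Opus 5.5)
We compare an explicit good-arm policy against an arbitrary policy $\pi\in\badPi$. Since $\mu_2\ge\mu_1$, arm $2$ is the good arm, and $\badPi$ consists of policies using only arm $1$ and the skip-arm. The claimed bound is trivially true when the right-hand side is $0$, because $\UGap(\muvec)\ge 0$ by definition. So it suffices to show
\[ U^*(\muvec) - V(\pi\mid\muvec) \ge \text{(gap)} - m\,\cost \quad\text{for every } \pi\in\badPi. \]

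\emph{Lower bound on $U^*$.} Let $\pi'_2$ be the policy that pulls arm $2$ in all $m$ rounds. Its score has expectation $\mu_2^m$ for $f=\min$, since the minimum is $1$ only if all rewards are $1$. For $f=\max$ the expected score is $1-(1-\mu_2)^m$. Its cost is exactly $m\,\cost$, so $U^*(\muvec)\ge V(\pi'_2\mid\muvec)\ge \mu_2^m-m\,\cost$ (resp. $1-(1-\mu_2)^m-m\,\cost$).

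\emph{Upper bound on $V(\pi\mid\muvec)$ for $\pi\in\badPi$.} Costs are nonnegative, so $V(\pi\mid\muvec)\le \Ex{f(\rvec)}$. The reward vector under $\pi$ has skipped coordinates equal to $0$, and every other coordinate is a Bernoulli$(\mu_1)$ draw independent of the past given $\muvec$.

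For $f=\min$, any skip forces the score to $0$. The score is $1$ only on histories where all $m$ rounds pull arm $1$ and all return $1$. Along any path this has probability at most $\mu_1^m$, so $\Ex{f}\le\mu_1^m$.

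For $f=\max$, the score is $0$ only if every non-skip pull returns $0$. Couple $\pi$ with the all-arm-$1$ policy through reward tapes, replacing each skip by a fresh Bernoulli$(\mu_1)$ draw. Then $\max(\rvec_\pi)\le \max(\rvec_{\text{all-1}})$ pointwise, by coordinatewise monotonicity, since skipping only zeroes out coordinates. Hence $\Ex{f}\le 1-(1-\mu_1)^m$.

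Subtracting the two bounds gives $\mu_2^m-\mu_1^m-m\,\cost$ and $(1-\mu_1)^m-(1-\mu_2)^m-m\,\cost$ respectively, and taking the supremum over $\pi\in\badPi$ finishes the proof.

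The main subtlety is the adaptive upper bound on policies in $\badPi$, since such policies may skip depending on observed rewards. The coupling argument handles this: a skipped round can be replaced by a hypothetical arm-$1$ pull without decreasing the score. The same coupling, using $f(\rvec)\le f(\rvec')$ for $\rvec\le\rvec'$, also gives the $\min$ case directly.
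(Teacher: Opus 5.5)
Your proof is correct and follows essentially the same route as the paper. You lower-bound $U^*(\muvec)$ by the all-arm-2 policy, losing exactly $m\cost$, and upper-bound every policy that avoids the good arm by the cost-free expected score of the all-arm-1 policy, i.e.\ $\mu_1^m$ or $1-(1-\mu_1)^m$. Your reward-tape coupling, which replaces skips by fresh Bernoulli$(\mu_1)$ draws and uses monotonicity of $f$, makes rigorous a step the paper only asserts: that skipping cannot raise the expected score, even for adaptive policies.
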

\begin{proof}
    As in the proof of \Cref{lm:util-gap}, let $V'(\pi)$ denote the expected value of $f(\cdot)$ under policy $\pi$ and let $\pi_i$ denote the policy choosing arm $i$ in all rounds.
    We first observe that
    $V'(\pi_i) = \mu_i^m$ for $f=\min$ and
    $V'(\pi_i) = 1 - (1-\mu_i)^m$ for $f=\max$
    since $\min(\rvec) = \ind{\rvec=\vec{1}}$ and
    $\max(\rvec) = \ind{\rvec\ne \vec{0}}$.
    It follows that
    $V'(\pi_2) - V'(\pi_1)$ is lower bounded by
    $\mu_2^m - \mu_1^m$ when $f$ is $\min$
    and by
    $(1-\mu_1)^m - (1-\mu_2)^m$
    when $f$ is $\max$.

    Next, as in \Cref{lm:util-gap}, we observe that
    since $\abs{V(\pi) - V'(\pi)} \le m\cost$ we have
    $U^* \ge V(\pi_2) \ge V'(\pi_2) - m\cost$
    and $\sup_{\pi \in \badPi} V(\pi) \le \sup_{\pi \in \badPi} V'(\pi) = V'(\pi_1)$. As before, taking the difference finishes the proof. Note that the bound $\UGap \ge 0$ always holds by definition of $\UGap$.
\end{proof}

%% file: sec-proof-cost.tex
\section{Proof of \Cref{thm:m_2_symm}(a): $m=2$, Arbitrary Cost}
\label{sec:proof-cost}

In this section, we prove \Cref{thm:m_2_symm}.
We first establish the following lemma
that provides a sufficient condition for arm $1$ being chosen over arm $2$.
\begin{lemma} \label[lemma]{lm:no_pull}
    If
    \begin{math}
        \frac{\alpha_{2,t-1}+1}{\alpha_{2,t-1}+\beta_{2,t-1}+1}
        < \gamma_{1,t-1},
    \end{math}
    then arm $2$ is not selected in round $t$.
\end{lemma}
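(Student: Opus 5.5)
The plan is to treat the two positions of round $t$ within the episode separately, since $m=2$. Throughout, write $p:=\gamma_{2,t-1}$, $q:=\gamma_{1,t-1}$ and $p^+:=\frac{\alpha_{2,t-1}+1}{\alpha_{2,t-1}+\beta_{2,t-1}+1}$. Note that $p<p^+<q$. We may normalize $f(0,0)=0$ and, by symmetry, write $a:=f(0,1)=f(1,0)$ and $b:=f(1,1)$, with $0\le a\le b$ and $b>0$.

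\emph{Round $t$ is the second round of its episode.} Let $r$ be the first-round reward. Choosing a non-skip arm $i$ yields posterior utility $f(r,0)+\gamma_{i,t-1}\bigl(f(r,1)-f(r,0)\bigr)-\cost$, while skipping yields $f(r,0)$. If $f(r,1)>f(r,0)$, then arm $1$ strictly beats arm $2$ because $\gamma_{2,t-1}\le p^+<\gamma_{1,t-1}$. If $f(r,1)=f(r,0)$, then skip strictly beats both arms since $\cost>0$. In either case arm $2$ is not selected.

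\emph{Round $t$ is the first round of its episode.} Suppose for contradiction that a posterior-optimal policy $\pi$ plays arm $2$ in round $t$. After observing $x\in\{0,1\}$ from arm $2$, arm $2$'s posterior mean is at most $p^+<q$, and arm $1$'s posterior is unchanged. So the second-round argument above applies to each continuation branch. Since $\pi$ must be optimal conditional on each positive-probability branch, its round-$2$ action $a(x)$ lies in $\{\text{arm }1,\text{skip}\}$. This leaves four cases, and in each I exhibit a policy starting with arm $1$ that is strictly better; independence of the two arms' posteriors makes all values explicit.

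\begin{itemize}
\item[(i)] $a\equiv$ skip. Then $\postU_\pi=pa-\cost$. The policy ``arm $1$, then skip'' gets $qa-\cost$, which is strictly larger if $a>0$. If $a=0$, never pulling gets $0>-\cost$.
\item[(ii)] $a\equiv$ arm $1$. By symmetry of $f$ and independence of the arms, $\pi$ has the same value as $\pi'$ = ``arm $1$, then arm $2$''. Improve $\pi'$ on the branch $y=1$: arm $1$'s posterior there exceeds $q>p$. By the one-round computation, replacing arm $2$ with arm $1$ strictly gains $(\gamma-p)(b-a)>0$ when $b>a$; when $b=a$, replacing it with skip strictly saves $\cost$. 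Since that branch has probability $q>0$, $\pi$ is strictly suboptimal.
\item[(iii)] Arm $1$ only after $x=0$. Compare with ``arm $1$, then arm $2$ only after $y=0$''. A direct computation shows the $f$-terms coincide, both being $pa+qa-pqa$. The expected costs are $\cost(2-p)$ versus $\cost(2-q)$, so the alternative is strictly better because $q>p$.
\item[(iv)] Arm $1$ only after $x=1$. Here $\postU_\pi=pa+pq(b-a)-\cost-p\cost$. The mirror policy (arm $1$, then arm $2$ after $y=1$) has value $qa+pq(b-a)-\cost-q\cost$. The difference is $(q-p)(a-\cost)$, which is positive if $a>\cost$.
\end{itemize}

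The main obstacle is case (iv) when $a\le\cost$; equality $a=\cost$ also needs a strict competitor. There I would try two competitors. The first is ``arm $1$, then arm $1$ after $y=1$'', whose second pull uses arm $1$'s updated posterior $q_1:=\frac{\alpha_{1,t-1}+1}{\alpha_{1,t-1}+\beta_{1,t-1}+1}>q>p^+$ in place of $p$. The second is skipping entirely, with value $0$. For the first competitor, I would exploit $q_1>p^+$ via a coupling-type comparison of the two-step values; that step is where the lemma's hypothesis (with the $+1$, not just $p<q$) should be used. For the second, optimality of $\pi$ in branch $x=1$ forces $q(b-a)\ge\cost$, and I would write $\postU_\pi=p\bigl(a-\cost+q(b-a)-\cost\bigr)-(1-p)\cost$ and compare it termwise with the first competitor.
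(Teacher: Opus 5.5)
\textbf{Gap: case (iv) with $a\le\cost$ is never proved.} Your round-2 claim and your cases (i)--(iii) are correct and essentially match the paper's argument. In case (iv), the mirror-policy comparison correctly gives a strict gain $(q-p)(a-\cost)$ when $a>\cost$. For $a\le\cost$, however, you only name two competitors and say you ``would try'' a coupling-type comparison, so the proof is incomplete as written. Your diagnosis of what is needed there is also off. No coupling is required, and the $+1$ in the hypothesis plays no role in this subcase. That hypothesis was already used up in ruling out arm $2$ in round $2$, which is what reduces the problem to four cases.

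\textbf{The missing step.} It is a short computation, and it is exactly how the paper handles its Case IV, without splitting on $a$ versus $\cost$. Let $B:=a+q(b-a)-\cost$ be the value of $\pi$'s round-2 continuation after $x=1$, so that $\postU_\pi=-\cost+pB$. Your first competitor plays arm $1$, then arm $1$ again after $y=1$, and skips after $y=0$. Its value is $-\cost+q\bigl(a+q_1(b-a)-\cost\bigr)\ge-\cost+qB$, which uses only $q_1\ge q$ and $b\ge a$. Hence it beats $\pi$ by at least $(q-p)B$, and this is strictly positive whenever $B>0$. If instead $B\le 0$, then $\postU_\pi\le-\cost<0$, so skipping both rounds strictly beats $\pi$. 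Your own remark that optimality of $\pi$ on the branch $x=1$ forces $q(b-a)\ge\cost$, i.e.\ $B\ge a\ge 0$, shows that the skip-all fallback is needed only when $B=0$. With this argument in place, the separate mirror-policy computation for $a>\cost$ becomes unnecessary.
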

We defer the proof of the above lemma to \Cref{sec:lm_no_pull}.

Define the event $\event_1$ as
\begin{align}
    \event_1 = \cbr{
        \frac{\alpha_{1,0} + \sum_{\ell=1}^n \tape_{1,\ell}}{\alpha_{1,0} + \beta_{1,0} + n}
        \ge \gamma_{1,0} - \Delta/4 \quad \text{for all } n \in [1, T]
    }.
    \label{eq:def_ev1}
\end{align}
This corresponds to the \stableEvent event defined in \Cref{sec:proof-outline}.

\begin{lemma} \label[lemma]{lm:event1-mart}
If $\event_1$ occurs, then for all $t \ge 0$, we have
\begin{align*}
    \gamma_{1,t} \ge \gamma_{1,0} - \Delta/4.
\end{align*}
\end{lemma}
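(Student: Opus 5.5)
This lemma is essentially a translation from the ``tape'' indexing to the round indexing. The plan is to express $\gamma_{1,t}$ through the number of pulls of arm $1$ so far, and then read the bound off $\event_1$.

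Fix a round $t\ge 0$ and let $n := n_{1,t}$ be the number of times arm $1$ has been pulled in rounds $1,\dots,t$. By the reward-tape representation, the rewards observed from arm $1$ up to round $t$ are exactly $\tape_{1,1},\dots,\tape_{1,n}$. Hence
\[
\alpha_{1,t}=\alpha_1^{(n)}=\alpha_{1,0}+\tsum{\ell\le n}\tape_{1,\ell},
\qquad
\beta_{1,t}=\beta_1^{(n)}=\beta_{1,0}+n-\tsum{\ell\le n}\tape_{1,\ell},
\]
and so
\[
\gamma_{1,t}=\gamma_1^{(n)}=\frac{\alpha_{1,0}+\sum_{\ell=1}^n\tape_{1,\ell}}{\alpha_{1,0}+\beta_{1,0}+n}.
\]
This expression depends only on $n$, not on when the pulls happened or what arm $2$ did.

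Now split into two cases according to $n$.

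\textbf{Case $n=0$.} Then $\gamma_{1,t}=\gamma_{1,0}$. Since $\Delta>0$ because arm $1$ is prior-good, we get $\gamma_{1,t}\ge\gamma_{1,0}-\Delta/4$.

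\textbf{Case $n\ge 1$.} Since each round pulls at most one arm, $n\le t$. For $t\le T$ this gives $n\in[1,T]$, so the displayed inequality in \eqref{eq:def_ev1} applies directly and yields $\gamma_{1,t}\ge\gamma_{1,0}-\Delta/4$.

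The only subtlety, and the one point to handle carefully, is the range of $t$. The statement says ``for all $t\ge 0$'', while $\event_1$ only controls indices $n\le T$. I would read $T$ as the horizon, i.e., the total number of rounds under consideration, so that $t\le T$ is implicit. If instead $T$ is meant to be arbitrary or infinite, then $\event_1$ covers every $n\ge 1$, and the argument goes through verbatim with no dependence on the relation between $t$ and $T$.

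I would also remark that the argument is purely deterministic given the tape, and it holds regardless of the agents' policies. This is the feature that later lets $\event_1$ be analyzed, in \Cref{lm:event1-prob}, as an event about the tape of arm $1$ alone, independently of the social dynamics.
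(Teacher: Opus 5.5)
Your proposal is correct and follows the paper's proof: write $\gamma_{1,t}$ as the tape-based posterior mean after $n_{1,t}$ pulls of arm $1$, and read the bound directly off the definition of $\event_1$. Your separate treatment of $n_{1,t}=0$, which $\event_1$ (ranging over $n\in[1,T]$) does not literally cover, and your remark on how $t$ relates to $T$ are slightly more careful than the paper's one-line argument.
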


\begin{proof}
Let $n_{1,t}$ be the number of pulls of arm $1$ up to (and including) round $t$. By the posterior update rule for Beta-Bernoulli distributions, we have
\begin{align*}
    \gamma_{1,t} =
    \frac{\alpha_{1,0} + \sum_{\ell=1}^{n_{1,t}} \tape_{1,\ell}}{\alpha_{1,0} + \beta_{1,0} + n_{1,t}}.
\end{align*}
By definition of $\event_1$, this quantity is always at least $\gamma_{1,0} - \Delta/4$.
\end{proof}

We next prove the following lemma which is similar to Theorem 7.1 in \cite{BSL-myopic23} and follows using the same argument.
For completeness, we provide a proof.
\begin{lemma} \label[lemma]{lm:event1-prob}
$\Pr{\event_1} \ge \Delta/4$.
\end{lemma}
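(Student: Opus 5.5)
The plan is to show that the posterior mean of arm $1$, read along its reward tape, is a martingale, and then apply optional stopping at the first time it falls below the threshold $\gamma_{1,0}-\Delta/4$. This is the same argument as Theorem 7.1 in \cite{BSL-myopic23}.

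\textbf{Step 1: the martingale.} Use the reward-tape view from \Cref{sec:app-prelims}. Let $M_n := \gamma_1^{(n)} = \frac{\alpha_{1,0}+\sum_{\ell\le n}\tape_{1,\ell}}{\alpha_{1,0}+\beta_{1,0}+n}$ for $n\ge 0$, so $M_0=\gamma_{1,0}$. Since $\tape_{1,n+1}\sim\Bern(M_n)$ given the first $n$ entries, a direct computation gives
\begin{align*}
\Ex{M_{n+1}\mid \tape_{1,1},\dots,\tape_{1,n}} = \frac{\alpha_1^{(n)} + M_n}{\alpha_1^{(n)}+\beta_1^{(n)}+1} = M_n .
\end{align*}
Equivalently, $M_n=\Ex{\mu_1\mid \tape_{1,1..n}}$ is a Doob martingale. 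In either form, $(M_n)$ is a martingale taking values in $[0,1]$.

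\textbf{Step 2: optional stopping.} Let $\tau$ be the first $n\in[1,T]$ with $M_n<\gamma_{1,0}-\Delta/4$, and set $\tau=T$ if no such $n$ exists. Then $\tau$ is a bounded stopping time, so $\Ex{M_\tau}=M_0=\gamma_{1,0}$. Split according to $\event_1$:
\begin{itemize}
\item On $\event_1^c$ we have $M_\tau<\gamma_{1,0}-\Delta/4$.
\item On $\event_1$ we have $M_\tau\le 1$.
\end{itemize}
Writing $p:=\Pr{\event_1}$, this gives
\begin{align*}
\gamma_{1,0} \le (1-p)\,(\gamma_{1,0}-\Delta/4) + p ,
\end{align*}
which rearranges to $p\,(1-\gamma_{1,0}+\Delta/4)\ge \Delta/4$.

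\textbf{Step 3: conclude.} Since $\Delta=\gamma_{1,0}-\gamma_{2,0}\le\gamma_{1,0}$, we get $1-\gamma_{1,0}+\Delta/4\le 1$. Hence $p\ge \Delta/4$.

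The bound does not depend on $T$. If one wants the event over all $n\ge1$, take $T\to\infty$ using continuity of probability along the decreasing events.

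The only step needing care is Step 1, specifically checking the martingale property under the sequential sampling description of the tape. This is exactly the identity displayed in \Cref{sec:app-prelims}, so no real obstacle is expected.
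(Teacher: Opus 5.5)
Your proposal is correct and follows the paper's proof essentially verbatim. The paper also treats $\gamma_1^{(n)}$ along arm~1's reward tape as a Doob martingale, stops it at the first time it falls below $\gamma_{1,0}-\Delta/4$ (or at the horizon), and applies optional stopping using $\gamma_1^{(n)}\le 1$ on $\event_1$. The only cosmetic difference is the final algebra: the paper bounds $\Pr{\tstop<T+1}\,(\gamma_{1,0}-\Delta/4)$ directly by $\gamma_{1,0}-\Delta/4$, while you keep the factor $1-p$ and then use $1-\gamma_{1,0}+\Delta/4\le 1$, which yields the same (marginally sharper) bound.
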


\begin{proof}
    Consider the hypothetical game where we just keep playing arm $1$ and see its rewards.
    Observe that after playing $n$ rounds,
    the posterior value $\gamma_{1, n}$
    will equal
    \begin{math}
        \frac{\alpha_{1,0} + \sum_{\ell=1}^n \tape_{1,\ell}}{\alpha_{1,0} + \beta_{1,0} + n}
        .
    \end{math}
    Define the stopping time $\tstop$ as the first time when $\gamma_{1, t}$ drops below $\gamma_{1, 0} - \Delta/4$ and as $T + 1$ otherwise.
    Observe that $\tstop$ is a stopping time since whether or not $\tstop = t$ depends only on $\tape_{1, 1}, \dots, \tape_{1, t}$.
    We further note that $\gamma_{1, t}$ is a martingale.
    Specifically, since we assumed we are only playing arm $1$, we have
    $\gamma_{1, t} =
    \Ex{\mu_1 | \tape_{1, 1}, \dots, \tape_{1, t-1}}
    $.
    As such, $\gamma_{1, t}$ is a Doob martingale.

    Therefore,
    $\Ex{\gamma_{1, \tstop}} = \gamma_{1, 0}$.
    Observe however that
    \begin{align*}
        \gamma_{1, 0}
        &=
        \Ex{\gamma_{1, \tstop}}
        \\&=
        \Pr{\tstop = T + 1}
        \Ex{\gamma_{1, \tstop} \mid \tstop = T + 1}
        +
        \Pr{\tstop < T + 1}
        \Ex{\gamma_{1, \tstop} \mid \tstop < T + 1}
        \\&\le
        \Pr{\tstop = T + 1}
        \cdot 1
        +
        \Pr{\tstop < T + 1}
        (\gamma_{1, 0} - \Delta/4),
        &\EqComment{i}
        \\&\le
        \Pr{\tstop = T + 1}
        \cdot 1
        +
        (\gamma_{1, 0} - \Delta/4),
    \end{align*}
    where for $(i)$ we have used
     $\gamma_{1, t} \le 1$ and definition of $\tstop$ to bound the first and second expectation term respectively.
    It follows that
    $\Pr{\gamma_{1, \tstop} = T+1} \ge \Delta/4$ as claimed.
    \begin{align*}
    \end{align*}
\end{proof}

Recall that
$N\subp > \frac{\beta_{2,0}}{\alpha_{2,0}}$ and define
\begin{align*}
    \event_2 = \cbr{ \tape_{2,\ell} = 0 \text{ for all } \ell \in [N\subp] }.
\end{align*}

\begin{lemma} \label[lemma]{lm:bounded-pulls}
Assume $\event_1$ and $\event_2$ hold. Then arm $2$ is pulled at most $N\subp$ times.
\end{lemma}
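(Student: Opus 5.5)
We argue by contradiction, combining the reward-tape view with \Cref{lm:no_pull} and \Cref{lm:event1-mart}. The invariant we need is this: once arm $2$ has been pulled $N\subp$ times (all returning $0$ under $\event_2$), the optimistic posterior mean of arm $2$ lies strictly below $\gamma_{1,0}-\Delta/4$. Under $\event_1$, arm $1$'s posterior mean never drops below that level. Hence the hypothesis of \Cref{lm:no_pull} holds in every later round, and arm $2$ is never selected again.

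First, suppose arm $2$ were pulled an $(N\subp+1)$-th time, and let $t$ be the round of that pull. Just before round $t$ we have $n_{2,t-1}=N\subp$. By $\event_2$ all revealed tape entries are $0$, so $\alpha_{2,t-1}=\alpha_{2,0}$ and $\beta_{2,t-1}=\beta_{2,0}+N\subp$. The optimistic posterior in \Cref{lm:no_pull} is therefore
\begin{align*}
\frac{\alpha_{2,0}+1}{\alpha_{2,0}+\beta_{2,0}+N\subp+1}.
\end{align*}
By \Cref{lm:event1-mart}, under $\event_1$ we have $\gamma_{1,t-1}\ge \gamma_{1,0}-\Delta/4$. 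So it suffices to show
\begin{align*}
\frac{\alpha_{2,0}+1}{\alpha_{2,0}+\beta_{2,0}+N\subp+1} < \gamma_{1,0}-\Delta/4 .
\end{align*}
\Cref{lm:no_pull} then forbids pulling arm $2$ in round $t$, which is the contradiction.

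The main obstacle is this last inequality, that is, checking that $N\subp=1+\floor{\beta_2/\alpha_2}$ is large enough. The right-hand side is at least $\gamma_{2,0}+\tfrac34\Delta>\gamma_{2,0}=\alpha_{2,0}/(\alpha_{2,0}+\beta_{2,0})$, so I would compare the left-hand side to $\gamma_{2,0}$. Cross-multiplying, the left-hand side is at most $\gamma_{2,0}$ iff
\begin{align*}
(\alpha_{2,0}+1)(\alpha_{2,0}+\beta_{2,0}) \le \alpha_{2,0}(\alpha_{2,0}+\beta_{2,0}+N\subp+1),
\end{align*}
which simplifies to $\beta_{2,0}\le \alpha_{2,0}N\subp$. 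This holds because $N\subp>\beta_{2,0}/\alpha_{2,0}$, and that is exactly why $N\subp$ is chosen as it is. Chaining gives the strict inequality, since $\Delta>0$.

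One subtlety needs care: that the reasoning applies in every round $t$ after the $N\subp$-th pull, including second rounds within an episode. \Cref{lm:no_pull} is stated per round in terms of the current history, so this is fine, provided that lemma indeed holds for any round of the episode. If it instead only covers the first round of an episode, I would add an argument for the second round: conditioned on the history in round $2$, the agent's remaining choice is a one-round myopic problem, where arm $2$'s current posterior is already below arm $1$'s.

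If $T$ is finite in $\event_1$, the argument covers rounds up to $T$, which is what is needed. Also, if arm $2$ is pulled fewer than $N\subp$ times, the claim holds trivially.
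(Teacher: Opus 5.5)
Your proof is correct and is essentially the paper's argument: contradiction at the round $t$ of the $(N\subp+1)$-th pull, cross-multiplication reducing the optimistic-posterior bound to $\beta_{2,0} < \alpha_{2,0}N\subp$, the chain $\gamma_{2,0} < \gamma_{1,0}-\Delta/4 \le \gamma_{1,t-1}$ via \Cref{lm:event1-mart}, and then \Cref{lm:no_pull}. Your worry about second rounds is unnecessary, because \Cref{lm:no_pull} is stated for an arbitrary round and its proof treats the second round of an episode separately (\Cref{cl:1_beats_2_last_round}).
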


\begin{proof}
    Suppose, for contradiction, that arm $2$ is pulled $N\subp + 1$ times. Let $t$ be the round of the $(N\subp + 1)$-th pull.
    Since $\event_1$ holds, Lemma~\ref{lm:event1-mart} implies
    \begin{align*}
        \gamma_{1,t-1} \ge \gamma_{1,0} - \Delta/4.
    \end{align*}

    \begin{claim}\label{claim:arm_2_optimist_small}
        \begin{math}
            \frac{\alpha_{2,t-1} + 1}{\alpha_{2,t-1} + \beta_{2,t-1} + 1}
            \le \gamma_{2,0}.
        \end{math}
    \end{claim}
    \begin{proof}
        Since arm $2$ has been pulled $N\subp$ times with all rewards equal to $0$ (because $\event_2$ holds), we have
        \begin{align*}
            \alpha_{2,t-1} = \alpha_{2,0}, \quad \beta_{2,t-1} = \beta_{2,0} + N\subp.
        \end{align*}
        Therefore,
        \begin{align*}
            \frac{\alpha_{2,t-1} + 1}{\alpha_{2,t-1} + \beta_{2,t-1} + 1}
            = \frac{\alpha_{2,0} + 1}{\alpha_{2,0} + \beta_{2,0} + N\subp + 1}.
        \end{align*}
        Let $\alpha = \alpha_{2,0}$ and $\beta = \beta_{2,0}$. By definition of $\gamma_{2, 0}$, we have
        $\gamma_{2, 0} = \frac{\alpha}{\alpha + \eta}$.
        It therefore suffices to show
        \begin{align*}
            \frac{\alpha + 1}{\alpha + \beta + N\subp + 1} < \frac{\alpha}{\alpha + \beta}.
        \end{align*}

        Cross-multiplying and simplifying:
        \begin{align*}
            (\alpha + 1)(\alpha + \beta) &< \alpha \rbr{\alpha + \beta + N\subp + 1} \\
            \alpha^2 + \alpha\beta + \alpha + \beta &< \alpha^2 + \alpha\beta + \alpha N\subp + \alpha \\
            \beta &< \alpha N\subp,
        \end{align*}
        which holds strictly because $N\subp > \beta / \alpha$.
    \end{proof}

    Since we assumed $\gamma_{1,0} > \gamma_{2,0}$, it follows that
    \begin{align*}
        \frac{\alpha_{2,t-1} + 1}{\alpha_{2,t-1} + \beta_{2,t-1} + 1}
        \le \gamma_{2, 0}
        < \gamma_{1,0} - \Delta/4 \le \gamma_{1,t-1}.
    \end{align*}
    By Lemma~\ref{lm:no_pull}, arm $2$ cannot be played in this round, contradicting the assumption that it is played more than $N\subp$ times.
\end{proof}

\begin{proof}[Proof of \Cref{thm:m_2_symm}(a)]
    Set $\tau$ such that
    $\Pr{\mu_1 \notin (\tau, 1-\tau)} =\Delta/16$.
    We note that, since
    $\alpha_{1, 0}, \beta_{1, 0} > 0$, we have
    $\gamma_{1, 0} \in (0, 1)$ which in turn implies
    $\tau > 0$; we also have $\tau < 1/2$.
    By \Cref{lm:event1-prob}, we have
    $\Pr{\event_1} \ge \Delta/4$ which by union bound implies
    \begin{align*}
        \Pr{\mu_1  \in (\tau, 1-\tau) \text{ and } \event_1}
        \ge \Delta/8.
    \end{align*}
    Note that both events depend only on the reward tape for arm $1$ and the draw of $\mu_1$ and are independent of $\mu_2$ and reward tape of $\mu_2$.

    Now, set $\nottau = \tau/2$.
    Note that $\tau > 0$ so $\nottau > 0$.
    Let $\event_3$ be the event that $\mu_2 \in (1-\nottau, 1-\nottau/2)$.
    We note that the probability for $\event_3$ is strictly positive because the density function is strictly positive.
    Since the probability only depends on the prior,
    we can write it as $\Theta\subp(1)$.
    Condition on $\event_3$, the probability of $\event_2$ is $\Theta\subp(1)$ as well.
    Therefore, with probability $\Theta\subp(1)$ we have
    $\event_3$ and $\event_2$.
    We note that both events depend on
    the draw of $\mu_2$ and the reward tape of $\mu_2$ and are independent of arm $1$.

    Since the event $\cbr{\mu_1 \in (\tau, 1-\tau) \text{ and } \event_1}$
    is independent of $\cbr{\event_2 \text{ and } \event_3}$,
    it follows that, with probability $\Theta\subp(1)$,
    we have $\event_1, \event_2, \event_3$ and $\mu_1 \in (\tau, 1- \tau)$.
    Set $c\subp = \nottau/2$ (note that $\nottau$ depends only on the prior).
    In this case,
    we have $\mu_2 \ge \mu_1 + c\subp$ because
    $\mu_2 \ge 1-\nottau = 1 - \tau/2$ (because of $\event_3$) while
    $\mu_1 \le 1- \tau = 1-2\nottau$.
    We additionally have
    $\mu_1 \in (c\subp, 1-c\subp)$ because
    $\mu_1 \in (\tau, 1-\tau)$
    and $\mu_2 \in (c\subp, 1-c\subp)$ because
    $\mu_2 \in (1-\nottau, 1-\nottau/2)$. 
    Additionally, since $\event_1, \event_2$ hold,
    by \Cref{lm:bounded-pulls}, arm $2$ is pulled at most $N\subp$ times, finishing the proof.
\end{proof}

\subsection{Proof of \Cref{lm:no_pull}}
\label{sec:lm_no_pull}

\begin{proof}[Proof of \Cref{lm:no_pull}]
    We will show that any deterministic policy playing arm $2$ in the beginning is strictly sub-optimal in maximizing $\postU_{\pi}$.
    Since a randomized policy is a distribution over deterministic policies,
    this implies that even a randomized policy should play this arm with probability $0$, which in turn implies the lemma.

    For the rest of the proof, we will assume that there exists a posterior optimal (i.e., maximizing $\postU_\pi$) deterministic policy playing arm $2$ in round $t$ and use this to obtain a contradiction.
    Throughout the proof, we use $r_1, r_2$ to denote the rewards of the first and second rounds in the episode. We will often drop the dependence on $t-1$ in the notion $\gamma_{1, t-1}$, writing $\gamma_1$ instead.

    We start with the following claim which handles
    the case where $t$ is the second round of the episode.
    We will actually prove a stronger statement which uses a weaker assumption than what is stated in the lemma. The stronger statement will be used in the future parts of the proof for handling the case where $t$ is the first round of the episode.
    \begin{claim}\label[claim]{cl:1_beats_2_last_round}
        If $t$ is the second round of the episode and
        $\gamma_{2, t-1} < \gamma_{1, t-1}$, then
        arm $2$ will not be played in round $t$.
    \end{claim}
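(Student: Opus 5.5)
In the last round of an episode there is no future within the episode, so the choice in round $t$ reduces to a one-shot comparison. I would condition on the history up to round $t-1$, which fixes the first-round outcome $r_1$ (either a reward from a non-skip arm or a skip, which gives $r_1=0$). Writing $\gamma_i:=\gamma_{i,t-1}$, the posterior-expected utility of selecting non-skip arm $i$ in round $t$ is
\begin{align*}
\gamma_i\, f(r_1,1) + (1-\gamma_i)\, f(r_1,0) - \cost - C_1,
\end{align*}
where $C_1$ is the cost already paid in round $1$ and does not depend on the round-$2$ choice. The key fact is that, conditional on the round-$(t-1)$ history, the next reward of arm $i$ is $\Bern(\gamma_i)$; this follows from the reward-tape description in \Cref{sec:app-prelims}. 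Round-$1$ rewards enter only through $r_1$ and the posterior, both of which are already fixed at this point.

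I would then subtract the utilities of arms $1$ and $2$. This gives
\begin{align*}
(\gamma_1-\gamma_2)\bigl(f(r_1,1)-f(r_1,0)\bigr).
\end{align*}
By monotonicity, $f(r_1,1)-f(r_1,0)\ge 0$, and by assumption $\gamma_1>\gamma_2$, so arm $1$ weakly dominates arm $2$. The remaining issue is the case where $f(r_1,1)=f(r_1,0)$. In that case both non-skip arms yield utility $f(r_1,0)-\cost$, while the skip-arm yields $f(r_1,0)$. Since $\cost>0$, the skip-arm is strictly better than arm $2$. In the other case, $f(r_1,1)>f(r_1,0)$, and arm $1$ is strictly better than arm $2$. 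Either way, playing arm $2$ after this history is strictly suboptimal.

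Finally, I would lift this to the policy level. Suppose a deterministic policy plays arm $2$ in round $t$ after some history that is reached with positive probability. Replace that single action with the strictly better alternative identified above, and leave everything else unchanged. This strictly increases $\postU_\pi$, contradicting posterior optimality. It is also consistent with the reduction to deterministic policies set up at the start of the proof of \Cref{lm:no_pull}.

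I expect the only delicate point to be the tie case $f(r_1,1)=f(r_1,0)$, where the strictly positive cost $\cost$ is needed to break the tie in favor of skipping. Everything else is a direct expectation computation.
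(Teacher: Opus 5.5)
Your proposal is correct and follows essentially the same route as the paper: compare the round-$t$ posterior utilities of the two arms, which differ by $(\gamma_1-\gamma_2)\bigl(f(r_1,1)-f(r_1,0)\bigr)\ge 0$. In the tie case $f(r_1,1)=f(r_1,0)$, use $\cost>0$ to show that skipping is strictly better. Your explicit step from a single positive-probability history to the policy level is something the paper leaves implicit.
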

    \begin{proof}
        Since $\gamma_1 > \gamma_2$ and
        $f(r_1, x)$ is increasing in $x$,
        the posterior utility if we play arm $1$ is at least as high as playing arm $2$.
        If the posterior utility is strictly higher, then the proof is complete.
        If the two posterior utilities are equal, letting $r_1$ denote the reward of the previous episode,
        \begin{align*}
            \gamma_2f(r_1, 1) + (1-\gamma_2)f(r_1, 0)
            =
            \gamma_1f(r_1, 1) + (1-\gamma_1)f(r_1, 0)
            .
        \end{align*}
        Since $\gamma_1 > \gamma_2$, this implies that $f(r_1, 0) = f(r_1, 1)$. If this is the case however, then skipping leads to a strictly better posterior utility than choosing either arm because it does not incur the cost $\cost$.
    \end{proof}

    Given the above lemma, we assume $t$ is the first round of the episode for the rest of the proof.
    Note that, if arm $2$ is chosen
    in round $t$ then by assumption we have
    \begin{align*}
        \gamma_{2, t} \le \frac{\alpha_{1, t-1} + 1}{\alpha_{1, t-1} + \beta_{1, t-1} + 1}
        \le
        \gamma_{1, t-1}
        = \gamma_{1, t}.
    \end{align*}
    Therefore, by \Cref{cl:1_beats_2_last_round}, arm $2$ will not be chosen in the second round of the episode.
    Since the only other choices are playing arm $1$ and skipping, and the reward in round $t$ is either $0$ or $1$, there are $4$ total cases we need to consider for the posterior optimal policy's  behavior in the second round of the episode:
    \begin{itemize}
        \item Case I: always skip.
        \item Case II: always play arm $1$.
        \item Case III: if $r_1=0$, play arm $1$; if $r_1=1$ skip.
        \item Case IV: if $r_1=1$, play arm $1$; if $r_1=0$ skip.
    \end{itemize}
    Note that we have already assumed the posterior optimal policy is playing arm $2$ in the first round. We separately consider each of the $4$ cases and show a contradiction in each one.
    In all $4$ cases, we use
    $\gamma_{1}^+$ to denote
    $\frac{\alpha_{1, t-1} + 1}{\alpha_{1, t-1} + \beta_{1, t-1} + 1}$.

    \paragraph{Case I.}
    The policy's posterior utility is
    $-\cost + \gamma_2f(1,0) + (1-\gamma_2)f(0, 1)$.
    This implies that $f(1, 0)$ must be strictly
    higher than $f(0, 0)$. If it is not, then we can just skip in the first round and get
    $f(1, 0)$ which is strictly higher posterior utility because it does not have the $-\cost$ term.
    Given this, if we play arm $1$ instead of arm $2$, we would have a
    an increase of
    $(\gamma_1-\gamma_2)(f(1, 0) - f(0, 1)) > 0$ in the posterior utility which is non-negative.

    \paragraph{Case II.}
    If we play arm $1$ first and then arm $2$, this would give us the same utility;
    we pay $2\cost$ for the cost in both cases and
    we get $f(1, 1), f(1, 0) = f(0, 1),$ and $f(0, 0)$ with probabilities
    $\gamma_1\gamma_2$, $\gamma_1(1-\gamma_2) + \gamma_2(1-\gamma_1)$, and
    $(1-\gamma_1)(1-\gamma_2)$ respectively.
    Now, consider a different policy that
    plays arm $1$ and if it sees $1$, it plays $1$ again; otherwise it plays $2$.
    The posterior utility of this policy is higher by
    $\gamma_1(\gamma_1^+ - \gamma_2)(f(1, 1) - f(1, 0))$, which is strictly higher than $0$ unless $f(1, 1) = f(1, 0)$. If $f(1, 1) = f(1, 0)$ however, then the policy that plays arm $1$ and then skips if it sees $1$, playing arm $2$ otherwise, would have an posterior utility which is higher by $\gamma_1\cost$.

    \paragraph{Case III.}
    The posterior utility equals
    \begin{align*}
        -\cost+
        \gamma_2 f(1, 0) +
        (1-\gamma_2)(-\cost + \gamma_1f(0, 1))
        =
        -\cost+
        (\gamma_2 + \gamma_1 - \gamma_1\gamma_2) f(0, 1) - (1-\gamma_2)\cost.
    \end{align*}
    If we were to play arm $1$, skipping if we get $1$ and playing arm $2$ if we get $0$, then we would get the posterior utility
    \begin{align*}
        -\cost+
        \gamma_1 f(1, 0)
        + (1-\gamma_1)(-\cost + \gamma_2f(0, 1))
        =
        -\cost+
        (\gamma_1  + \gamma_2 - \gamma_1\gamma_2)f(1, 0) - (1-\gamma_1)\cost.
    \end{align*}
    Since $\gamma_1 > \gamma_2$, we have
    $-(1-\gamma_1) > -(1-\gamma_2)$, which means the alternative policy has strictly higher utility.

    \paragraph{Case IV.}
    The posterior utility equals
    \begin{align*}
        -\cost + \gamma_2(-\cost + \gamma_1f(1, 1) + (1-\gamma_1)f(1, 0))
    \end{align*}
    The alternative policy of playing $1$ first, then playing it again if we get $1$ has utility
    \begin{align*}
        -\cost + \gamma_1(-\cost + \gamma_1^+f(1, 1) + (1-\gamma_1^+)f(1, 0) )
    \end{align*}
    Since $\gamma_1^+ \ge \gamma_1$ and $f(1, 1) \ge f(1, 0)$, this is at least
    \begin{align*}
        -\cost + \gamma_1(-\cost + \gamma_1f(1, 1) + (1-\gamma_1)f(1, 0) )
    \end{align*}
    The posterior utility of the alternative minus the posterior utility of the original equals
    \begin{align*}
        (\gamma_1 - \gamma_2)
        (-\cost + \gamma_1f(1, 1) + (1-\gamma_1)f(1, 0) )
        .
    \end{align*}
    If $(-\cost + \gamma_1f(1, 1) + (1-\gamma_1)f(1, 0) ) > 0$, we are done.
    Otherwise, the policy's posterior utility is at most $-\cost$, which means it loses to the policy which skips both rounds.
\end{proof}

%% file: sec-proof-main.tex
\section{Proof of \Cref{thm:m_gen_symm}: Symmetric Functions}
\label{sec:proof-main}

In this section, we prove \Cref{{thm:m_gen_symm}}.
Observe that symmetric functions,
since the input is binary,
the value of
$f(x_1, \dots, x_m)$ depends only on
$\sum_{i} x_i$. Therefore, throughout the section
we abuse notation and right
$f(x)$ for $x \in \{0, 1, \dots, m\}$
to denote $f(x_1, \dots, x_m)$ where
exactly $x$ values in $(x_1, \dots, x_m)$ are equal to $1$.

We proceed with a proof.
For any $i \ge 1$, define
\begin{align}
    \rho_{i} =
    \prod_{\ell=0}^{i}
    \frac{\alpha_{1, t-1}}{\alpha_{1, t-1} + \beta_{1, t-1} + \ell}
    \label{eq:def_rho}
\end{align}
We will assume that, for any
$i, j, m$ satisfying $j \le i \le m$ we
have
\begin{align}
    \rho_{m-i}
    (f(m-i + j) - f(j))
    > (m-i)  \cost
    \text{ or }
    f(m-i + j) = f(j).
    \label{eq:assump_main}
\end{align}
As we will see, with a suitable choice of
$\costupbound$ in \Cref{thm:m_gen_symm}, \Cref{eq:assump_main} follows from the assumption
$\cost \le \costupbound$.

To better understand \Cref{eq:assump_main}, assume that
the agent plays for $i$ rounds out of
the $m$ possible rounds without using arm $1$ and observes $j$ draws equal to $1$. At this point,
If the agent skips for all $m-i$ rounds, it will always obtain
$f(j + m-i)$ for its score.
If the agent chooses arm $1$ for all of the remaining $m-i$ rounds,
its score will be between
$f(j)$ and $f(j + m-i)$.
Moreover, the probability that
the agent obtains $f(j + m-i)$ is lower bounded by
$\rho_{m-i}$.
The assumption ensures
that, if $f(j + m-i)$ is strictly greater than $f(j)$,  then the policy of playing arm $1$ for all of the remaining rounds strictly beats the policy of skipping for all of the remaining rounds.
This is because, the expected increase in score outweighs the cost of playing $m-i$ additional rounds.

We note that,
instanitating $(i, j) = (1, 1)$ and
$(i, j) = (1, 0)$, we obtain the following two inequalities as a consequence of the assumption
\begin{align}
    \rho_{m-1}
    \cdot (f(m) - f(1))
    > (m-1)\cost
    \text{ or }
    f(m)=f(1).
    \label{eq:assump1}
\end{align}
and
\begin{align}
    \rho_{m-1}
    \cdot (f(m-1) - f(0))
    > (m-1)\cost
    \text{ or }
    f(m-1)=f(0).
    \label{eq:assump2}
\end{align}

We further observe that
the assumption
is inductive
in the sense that
if it holds at the beginning
of the episode and we play arm $2$,
then looking at the rest of the episode as an episode of length $m-1$, then the assumption will still hold.
Formally, the following lemma holds

\begin{lemma}\label[lemma]{lm:assump_f_induct}
    Fix $m \ge 1$ and $f$.
    Assume that
    \Cref{eq:assump_main}
    holds for
    all $i, j$ sastifying
    $j \le i \le m$.
    For some values $i, j$
    such that $j \le i$
    define
    $g: \{0, \dots, m-i\} \to \R$
    as
    $f(\ell + j)$.
    The function
    $g$ satisfies
     \Cref{eq:assump_main}
    for all $i', j'$ satisfying
    $j' \le i' \le m - i$.
    Formally,
    for any $i', j'$ such that
    $j' \le i' \le m-i$ we have
    \begin{align*}
        \rho_{m-i-i'}
        (g(m-i-i' + j')
        - g(j'))
        > (m-i-i') \cost
        \text{ or }
        g(m-i-i' + j')
        = g(j')
    \end{align*}
\end{lemma}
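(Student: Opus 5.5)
The plan is a direct re-indexing: every instance of the inequality for $g$ is literally an instance of \Cref{eq:assump_main} for $f$ at shifted indices. The quantity $\rho_k$ from \Cref{eq:def_rho} depends only on $k$ and on arm~1's posterior parameters $\alpha_{1,t-1},\beta_{1,t-1}$, and not on $f$ or $m$. So the factor $\rho_{m-i-i'}$ in the conclusion is the same number whether we view it through $f$ or through $g$. We do not need to worry that ``$t-1$'' refers to the same round in both views: the lemma is a purely algebraic statement about the fixed numbers $\rho_k$.

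Fix $i,j$ with $j\le i\le m$, and fix $i',j'$ with $j'\le i'\le m-i$. Define the shifted indices $I := i+i'$ and $J := j+j'$. We check that they are admissible for \Cref{eq:assump_main}. First, $J = j+j' \le i+i' = I$, using $j\le i$ and $j'\le i'$. Second, $I = i+i' \le i+(m-i) = m$. Both hypotheses used here are given, so \Cref{eq:assump_main} applies at $(I,J)$.

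Next we translate that instance into the language of $g$. Here $m-I = m-i-i'$, so the hypothesis at $(I,J)$ reads
\begin{align*}
\rho_{m-i-i'}\bigl(f(m-i-i'+j+j') - f(j+j')\bigr) > (m-i-i')\cost \ \text{ or } \ f(m-i-i'+j+j') = f(j+j').
\end{align*}
By the definition $g(\ell)=f(\ell+j)$, we have $f(m-i-i'+j'+j) = g(m-i-i'+j')$ and $f(j'+j)=g(j')$. Substituting these gives exactly the claimed inequality for $g$.

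The last step is to confirm that the arguments of $g$ lie in its domain $\{0,\dots,m-i\}$. For the first argument, $0\le j'$ and $m-i-i'+j' \le m-i$ because $j'\le i'$. For the second, $0\le j'\le i'\le m-i$. There is no genuine obstacle in this argument. The only points needing care are this domain check and the observation that $\rho$ is independent of $f$, so the same $\rho_{m-i-i'}$ appears on both sides of the translation.
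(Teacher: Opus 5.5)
Your proposal is correct and follows the paper's proof: both note that the claimed inequality for $g$ at $(i',j')$ is literally \Cref{eq:assump_main} for $f$ at $(i+i',\,j+j')$. In both, admissibility follows from $j+j'\le i+i'\le m$. Your domain check for $g$ and your remark that $\rho$ does not depend on $f$ are details the paper leaves implicit.
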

\begin{proof}
    By definition of $g$, we need to show that
    \begin{align*}
        \rho_{m-i-i'}
        (f(m-i-i' + j' + j)
        - f(j' + j))
        > (m-i-i') \cost
        \text{ or }
        f(m-i-i' + j' + j)
        = f(j' + j)
    \end{align*}
    This is the same as \Cref{eq:assump_main}
    for $(i + i', j + j')$.
    Note that we have
    $j + j' \le i + i'$
    because
    $j \le i$ and $j' \le i'$.
    We also have $i' + i \le m$
    because $i' \le m- i$.
\end{proof}

We next focus on a special case which can effectively be thought of as $f(x_1, \dots, x_m) = \min\{x_1, \dots, x_m\}$.
\begin{lemma}\label[lemma]{lm:special_case_sym_and}
    Assume that
    $f: \{0, 1\}^{m} \to \R$ is symmetric, coordinate-wise increasing
    and $f(m-1) = f(0)$.
    Assume further that
    $\frac{\alpha_{2,t-1}+m-1}{\alpha_{2,t-1}+\beta_{2,t-1}+m-1}
        < \gamma_{1,t-1}.$
    Any deterministic posterior optimal policy has one of the following two forms.
    \begin{itemize}
        \item Skip all rounds.
        \item Start by playing arm $1$ and continue doing so while the observed rewards are $1$; skip all remaining rounds upon observing a reward of $0$.
    \end{itemize}
\end{lemma}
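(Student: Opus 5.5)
Since $f$ is symmetric and non-decreasing with $f(m-1)=f(0)=0$, we have $f(x)=0$ for all $x\le m-1$. Non-constancy then forces $f(m)>0$. So the score is $f(m)\cdot\ind{\text{all $m$ rewards equal }1}$. Any skipped round yields reward $0$, so a positive score requires playing a non-skip arm in every round and seeing reward $1$ each time. The plan is to compare an arbitrary deterministic posterior-optimal policy against the ``arm-1 streak'' policy $\pi^\star$: play arm $1$ while rewards are $1$, and skip forever after the first $0$.

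\textbf{Step 1 (irrelevance after a zero).} Once a reward of $0$ is observed, or a round is skipped, the score is $0$ regardless of future actions. Any further non-skip pull then costs $\cost>0$ for nothing. A posterior-optimal deterministic policy must therefore skip all remaining rounds on every reachable such history (otherwise modifying it there strictly improves $\postU_\pi$). Likewise, if the policy skips in some round on the all-ones path, it must skip everything, since after that the score is $0$. Hence the policy is described by the all-ones path alone: a prefix of $k$ non-skip pulls with arm choices $a_1,\dots,a_k$, followed by skipping. If $k<m$ the score is always $0$ and the cost is positive when $k\ge1$, so optimality forces $k=0$ (skip all) or $k=m$.

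\textbf{Step 2 (the arm sequence on the all-ones path must be all arm 1).} Suppose $k=m$ and $n_2\ge1$ of the $a_\ell$ equal $2$. The probability that all rewards are $1$ is
\[
\prod_{\ell=0}^{n_1-1}\frac{\alpha_{1}+\ell}{\alpha_1+\beta_1+\ell}\cdot\prod_{\ell=0}^{n_2-1}\frac{\alpha_{2}+\ell}{\alpha_2+\beta_2+\ell},
\]
which does not depend on the order. Here $\alpha_i,\beta_i$ denote the round-$(t-1)$ posterior parameters. Each arm-2 factor is at most $\frac{\alpha_2+m-1}{\alpha_2+\beta_2+m-1}<\gamma_1\le\frac{\alpha_1+\ell}{\alpha_1+\beta_1+\ell}$ by hypothesis; this uses that the ratio is increasing in $\ell$. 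Replacing every arm-2 entry by arm $1$ therefore strictly increases the success probability.

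It remains to compare expected costs. The expected cost is $\cost$ times the expected number of pulls, i.e.\ $\cost\sum_{s=1}^{m}\Pr{\text{first } s-1 \text{ rewards are } 1}$. Each of these prefix probabilities is a product of the same kind and also weakly increases under the replacement. So the cost goes up while the success probability goes up, and this is the main obstacle. I would handle it by swapping one arm-2 entry at a time from the end. Let the last arm-2 pull be at position $s$. Changing it to arm $1$ raises the success factor at $s$ from $p_2$ to $p_1>p_2$, and the later arm-1 factors only improve. The gain in expected score is at least $(p_1-p_2)\,P_{s-1}\,Q\,f(m)$, where $P_{s-1}$ is the prefix success probability and $Q\ge\rho_{m-s}$-like is the probability of the remaining ones. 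The extra cost is at most $(p_1-p_2)P_{s-1}(m-s)\cost$.

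The $P_{s-1}$ and $(p_1-p_2)$ factors cancel, so I need $Q\,f(m)>(m-s)\cost$. This is exactly the kind of inequality furnished by the standing assumption \eqref{eq:assump_main} (e.g.\ \eqref{eq:assump1}, or instance $(i,j)$ with $j=i=s$ together with $f(m)>f(s)$ when $s<m$; when $s=m$ the extra cost is $0$). Iterating the swap yields a strictly better policy whenever $n_2\ge1$, contradicting optimality. Thus the all-ones path uses arm $1$ throughout, giving $\pi^\star$.

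If $\eqref{eq:assump_main}$ cannot be invoked in this form, the fallback is a direct exchange argument. Compare the policy with its last arm-2 pull replaced by arm $1$ with the policy that truncates to skipping, and use that both competitors of an optimal policy must be weakly worse.
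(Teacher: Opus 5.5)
Your Step~1 is correct and matches the paper's reduction. After a $0$ or a skip the score is $0$, so an optimal deterministic policy is a fixed arm sequence on the all-ones path, and a partial prefix ($0<k<m$) is strictly worse than skipping everything.

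The gap is in Step~2. You obtain the strict improvement from swapping the last arm-$2$ pull from the inequality $Q\,f(m)>(m-s)\cost$, which you take from \eqref{eq:assump_main}. That is not a hypothesis of this lemma. The statement must hold for every $\cost\in(0,1]$, and the paper needs this: \Cref{lm:no_pull_sym_large_m_min_max} invokes the lemma for the case $f(0)=f(m-1)$ (e.g.\ $f=\min$) in \Cref{thm:m_2_symm}(b), where $\costupbound=1$ and no cost bound is available. Your fallback sentence does not supply a substitute.

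Separately, your cost estimate is wrong as written. Let $g_j:=\frac{\alpha_1+j}{\alpha_1+\beta_1+j}$ and let $k$ be the number of arm-$1$ pulls before position $s$. The increase in the probability of reaching round $r>s$ is $P_{s-1}\bigl(\prod_{j=k}^{k+r-s-2}g_j\bigr)(g_{k+r-s-1}-p_2)$, and this can exceed $P_{s-1}(p_1-p_2)$. For example, take $m=3$, $(\alpha_1,\beta_1)=(1,1)$, $(\alpha_2,\beta_2)=(450,550)$ (which satisfies the hypothesis), and the sequence $[2,1,1]$. The expected number of pulls rises from $1.675$ to $1.8\overline{3}$, more than your bound $2(0.5-0.45)=0.1$. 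The factor that actually cancels is $g_{k+m-s}-p_2$, which multiplies the score gain exactly. With it your inequality goes through, but still only under \eqref{eq:assump_main}.

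The missing idea is to get positivity from optimality itself rather than from a cost bound. Write the utility of a path sequence recursively as $u(x_i,\dots,x_m)=-\cost+x_i\,u(x_{i+1},\dots,x_m)$, with $u(x_m)=-\cost+x_m f(m)$. Here $x_i$ is the posterior success probability at position $i$.
\begin{itemize}
\item Compare with the policy that aborts at position $i$, whose continuation is $0$. Since the prefix probability is positive, optimality gives $u(x_i,\dots,x_m)\ge 0$ for all $i$.
\item Hence $u(x_i,\dots,x_m)>0$ for $i\ge 2$; otherwise the preceding tail would be at most $-\cost<0$.
\item With positive tails, backward induction shows $u$ is strictly increasing in each coordinate.
\end{itemize}
This is how the paper closes the argument, after using induction on $m$ to reduce to the sequence $[2,1,\dots,1]$.

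Combined with your exchange at the last arm-$2$ position $s$, this gives a proof with no induction on $m$. Let $V$ and $V'$ be the values of the all-arm-$1$ continuation after position $s$ before and after the swap, with $V=V'=f(m)$ if $s=m$. The position-$s$ term changes from $-\cost+p_2V$ to $-\cost+g_kV'$. We have $V>0$ by positivity, $V'\ge V$ by monotonicity, and $g_k>p_2$. So the swap is a strict improvement for every $\cost>0$.
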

\begin{proof}
    The case of $m=1$ follows from \Cref{cl:1_beats_2_last_round}.
    We therefore focus on $m\ge 2$.
    We prove the claim using induction, assuming that it holds for $m-1$.
    Throughout, we say a policy chooses \emph{abort} at some round if plays skip for that round and all future rounds of the episode.

    We assume without loss of generality that $f(0)=0$.
    We first observe that if a deterministic posterior optimal policy ever sees the reward $0$ then it must skip all future rounds. This holds because,
    $f(m-1)=f(0)$ which means future rewards will not change score.
    Since the cost is strictly positive, the policy should skip.

    We next claim that if a deterministic posterior optimal policy does not abort from the beginning (i.e., it playes either arm $1$ or $2$ in round $1$) then it must keep playing either arm $1$ or $2$ as long as it observes the reward $1$.
    To see why, assume for contradiction that it aborts on round $i$ if the rewards from rounds $1$ to $i-1$ are all $1$.
    Since we have already shown that the policy
    would abort if it sees a reward of $0$ during rounds $1$ to $i-1$, then
    it follows that the policy \emph{always} aborts after round $i$. This means however that
    its score (i.e., $f(r_t, \dots, t_{t + m - 1})$) is always $0$.
    As such, its utility is always negative since it needs to pay a cost of $\cost$ for playing the first round. It follows that this policy is strictly worse than skipping all rounds.

    Given the above claim,
    if a deterministic policy is posterior optimal and it does not skip all rounds,
    then there is a sequence
    $\seq$
    of $m$ values in $\{1, 2\}$ such that, as long as the previously observed rewards are $1$, in round $i$ the policy should pull the arm specified
    by the $i$-th value in $\seq$ and, upon observing a reward of $0$, the policy should skip all future rounds.

    Define
    $g^{+}: \cbr{0, \dots, m-1} \to \R$
    as
    $g^{+}(\ell) = f(\ell + 1)$.
    If the posterior optimal policy starts by playing arm $1$, we claim that it will keep playing arm $1$ throughout the rest of the episode as long as the rewards equal to $1$. This follows from the induction hypothesis.
    Specifically, after playing the first arm $1$ if the policy observes the reward $1$,
    then the
    remainder of the episode can be thought of as instance of the same problem with episode length $m-1$ and the function $g^+$ instead of $f$.
    Note that
    we have $g^+(m-2) = f(m-1) = f(1) = g(0)$.
    Moreover,
    \begin{align*}
    \frac{\alpha_{2,t}+m-2}{\alpha_{2,t}+\beta_{2,t}+m-2}
    =
    \frac{\alpha_{2,t-1}+m-2}{\alpha_{2,t-1}+\beta_{2,t-1}+m-2}
    <
    \frac{\alpha_{2,t-1}+m-1}{\alpha_{2,t-1}+\beta_{2,t-1}+m-1}
        < \gamma_{1,t-1} < \gamma_{1, t}.
    \end{align*}
    Therefore, the preconditions of the induction hypothesis hold and
    the policy should either play only arm $1$ or skip all future rounds.
    As mentioned however, since the policy has already played once $1$, it should not skip, finishing the proof.

    Therefore, if the policy plays arm $1$ in the first round, the lemma is proved.
    Similarly, if the policy skips the lemma is proved because, since skipping a round leads to reward $0$ for that round, the policy should skip for future rounds.
    To finish the proof, it suffices to show that any deterministic posterior optimal policy will not start by playing arm $2$.

    Assume (for contradiction) that a posterior optimal policy starts by playing arm $2$. We  claim
    that the associated sequence for this policy should be
    $[2, 1, \dots, 1]$; i.e., the policy starts with arm $2$ and uses arm $1$ for all subsequent rounds.
    This again follows from the induction hypothesis.
    As before, we have an episode of length $m-1$ with $g^+$.
    The preconditions also hold because
    \begin{align*}
    \frac{\alpha_{2,t}+m-2}{\alpha_{2,t}+\beta_{2,t}+m-2}
    =
    \frac{\alpha_{2,t-1}+m-1}{\alpha_{2,t-1}+\beta_{2,t-1}+m-1}
        < \gamma_{1,t-1} =\gamma_{1, t}.
    \end{align*}
    It follows that
    the associated sequence is $[2, 1, \dots, 1]$.

    We proceed by defining some notation.
    For any values
    $x_1, \dots, x_m \in \{0, 1\}$,
    let $u(x_1, \dots, x_m)$
    denote the expected
    utility if
    in round $i$ we play an arm that, conditioned on the previous rounds, yields the reward $1$ with probability $x_i$.
    We can define $u(x_1, \dots, x_m)$
    recursively as
    \begin{align*}
        u(x_m) = -\cost + x_mf(m),
        \quad
        u(x_{i}, \dots, x_{m}) =
        -\cost + x_{i}u(x_{i+1}, \dots, x_{m}).
    \end{align*}
    Define
    $\gamma_{1, t-1}^{+, (i)}$ for $i \ge 1$ as
    the posterior mean of arm $1$ if we see $i$ rewards equal to $1$; i.e.,
    $\gamma_{1, t-1}^{+, (i)} = \frac{\alpha_{1, t-1} + i}{\alpha_{1, t-1} + \beta_{1, t-1} + i}$.
    For the remainder of the proof, we drop the dependence on $t-1$ in
    $\gamma_{1, t-1}$.
    Set
    $(x_1, \dots, x_m) = (\gamma_{2}, \gamma_{1}, \gamma_{1}^{+, (1)}, \dots, \gamma_{1}^{+, (m-2)})$.

    We will show that
    either
    $u(x_1, \dots, x_m) < 0$ or,
    $u(\gamma_1, \gamma_1^{+, (1)}, \dots, \gamma_{1}^{+, (m-1)}) > u(x_1, \dots, x_m)$.
    We claim that this finishes the proof.
    In the first case, the posterior utility for the policy starting with arm $2$ is negative and as such it is not posterior optimal because the posterior utility of skipping all rounds is $0$.
    In the second case,
    the policy playing arm $1$ in all rounds will beat the policy of starting with arm $2$.

    It remains to show that either
    $u(x_1, \dots, x_m) < 0$ or,
    $u(\gamma_1, \gamma_1^{+, (1)}, \dots, \gamma_{1}^{+, (m-1)}) > u(x_1, \dots, x_m)$.
    Assume that $u(x_1, \dots, x_m) \ge 0$;
    we need to show that $u(\gamma_1, \gamma_1^{+, (1)}, \dots, \gamma_1^{+, (m-1)}) > u(x_1, \dots, x_m)$.
    We first claim that for any $i > 1$,
    $u(x_i, \dots, x_m) > 0$.
    If this does not hold, then by definition of
    $u(.)$, we have
    $u(x_{i-1}, \dots, x_m) = - \cost + x_{i-1}u(x_i, \dots, x_m) < 0$. Repeating this reasoning for smaller values of $i$, we obtain that $u(x_1, \dots, x_m) < 0$, which is a contradiction.

    We next show via (backwards) induction that
    if $x'_1, \dots, x'_m$ satisfy
    $x'_i > x_i$ for all $i$, then
    for any $i$ we have
    $u(x'_i, x'_{i+1}, \dots, x'_m) >
    u(x_i, x_{i+1}, \dots, x_{m})$.
    If this is shown then setting $i=1$
    and using
    the fact that
    $\gamma_1 > \gamma_2$ and
    $\gamma_{1}^{+, (i)} > \gamma_1^{+, (i-1)}$
    finishes the proof.
    We start by verifying the base case of $i=m$. If $f(m) \le 0$, then
    $u(x_m) < 0$ which contradicts the previous claim. Therefore,
    $f(m) > 0$ which implies
    $u(x'_m) > u(x_m)$.
    Assuming the claim holds for $i+1$, we have
    \begin{align*}
        u(x'_i, \dots, x'_m)
        &=
        -\cost + x'_i
        u(x'_{i+1}, \dots, x'_{m})
        &\EqComment{Definition of $u(.)$}
        \\&>
        -\cost + x'_i
        u(x_{i+1}, \dots, x_m)
        &\EqComment{Since $x'_i > x_i \ge 0$ and induction hypothesis}
        \\&\ge
        -\cost +
        x_i
        u(x_{i+1}, \dots, x_m)
        &\EqComment{Since $u(.) > 0$}
        \\&=
        u(x_i, \dots, x_m),
        &\EqComment{Definition of $u(.)$}
    \end{align*}
    finishing the proof.

\end{proof}

\begin{lemma} \label[lemma]{lm:no_pull_sym_large_m}
    Assume that $f(\cdot, \cdot, \dots, \cdot): \{0, 1\}^m \to \R$ is symmetric, coordinate-wise increasing.
    \begin{math}
    \frac{\alpha_{2,t-1}+m-1}{\alpha_{2,t-1}+\beta_{2,t-1}+m-1}
        < \gamma_{1,t-1}.
    \end{math}
    Assume further that
    \Cref{eq:assump_main} holds.
    Arm $2$ will not be selected in round $t$.
\end{lemma}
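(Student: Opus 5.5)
We argue by induction on the episode length $m$, following the template of \Cref{lm:no_pull} (the base case $m=1$ is \Cref{cl:1_beats_2_last_round}, and $m=2$ is essentially \Cref{lm:no_pull}). We first dispose of the degenerate case $f(m-1)=f(0)$ via \Cref{lm:special_case_sym_and}, which already shows that no posterior optimal policy opens with arm $2$. So assume $f(m-1)>f(0)$; then \Cref{eq:assump2} gives $\rho_{m-1}(f(m-1)-f(0))>(m-1)\cost$. As in \Cref{lm:no_pull}, suppose some deterministic posterior optimal policy $\pi$ plays arm $2$ in the first round $t$ of the episode, and derive a contradiction. (If $t$ is not the first round, the same argument applies to the residual episode via \Cref{lm:assump_f_induct}.)

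\textbf{Step 1: after the first pull, arm $2$ is never used again.} Condition on $r_1\in\{0,1\}$. The remaining $m-1$ rounds form an episode of length $m-1$ with aggregation function $g(\ell)=f(\ell+r_1)$. This $g$ is symmetric and monotone, and by \Cref{lm:assump_f_induct} with $(i,j)=(1,r_1)$ it satisfies \Cref{eq:assump_main}. Arm $2$'s optimistic index only weakens:
\[
\frac{\alpha_{2,t}+m-2}{\alpha_{2,t}+\beta_{2,t}+m-2}\le\frac{\alpha_{2,t-1}+m-1}{\alpha_{2,t-1}+\beta_{2,t-1}+m-1}<\gamma_{1,t-1}=\gamma_{1,t}.
\]
So the induction hypothesis forbids arm $2$ in round $t+1$. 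Repeating this argument round by round, as in the sequence argument of \Cref{lm:special_case_sym_and}, forbids arm $2$ in every later round. Hence after round $t$, $\pi$ uses only arm $1$ and skip.

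\textbf{Step 2: the cost bound forces arm $1$ in the remaining rounds.} Next I would show that once we are inside the episode, and whenever future rewards can still raise $f$, the continuation should play arm $1$ in all remaining rounds rather than skip. The tool is \Cref{eq:assump_main}. Skipping the remaining $m-i$ rounds yields $f(j)$. Playing arm $1$ throughout yields at least $f(j)+\rho_{m-i}(f(m-i+j)-f(j))-(m-i)\cost$, where $\rho$ lower-bounds the probability that all remaining rewards equal $1$. This beats skipping unless $f(m-i+j)=f(j)$, in which case skipping everything is optimal. The difficulty is that a mixed policy (skip after some outcomes, play after others) must also be ruled out. I would handle this with a backward-induction argument: at each node, compare the value of the \emph{all-arm-$1$} continuation with that of any continuation that skips somewhere, using monotonicity of $f$.

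\textbf{Step 3: swap and improve.} After Steps 1--2, $\pi$ is \emph{arm $2$ first, then an arm-$1$-or-abort rule determined by $r_1$}. Consider the alternative $\pi'$ that plays arm $1$ first and then applies the analogous rule. By symmetry of $f$, the policy \emph{arm $1$ first, arm $2$ next, then the same continuation} has the same posterior utility as $\pi$ whenever the continuation does not depend on the order. After $\pi'$ plays arm $1$ and observes $r_1=1$, arm $2$ is again strongly posterior-bad for length $m-1$, so by induction it is strictly suboptimal to play it in round $2$. Replacing it with arm $1$, or with skip when $f$ is flat, gives a strict gain, exactly as in Case II of \Cref{lm:no_pull}.

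The cases where $\pi$'s continuation depends on $r_1$ (skip after one outcome, play after the other) mirror Cases I, III and IV of \Cref{lm:no_pull}. There I would compare against the arm-$1$-first policy using $\gamma_1>\gamma_2$, together with the monotone comparison of $u(\cdot)$ from \Cref{lm:special_case_sym_and}, under which higher success probabilities yield strictly higher utility when the continuation value is positive.

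The main obstacle is Step 2 combined with the case analysis of Step 3: the continuation can in principle be an arbitrary adaptive policy. Keeping it tractable relies on \Cref{eq:assump_main} to reduce every continuation to either \emph{play arm $1$ to the end} or \emph{abort}, so that only the $r_1$-dependent abort patterns remain to compare.
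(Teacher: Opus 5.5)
Your skeleton matches the paper's: induction on $m$, reduction to the first round of the episode, \Cref{lm:special_case_sym_and} for the degenerate case $f(m-1)=f(0)$, and a swap of the first two rounds when round $t+1$ always plays arm $1$. However, the structural reduction in Steps 1--2, on which Step 3 rests, is false.

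The induction hypothesis can be invoked only at round $t+1$, where arm $1$'s posterior is still $\gamma_{1,t-1}$. Once arm $1$ returns a $0$, its posterior mean drops to $\alpha_{1,t-1}/(\alpha_{1,t-1}+\beta_{1,t-1}+1)$, while arm $2$'s may have risen, so ``repeating round by round'' loses the precondition. Concretely, take $m=3$, $(\alpha_{1,t-1},\beta_{1,t-1})=(1,1)$ and $(\alpha_{2,t-1},\beta_{2,t-1})=(3.5,6.5)$. The hypothesis holds since $5.5/12<1/2$. After arm $2$ returns $1$ and arm $1$ returns $0$, the posterior means are $4.5/11>1/3$, so for $f=\text{sum}$ and small $\cost$ the best continuation of an arm-$2$-first policy plays arm $2$ in round $3$.

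Step 2 also overreaches. \eqref{eq:assump_main} only compares aborting with playing arm $1$ in every remaining round. It shows that aborting is strictly suboptimal at any node where $f$ can still increase, but it does not collapse continuations to ``all arm $1$ or abort''. For example, with $f(\ell)=\min(\ell,2)$ and $m\ge 4$, the optimal continuation stops once two successes are collected.

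Consequently ``arm $2$, then an arm-$1$-or-abort rule determined by $r_1$'' is not a valid description of $\pi$. Your swap is stated only for order-independent continuations. The $u(\cdot)$ monotonicity from \Cref{lm:special_case_sym_and} is specific to the abort-on-first-failure structure and does not transfer to general $f$.

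The missing point is that you never need to characterize the continuation. Use the induction hypothesis only to exclude arm $2$ in round $t+1$. Push skips to the end, so that a skip in round $t+1$ means abort, and split on the round-$(t+1)$ action:
\begin{itemize}
\item \emph{Always arm $1$.} Swap the first two rounds with a re-indexed continuation: after rewards $(r_1,r_2,r_3,\dots)$, act as $\pi$ would after $(r_2,r_1,r_3,\dots)$. This preserves posterior utility for any continuation. The induction hypothesis, applied after arm $1$ returns $1$, then yields a strict improvement.
\item \emph{Abort after $r_1=0$.} This is excluded directly by \eqref{eq:assump2}, or by \Cref{lm:special_case_sym_and} when $f(m-1)=f(0)$.
\item \emph{Abort after $r_1=1$, with $f(m)>f(1)$.} This is excluded by \eqref{eq:assump1}.
\item \emph{Arm $1$ after $r_1=0$ and abort after $r_1=1$, with $f(m)=f(1)$.} Compare with ``arm $1$ first, then arm $2$ after a $0$''. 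After two zeros the posterior is the same in either order, so the unknown continuation values coincide, and the difference in posterior utility is $(\gamma_1-\gamma_2)\cost>0$.
\end{itemize}
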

\begin{proof}[Proof of \Cref{lm:no_pull_sym_large_m}]

    As before, we will show that any deterministic policy playing arm $2$ is strictly inferior to some alternative deterministic policy that does not play arm $2$. This implies the lemma as any randomized policy should put weight $0$ on the deterministic policy playing arm $2$.

    We prove by induction on $m$. The case where $m=2$ is already done by \Cref{lm:no_pull}.
    Assuming the claim holds
    for $1, \dots, m-1$, we will show it holds
    for $m$ as well.
    We assume without loss of generality that
    $t$ is the first round of the episode.
    If this does not hold,
    then the lemma follows from the induction hypothesis.
    Concretely, the remainder of the episode is an episode of length
    $m' < m$.
    Assume $m' = m-i$ and
    that we have seen $j$ values of $1$ in the first $i$ episodes.
    Define
    $g: \{0, 1, \dots, m'\}$ as
    $g(\ell) = f(\ell + j)$.
    By \Cref{lm:assump_f_induct},
    the function $g$ satisfies
    \Cref{eq:assump_main}
    for $j' \le i' \le m'$.
    We additionally have
    \begin{align*}
    \frac{\alpha_{2,t-1}+m-1}{\alpha_{2,t-1}+\beta_{2,t-1}+m-1}
    \le
    \frac{\alpha_{2,t-1}+m'-1}{\alpha_{2,t-1}+\beta_{2,t-1}+m'-1}
        < \gamma_{1,t-1}.
    \end{align*}
    Therefore, the preconditions of the induction hypothesis hold and arm $2$ will not be played.
    Therefore, throughout the rest of the proof we assume that $t$ is not the first round of the episode.

    Assume for the sake of contradiction that we start the episode by playing arm $2$.
    Afterwards, the posterior optimal policy cannot play arm $2$ in the next round because of induction hypothesis.
    Formally,
    let $r_1$ denote the reward of the first round in the episode,
    define
    $g^+, g^-: \{0, 1, \dots, m-1\}$
    as $g^+(\ell) = f(\ell + 1)$ and
    $g^-(\ell) = f(\ell)$.
    the remainder of the episode can
    be seen as an episode of length $m-1$
    where $f$ is replaced
    with either
    $g^+$ or $g^-$ depending on whether $r_1=1$ or $r_1=0$.
    The preconditions of the induction hypothesis
    holds;
    we have
    $\gamma_{1, t} > \frac{\alpha_{2, t} + m-2}{\alpha_{2, t} + \beta_{2, t} + m-2}$
    because
    $\alpha_{2, t} \le \alpha_{2, t-1} + 1$,
    $\beta_{2, t} \ge \beta_{2, t-1}$,
    $\gamma_{1, t}=\gamma_{1, t-1}$
    and
    $\frac{\alpha_{2,t-1}+m-1}{\alpha_{2,t-1}+\beta_{2,t-1}+m-1}
        < \gamma_{1,t-1}$ by assumption.
    \Cref{eq:assump_main} for
    either $g^+$ or $g^-$ would also hold
    by \Cref{lm:assump_f_induct}.

    There are therefore $4$ possibilities for what the policy will do in the second round:
    \begin{itemize}
        \item Case I: always skip.
        \item Case II: always play arm $1$.
        \item Case III: if $r_1=0$, play arm $1$; if $r_1=1$ skip.
        \item Case IV: if $r_1=1$, play arm $1$; if $r_1=0$ skip.
    \end{itemize}

    As before, we say a policy chooses \emph{abort} at some round if plays skip for that round and all future rounds of the episode.
    We first claim that skip actions can be pushed to the end of the episode without loss of optimality, which in turn means that if the posterior optimal policy does skip the second round then it chooses to abort.
    Since the reward function is symmetric and skipping leads to a deterministic reward of $0$, if a policy skips some rounds and then chooses either arm $1$ or $2$ at some later point, its posterior utility is equal to an alternative policy that plays that arm first and then skips the same number of rounds. In other words, we can assume that the posterior optimal policy de-prioritizes skip moves: it only uses skip if it intends to play skip for the remainder of the episode.

    \paragraph{Case I.}
    Since the policy skips the remainder of the episode, it score is equal to $f(r_1) \in \{f(1), f(0)\}$.
    If $f(1) = f(0)$, then
    the policy that skips first round would do better. If $f(1) > f(0)$, then playing arm $1$ in the first round would lead to a strictly higher utility. In both cases, playing arm $2$ is strictly sub-optimal and the proof is complete.

    \paragraph{Case II.}
    Instead of playing arm $2$ and then $1$, let us assume we play arm $1$ and then $2$. We play the rest of the rounds the same way we would have normally;
    specifically,
    after observing
    the rewards $(r_{1}, \dots, r_{\ell})$
    for some $\ell \in [3, m]$,
    we choose the same arm as the posterior optimal policy would make after observing
    rewards
    $(r_{2}, r_{1}, r_3, \dots, r_{\ell})$.
    This gives the same posterior utility as the posterior optimal policy because we have merely switched the reward of the first two rounds.
    We note that the probability of observing
    the rewards
    $(r_1, r_2, \dots, r_m)$ under the new policy is the same as the probability of observing the rewards $(r_2, r_1, r_3, \dots, r_m)$ under the old policy.

    Since the original policy was posterior optimal, the current policy should be posterior optimal as well.
    We observe however that
    if $r_1=1$, then playing arm $2$ is not posterior optimal.
    This is because, if we look at the rest of
    the $m-1$ rounds,
    we effectively have an episode
    of length $m-1$ with
    $g^{+} : \cbr{0, \dots, m-1}$ instead of
    $f$ where $g^{+}(\ell) = f(\ell + 1)$.
    By induction, arm $2$ will not be played.
    Note that the preconditions of the induction hold.
    To show
    \begin{math}
        \frac{\alpha_{2, t} + m-2}{\alpha_{2, t} + \beta_{2, t} + m-2}
        < \gamma_{1, t},
    \end{math}
    observe that
    since we played arm $1$ in the first round we have
    $(\alpha_{2, t}, \beta_{2, t}) = (\alpha_{2, t-1}, \beta_{2, t-1})$
    and since $r_1=1$ we have
    $\gamma_{1, t} > \gamma_{1, t-1}$.
    Since
    $\frac{\alpha_{2, t} + m-2}{\alpha_{2, t} + \beta_{2, t} + m-2}
    \le
    \frac{\alpha_{2, t} + m-1}{\alpha_{2, t} + \beta_{2, t} + m-1}
    $
    and
    we we already have
    $
    \frac{\alpha_{2, t-1} + m-1}{\alpha_{2, t-1} + \beta_{2, t-1} + m-1}
    < \gamma_{1, t-1}
    $,
    the condition follows.
    \Cref{eq:assump_main} for $g^{+}(.)$ holds by \Cref{lm:assump_f_induct}.
    Specifically,
    defining $m' = m-1$ and
    $\rho'_{i} = \prod_{\ell=0}^{i} \frac{\alpha_{1, t}}{\alpha_{1, t} + \beta_{1, t} + \ell}
    $ similar to \Cref{eq:def_rho}, we have
    $\rho'_i \ge \rho_{i}$.
    Additionally, $\rho_{m' - i}$ and $g^+(.)$ satisfy \Cref{eq:assump_main} for $j \le i \le m'$ by \Cref{lm:assump_f_induct}.
    Since $\rho'_{m' - i} \ge \rho_{m' - i}$,
    it follows that
    $\rho'_{m' - i}$ and $g^+(.)$ also satisfy \Cref{eq:assump_main} and the preconditions of the induction hold.
    Therefore, by induction assumption,
    it is strictly better to choose an option other than arm $2$ in the second round if $r_1=1$.
    The event $r_1=1$ happens with strictly positive probability because $\gamma_{1, t-1} > \gamma_{2, t-1} \ge 0$.
    It follows that playing arm $1$ and then arm $2$ in the first two rounds cannot be posterior optimal, finishing the proof.

    \paragraph{Case III.}
    We first consider the case $f(m) > f(1)$. For small enough cost it is worth it to play some arm rather than skip.
    Specifically, the probability that we see $m-1$ draws of $1$ if we keep playing is at least $\rho_{m-1}$;
    so in expectation we get an extra $\rho_{m-1}(f(m) - f(1))$ in the score which outweights the cost
    by \Cref{eq:assump1}.

    If $f(m) = f(1)$, then if $r_1=0$ and we play arm $1$, we would also abort if we see $r_2=1$.
    Therefore,
    the utility is going to be
    \begin{align*}
        -\cost + \gamma_2f(1)
        + (1-\gamma_2)(
        -\cost + \gamma_1f(1) + (1-\gamma_1)X),
    \end{align*}
    for some value $X$.
    If we play arm $1$ and then $2$ if $r_1=0$ however,
    \begin{align*}
        -\cost + \gamma_1f(1)
        + (1-\gamma_1)(
        -\cost + \gamma_2f(1) + (1-\gamma_2)Y).
    \end{align*}
    for some value $Y$.
    We observer however that $X=Y$ because,
    having observed a reward of $0$ for both arm $1$ and $2$, the order in which the rewards are seen does not affect the posterior update and, as such, the posterior utility conditioned on the rewards.
    Letting $\estalt$ and $\estopt$ denote the posterior utility of the alternative policy and the posterior optimal policy respectively, we conclude that
    \begin{align*}
        \estalt - \estopt =
        (\gamma_1 - \gamma_2)(f(1) + \cost - f(1))
        = (\gamma_1 - \gamma_2)\cost
    \end{align*}
    Therefore $\estalt > \estopt$.

    \paragraph{Case IV.}
    If $f(m-1) > f(0)$,
    then the posterior optimal policy should not skip with $r_1=0$ which is a contradiction. Specifically, the probability that we get $m-1$ draws of $1$ next is at least $\rho_{m-1}$.
    If $f(m-1) > f(0)$, then
    by \Cref{eq:assump2}
    we have $\rho_{m-1}(f(m-1) - f(0)) > (m-1)\cost$ which means it is strictly better to play $m-1$ draws of arm $1$ instead of skipping all remaining rounds.
    Therefore, if $f(m-1) > f(0)$ then
    the current case will not occur which is a contradiction.
    The case where
    $f(m-1)=f(0)$ follows from
    \Cref{lm:special_case_sym_and}.

\end{proof}

We next prove a lemma that provides a conditional lower bound on $\rho_i$ which is independent of $t$. This allows to ensure \Cref{eq:assump_main} as long as the condition holds.
\begin{lemma}\label[lemma]{lm:sym_bound_implies}
    For any $t \ge 1$, if
    $\gamma_{1, t-1} \ge \gamma_{1, 0} -\Delta/4$ then
    \begin{align*}
        \rho_{i} \ge
        \prod_{\ell=0}^i
        \frac{\gamma_{1, 0} - \Delta/4}{1 + \frac{\ell}{\alpha_{1, 0} + \beta_{1, 0}}}
    \end{align*}
\end{lemma}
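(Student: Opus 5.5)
The plan is a factor-by-factor comparison. Write $\rho_i$ as a product of $i+1$ factors of the form
\[
\frac{\alpha_{1,t-1}}{\alpha_{1,t-1}+\beta_{1,t-1}+\ell}, \qquad \ell=0,\dots,i,
\]
and show that each factor dominates the corresponding factor on the right-hand side. Since all factors are nonnegative, the product inequality then follows by multiplying the per-factor inequalities.

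For a fixed $\ell$, divide numerator and denominator by $\alpha_{1,t-1}+\beta_{1,t-1}$. This gives
\[
\frac{\alpha_{1,t-1}}{\alpha_{1,t-1}+\beta_{1,t-1}+\ell}
= \frac{\gamma_{1,t-1}}{1+\frac{\ell}{\alpha_{1,t-1}+\beta_{1,t-1}}}.
\]
We then bound the numerator and the denominator separately.

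For the numerator, the hypothesis gives $\gamma_{1,t-1}\ge \gamma_{1,0}-\Delta/4$ directly.

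For the denominator, use the Beta--Bernoulli update from the preliminaries: $\alpha_{1,t-1}+\beta_{1,t-1}=\alpha_{1,0}+\beta_{1,0}+n_{1,t-1}$, where $n_{1,t-1}$ is the number of pulls of arm $1$ so far. Hence $\alpha_{1,t-1}+\beta_{1,t-1}\ge \alpha_{1,0}+\beta_{1,0}$. It follows that
\[
1+\frac{\ell}{\alpha_{1,t-1}+\beta_{1,t-1}} \le 1+\frac{\ell}{\alpha_{1,0}+\beta_{1,0}}.
\]

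Combining the two bounds, each factor is at least $\frac{\gamma_{1,0}-\Delta/4}{1+\ell/(\alpha_{1,0}+\beta_{1,0})}$. Taking the product over $\ell=0,\dots,i$ gives the claim.

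The one point needing care is nonnegativity of the lower-bound factors. This holds because $\Delta=\gamma_{1,0}-\gamma_{2,0}\le\gamma_{1,0}$, so $\gamma_{1,0}-\Delta/4>0$, and every denominator is positive. Beyond this, the argument is a direct calculation with no real obstacle.
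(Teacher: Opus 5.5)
Your proposal is correct and matches the paper's proof essentially step for step: rewrite each factor as $\gamma_{1,t-1}/(1+\ell/(\alpha_{1,t-1}+\beta_{1,t-1}))$, bound the numerator by the hypothesis and the denominator via $\alpha_{1,t-1}+\beta_{1,t-1}\ge\alpha_{1,0}+\beta_{1,0}$, then multiply over $\ell$. Your explicit check that the lower-bound factors are positive (since $\gamma_{1,0}-\Delta/4\ge 3\gamma_{1,0}/4>0$), which is needed to multiply the per-factor inequalities, is a detail the paper leaves implicit.
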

\begin{proof}
    Note that
    \begin{align*}
        \frac{\alpha_{1, t-1}}{\alpha_{1, t-1} + \beta_{1, t-1} + \ell}
        &=
        \frac{(\alpha_{1, t-1})/(\alpha_{1, t-1} + \beta_{1, t-1})}{
        (\alpha_{1, t-1} + \beta_{1, t-1} + \ell)/
        (\alpha_{1, t-1} + \beta_{1, t-1})
        }
        =
        \frac{\gamma_{1, t-1}}{1 + \frac{\ell}{\alpha_{1, t-1} + \beta_{1, t-1}}}
    \end{align*}
    Note however that
    $\alpha_{1, t-1} + \beta_{1, t-1} \ge \alpha_{1, 0} + \beta_{1, 0}$, which implies
    \begin{align*}
    \frac{1}{1 + \frac{\ell}{\alpha_{1, t-1} + \beta_{1, t-1}}}
    \ge
    \frac{1}{1 + \frac{\ell}{\alpha_{1, 0} + \beta_{1, 0}}}.
    \end{align*}
    Since
    $\gamma_{1, t-1} \ge \gamma_{1, 0} - \Delta/4$, we have
    \begin{align*}
        \frac{\alpha_{1, t-1}}{\alpha_{1, t-1} + \beta_{1, t-1} + \ell}
        \ge
        \frac{\gamma_{1, 0} - \Delta/4}{1 + \frac{\ell}{\alpha_{1, 0} + \beta_{1, 0}}}
        .
    \end{align*}
    Taking a product over $\ell \in [0, i]$ finishes the proof.
\end{proof}

Recall that $N\subp = \floor{(m-1)\frac{\beta_{2,0}}{\alpha_{2,0}}} + 1$ and define
\begin{align}
    \event_2 = \cbr{ \tape_{2,\ell} = 0 \text{ for all } \ell \in [N\subp] }.
    \label{def:event_2_gen_m_symm}
\end{align}
Define the \stableEvent{} event $\event_1$ as in \Cref{eq:def_ev1}.

\begin{lemma} \label[lemma]{lm:bounded_pulls_sym_m_ge_2}
Assume $\event_1$ and $\event_2$ hold.
Assume further that for any $i, j, m$ such that $j \le i \le m$ we have
\begin{align}
    \rbr{
    \prod_{\ell=0}^{m-i}
        \frac{\gamma_{1, 0} - \Delta/4}{1 + \frac{\ell}{\alpha_{1, 0} + \beta_{1, 0}}}
    }
    \rbr{
    f(m-i + j) - f(j)}
    >
    (m-i) \cost \text{ or }
    f(m-i + j) = f(j).
    \label{eq:assump_more_useful}
\end{align}
Then arm $2$ is pulled at most $N\subp$ times.
\end{lemma}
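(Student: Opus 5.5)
The plan mirrors the proof of \Cref{lm:bounded-pulls}, with \Cref{lm:no_pull_sym_large_m} in place of \Cref{lm:no_pull}.

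Suppose, for contradiction, that arm $2$ is pulled at least $N\subp+1$ times, and let $t$ be the round of the $(N\subp+1)$-th pull. Since $\event_1$ holds, \Cref{lm:event1-mart} gives $\gamma_{1,t-1}\ge\gamma_{1,0}-\Delta/4$. Then \Cref{lm:sym_bound_implies} gives
\[
\rho_{m-i}\ \ge\ \prod_{\ell=0}^{m-i}\frac{\gamma_{1,0}-\Delta/4}{1+\frac{\ell}{\alpha_{1,0}+\beta_{1,0}}}
\qquad\text{for every relevant } i .
\]
So the hypothesis \eqref{eq:assump_more_useful} implies \Cref{eq:assump_main} at round $t$. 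Either the strict inequality transfers, because $\rho_{m-i}$ is at least the product and $f(m-i+j)-f(j)\ge0$, or the equality branch holds verbatim.

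Next I would show that arm $2$ is strongly posterior-bad at round $t-1$, in the sense needed by \Cref{lm:no_pull_sym_large_m}. Arm $2$ has been pulled exactly $N\subp$ times before round $t$, and under $\event_2$ all of these rewards were $0$. Hence $\alpha_{2,t-1}=\alpha_{2,0}$ and $\beta_{2,t-1}=\beta_{2,0}+N\subp$. Writing $\alpha=\alpha_{2,0}$ and $\beta=\beta_{2,0}$, the claim is
\[
\frac{\alpha+m-1}{\alpha+\beta+N\subp+m-1}\ \le\ \frac{\alpha}{\alpha+\beta}=\gamma_{2,0}.
\]
Cross-multiplying reduces this to $(m-1)\beta\le \alpha N\subp$. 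This holds strictly because $N\subp=\floor{(m-1)\beta/\alpha}+1>(m-1)\beta/\alpha$.

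Combining the two steps with $\gamma_{2,0}=\gamma_{1,0}-\Delta<\gamma_{1,0}-\Delta/4\le\gamma_{1,t-1}$ gives
\[
\frac{\alpha_{2,t-1}+m-1}{\alpha_{2,t-1}+\beta_{2,t-1}+m-1}<\gamma_{1,t-1}.
\]
Together with \Cref{eq:assump_main}, this lets \Cref{lm:no_pull_sym_large_m} apply, so arm $2$ is not selected in round $t$. That contradicts the choice of $t$, so arm $2$ is pulled at most $N\subp$ times.

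The main point needing care is that \Cref{lm:no_pull_sym_large_m} is stated with $\rho_i$ defined from the posterior at the round in question. Its internal induction also handles rounds that are not the first of the episode, via \Cref{lm:assump_f_induct}. So I must verify that \eqref{eq:assump_more_useful} gives \Cref{eq:assump_main} with $\rho$ evaluated at round $t-1$, wherever in its episode $t$ falls. \Cref{lm:sym_bound_implies} gives exactly this, since its bound holds for any $t$ under $\event_1$.
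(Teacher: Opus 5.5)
Your proposal is correct and follows the paper's proof essentially step for step. The steps are: contradiction at the $(N\subp+1)$-th pull; \Cref{lm:event1-mart} with \Cref{lm:sym_bound_implies} to turn \eqref{eq:assump_more_useful} into \Cref{eq:assump_main}; the cross-multiplication reducing to $(m-1)\beta < \alpha N\subp$; and finally \Cref{lm:no_pull_sym_large_m}. Your remark that \Cref{lm:sym_bound_implies} holds at every round, so the position of $t$ within its episode does not matter, is a valid clarification that the paper leaves implicit.
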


\begin{proof}
    Suppose, for contradiction, that arm $2$ is pulled $N\subp + 1$ times. Let $t$ be the round of the $(N\subp + 1)$-st pull.
    Since $\event_1$ holds, Lemma~\ref{lm:event1-mart} implies
    \begin{align*}
        \gamma_{1,t-1} \ge \gamma_{1,0} - \Delta/4.
    \end{align*}
    Additinonally,
    \Cref{lm:sym_bound_implies} and
    \Cref{eq:assump_more_useful}
    imply \Cref{eq:assump_main}.

    \begin{claim}\label{cl:arm_2_optimism_sad_large_m}
        \begin{math}
            \frac{\alpha_{2,t-1} + m-1}{\alpha_{2,t-1} + \beta_{2,t-1} + m-1}
            \le \gamma_{2,0}.
        \end{math}
    \end{claim}
    \begin{proof}
        Since arm $2$ has been pulled $N\subp$ times with all rewards equal to $0$ (because $\event_2$ holds), we have
        \begin{align*}
            \alpha_{2,t-1} = \alpha_{2,0}, \quad \beta_{2,t-1} = \beta_{2,0} + N\subp.
        \end{align*}
        Therefore,
        \begin{align*}
            \frac{\alpha_{2,t-1} + m-1}{\alpha_{2,t-1} + \beta_{2,t-1} + m-1}
            = \frac{\alpha_{2,0} + m-1}{\alpha_{2,0} + \beta_{2,0} + N\subp + m-1}.
        \end{align*}
        Let $\alpha = \alpha_{2,0}$ and $\beta = \beta_{2,0}$. It suffices to show
        \begin{align*}
            \frac{\alpha + m-1}{\alpha + \beta + N\subp + m-1} < \frac{\alpha}{\alpha + \beta}.
        \end{align*}
        Cross-multiplying and simplifying:
        \begin{gather*}
            (\alpha + m-1)(\alpha + \beta) < \alpha \rbr{\alpha + \beta + N\subp + m-1} \\
            \alpha (\alpha + \beta)
            + (m-1)(\alpha + \beta)
            <
            \alpha(\alpha + \beta) +
            \alpha(N\subp + m - 1)
            \\
            (m-1)\beta <
            N\subp \alpha
        \end{gather*}
        which holds strictly because $N\subp > (m-1) \beta / \alpha$.
    \end{proof}

    Since $\gamma_{1,0}- \gamma_{2,0} = \Delta > 0$, it follows that
    \begin{align*}
        \frac{\alpha_{2,t-1} + m-1}{\alpha_{2,t-1} + \beta_{2,t-1} + m-1}
        \le \gamma_{2, 0}
        < \gamma_{1,0} - \Delta/4 \le \gamma_{1,t-1}.
    \end{align*}
    By Lemma~\ref{lm:no_pull_sym_large_m}, arm $2$ cannot be played in this round, contradicting the assumption that it is played more than $N\subp$ times.
\end{proof}

{\subsection{Proof of \Cref{thm:m_gen_symm}}
\begin{proof}[Proof of \Cref{thm:m_gen_symm}]
    Define the constant
    $\costupbound$ to be the minimum of the following quantity over all $i, j$ satisfying $j \le i \le m$:
    \begin{align*}
        \begin{cases}
            \frac{1}{m-i}
        \rbr{
            \prod_{\ell=0}^{m-i}
                \frac{\gamma_{1, 0} - \Delta/4}{1 + \frac{\ell}{\alpha_{1, 0} + \beta_{1, 0}}}
            }(f(m-i + j) - f(j))
            \quad\text{if}
            \quad f(m-i + j) > f(j)
            \\
            1 \quad\text{otherwise.}
        \end{cases}
    \end{align*}
    Note that, since $f(m-i+j) = f(j)$ for $i=m$, the above constant is strictly positive. Additionally, it depends only on the prior and $f(\cdot)$ as required by the theorem statement.
    If $\cost < \costupbound$, then
    \Cref{eq:assump_more_useful} holds for all $i, j$ such that
    $j \le i \le m$.

    The remainder of the proof is similar to \Cref{thm:m_2_symm}.
    As before, we have
    $\Pr{\event_1} \ge \Delta/4$ by \Cref{lm:event1-prob}.
    As before, define $\tau$ such that $\Pr{\mu_{1} \notin (\tau, 1-\tau)} = \Delta/16$ and note that
    $\tau > 0$ since $\alpha_{1, 0}, \beta_{1, 0} > 0$.
    Taking union bound, we conclude that
    \begin{align*}
        \Pr{\event_1 \text{ and } \mu_1 \in (\tau, 1-\tau)} \ge
        \Delta/8.
    \end{align*}

    Define $\event_3$ as
    $\mu_2 \in (1-\nottau, 1-\nottau/2)$ where $\nottau = \frac{\tau}{2}$. The event holds with probability $\Theta\subp(1) > 0$ as before since the density function for arm $2$ is strictly positive.
    Conditioned on $\event_3$,
    we have $\event_2$ (see \Cref{def:event_2_gen_m_symm}) with probability at least
    $(\nottau/2)^{N\subp}$ which is
    $(\Theta\subp(1))^{m}$.
    If all of these hold, then
    $\mu_2 \ge \mu_1 + c\subp$ and
    $\failexpl$ happens.
    Concretely, we have $\mu_2 \ge 1-\nottau =1-\tau + \nottau \ge \mu_1 + \nottau$. Setting $c\subp=\nottau/2$,
    we have $\mu_2 \ge \mu_1 + c\subp$.
    We also have $\mu_1 \in (c\subp, 1-c\subp)$
    because $\mu_1 \in (\tau, 1-\tau)$
    and $\mu_2 \in (c\subp, 1-c\subp)$ because
    $\mu_2 \in (1-\nottau, 1-\nottau/2)$.
    Finally, since $\event_2$ and $\event_1$ hold and so does \Cref{eq:assump_more_useful}, by \Cref{lm:bounded_pulls_sym_m_ge_2} we conclude that arm $2$ is not played more than $N\subp$ times.
    Since arms $1$ and $2$ have independent priors and reward tapes, the probability of all of the mentioned events holding is at least
    $\Theta\subp(1)^m$ as required and the proof is complete.
\end{proof}
}

\input{sec-proof-max-min.tex}

%% file: sec-proof-max-min.tex
\section{Proof of \Cref{thm:m_2_symm}(b): function $f$ is $\max$ or $\min$}
\label{sec:proof-cost-max-min}
In this section, we consider the special
cases of the $\min$ and $\max$ functions and show that one can obtain \refprop{pr:fail} without an assumption on $\cost$.
In addition to removing this assumption, we greatly simplify the proof of \Cref{thm:m_gen_symm}.
The simplification is achieved because for these special cases, we can obtain \Cref{lm:no_pull_sym_large_m_min_max} below which is
a simpler version of \Cref{lm:no_pull_sym_large_m} and, similar to \Cref{lm:no_pull}, does not require an assumption on $\cost$. This allows us to directly use the proof structure of \Cref{thm:m_2_symm} without accounting for the extra complexities in the proof of \Cref{thm:m_gen_symm}.

To obtain \Cref{lm:no_pull_sym_large_m_min_max}, we use the fact that for both $\min$ and $\max$, seeing either a reward of $0$ or a reward of $1$ causes future rounds to be irrelevant. As such, upon observing a reward of $0$ for $\min$, or observing a reward of $1$ for $\max$, the posterior optimal strategy would skip all remaining rounds. Compared to the proof of \Cref{lm:no_pull_sym_large_m} which considered $4$ cases in induction, this already simplifies our proof as we only need to consider $2$ cases. Additionally, the extra structure greatly reduces the number of policies considered; e.g., for $\min$, a posterior optimal policy must follow a fixed sequence of arms as long as it sees the reward $1$ and must skip all remaining rounds upon seeing the reward $0$.

\begin{lemma} \label[lemma]{lm:no_pull_sym_large_m_min_max}
    Assume that $f(\cdot, \cdot, \dots, \cdot): \{0, 1\}^m \to \R$ is symmetric, coordinate-wise increasing,
    and
    either $f(1) = f(m)$ or $f(0) = f(m-1)$.
    Assume
    \begin{math}
    \frac{\alpha_{2,t-1}+m-1}{\alpha_{2,t-1}+\beta_{2,t-1}+m-1}
        < \gamma_{1,t-1}.
    \end{math}
    Arm $2$ will not be selected in round $t$.
\end{lemma}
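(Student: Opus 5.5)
We argue by induction on $m$, with the same skeleton as \Cref{lm:no_pull_sym_large_m} but without \Cref{eq:assump_main}. We show that any deterministic policy playing arm $2$ in round $t$ is strictly beaten by some deterministic policy that does not play arm $2$; this suffices because a randomized policy is a mixture of deterministic ones.

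We split into the two cases of the hypothesis. The case $f(m-1)=f(0)$ (``$\min$-like'') is exactly \Cref{lm:special_case_sym_and}. In that case every posterior-optimal deterministic policy either skips all rounds or plays arm $1$ while it keeps seeing $1$s. Either way arm $2$ is never selected, at round $t$ or at any later round of the episode. So the real work is the case $f(1)=f(m)$ (``$\max$-like'').

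For the $\max$-like case we mirror the proof of \Cref{lm:special_case_sym_and}.

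\textbf{Step 1 (reward $1$ ends the episode).} Once a reward $1$ is observed, no future reward can change the score, since $f(1)=f(m)$. Because the cost is positive, a posterior-optimal policy aborts after its first reward $1$. So while the policy has only seen zeros, it is described by a fixed sequence of actions.

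\textbf{Step 2 (skips come last).} By symmetry of $f$, skips can be pushed to the end. This means a skip only occurs as an abort.

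\textbf{Step 3 (the policy is a sequence $\seq$).} Combining Steps 1 and 2, the policy is a sequence $\seq\in\{1,2\}^k$ of non-skip pulls, made while the rewards are $0$, followed by an abort. Its posterior utility is
\[
f(1)\,\Pr{\text{some pull returns } 1}-\cost\cdot\Ex{\#\text{pulls}}.
\]
Both terms depend only on the sequence of posterior success probabilities along the all-zeros path.

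\textbf{Step 4 (round $t$ need not be first; WLOG assume it is).} If $t$ is not the first round of the episode, reduce to a shorter episode via $g(\ell)=f(\ell+j)$, as in \Cref{lm:no_pull_sym_large_m}. The hypothesis transfers: $g(1)=g(m')$ or $g(0)=g(m'-1)$ still holds, or else $g$ is constant and skipping is strictly optimal. The optimism condition transfers with $m'-1\le m-1$. Then apply induction.

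\textbf{Step 5 (induction forbids later arm-$2$ pulls).} Suppose round $t$ is the first round and the policy starts with arm $2$. After a $0$ from arm $2$, the remaining episode has length $m-1$ with $g^-(\ell)=f(\ell)$, and $g^-$ still satisfies $g^-(1)=g^-(m-1)$. The precondition holds because $\alpha_{2,t}=\alpha_{2,t-1}$ and $\beta_{2,t}$ is larger by one. So by induction arm $2$ is not pulled again, and $\seq=[2,1,\dots,1]$ with some length $k$.

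\textbf{Step 6 (exchange argument).} Compare with $\seq'=[1,2,1,\dots,1]$ or with $[1,1,\dots,1]$ of the same length $k$. Write $\gamma_1^{-,(i)}$ for arm $1$'s posterior mean after $i$ zeros.
\begin{itemize}
\item Swapping the first two pulls gives the same probability of getting at least one success. This is because the all-zeros probability is a product of Beta-Bernoulli predictive terms for the two independent arms, and that product does not depend on the interleaving order.
\item The expected cost strictly drops. The swapped policy stops after pull $1$ with probability $\gamma_1$, whereas the original stops after pull $1$ with probability $\gamma_2<\gamma_1$. Every later pull has the same continuation probability, again by exchangeability.
\end{itemize}
This is exactly the Case III computation in \Cref{lm:no_pull}. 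It gives $\estalt-\estopt\ge(\gamma_1-\gamma_2)\cost>0$, so a policy starting with arm $2$ is strictly suboptimal.

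The main obstacle I expect is making Steps 1--3 rigorous. I need to justify that after a skip the policy aborts without loss, and that the all-zeros path fully describes the policy. I also need the induction hypothesis to hold for $g^-$ with the updated posteriors, including when $\gamma_1$ has dropped within the episode. For that, note that before round $t$ nothing changes arm $1$. Within the episode, only arm-$2$ failures precede the later decision, so $\gamma_{1,t}=\gamma_{1,t-1}$ and the precondition $\frac{\alpha_{2,t}+m-2}{\alpha_{2,t}+\beta_{2,t}+m-2}<\gamma_{1,t}$ follows immediately.
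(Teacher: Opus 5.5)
Your approach matches the paper's. You send the case $f(0)=f(m-1)$ to \Cref{lm:special_case_sym_and}. For $f(1)=f(m)$ you induct on $m$, reduce to $t$ being the first round, and note that the policy aborts after its first reward $1$. You then use the induction hypothesis so that, after $r_1=0$, round $t+1$ is either arm $1$ or an abort. An abort is beaten by playing arm $1$ in place of arm $2$ (the paper's Case I). Playing arm $1$ is beaten by swapping the first two pulls, which saves $(\gamma_1-\gamma_2)\cost$ in expected cost (the paper's Case II).

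One step, however, does not hold as written. In Step 5 you conclude that arm $2$ is never pulled again, so $\seq=[2,1,\dots,1]$. The induction hypothesis only applies at round $t+1$, where indeed $\gamma_{1,t}=\gamma_{1,t-1}$; your last paragraph establishes only this. On the all-zeros path arm $1$ has already failed once before round $t+2$. Its posterior mean is then $\alpha_{1,t-1}/(\alpha_{1,t-1}+\beta_{1,t-1}+1)$, which can be far below $\gamma_1$, so the optimism precondition can fail.

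\emph{Example.} Take $m=3$, $(\alpha_{1,t-1},\beta_{1,t-1})=(1,0.01)$ and $(\alpha_{2,t-1},\beta_{2,t-1})=(7,1)$. The hypothesis holds, since $9/10<1/1.01$. After one failure of each arm, the last round compares $7/9$ for arm $2$ with $1/2.01$ for arm $1$, so the continuation plays arm $2$.

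You do not need the claim. Once both of the first two pulls fail, the posterior state is the same in either order. Hence $[2,1,s_3,\dots,s_k]$ and $[1,2,s_3,\dots,s_k]$, for an \emph{arbitrary} tail, have the same all-zeros probability and the same probability of reaching each pull from the third onward. Only the probability of reaching the second pull changes, from $1-\gamma_2$ to $1-\gamma_1$. The paper avoids the issue the same way: it never identifies the tail, and instead writes the continuation value after one failure of each arm as $X$. So weaken Step 5 to ``the second action on the zero path is arm $1$ or abort,'' and run Step 6 with a general tail.

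Step 6 also needs a separate line for $k=1$ (abort right after arm $2$ fails). There the gain from playing $[1]$ instead is $(\gamma_1-\gamma_2)f(1)$, not $(\gamma_1-\gamma_2)\cost$. It is positive because $f(1)=f(m)>f(0)=0$.
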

\begin{proof}
    The $f(0) = f(m-1)$ case follows from
    \Cref{lm:special_case_sym_and}. We therefore focus on the $f(1) = f(m)$ case.

    We prove by induction on $m$.
    For $m=2$, the claim follows from
    \Cref{lm:no_pull}.
    Assume the claim holds for $1, \dots, m-1$. We show it holds for $m$ as well.
    As before, we can assume w.l.o.g that $t$ is the first round of the episode. If this is not the case, then the lemma holds by induction hypothesis.

    Assume for contradiction that the posterior optimal policy starts by playing arm $2$. If $r_1=1$, then the policy should clearly skip all remaining rounds as $f(1)=f(m)$. If $r_1=0$, then invoking the induction hypothesis for the remainder episode of length $m$, we obtain that the policy should choose either skip or arm $1$. We consider the two cases separately.

    \textbf{Case I.} We assume that the policy skips for $r_1=0$. As in the proof of \Cref{lm:no_pull_sym_large_m} we assume w.l.o.g that the policy skips all future rounds since $f$ is symmetric and a policy can prioritize non-skip actions without reducing its posterior utility. Therefore, in this case we always have $r_2=0$ which implies
    the posterior utility equals
    \begin{align*}
        -\cost + \gamma_2 f(1).
    \end{align*}
    However, if we were to choose arm $1$ instead of $2$ in the first round, we would get the posterior utility
    $-\cost + \gamma_1 f(1)$. Since $f(1)=f(m) > f(1)$, the alternative policy has a strictly higher posterior utility, contradicting the initial optimality assumption.

    \textbf{Case II.}
    We assume that the policy plays arm $1$ for $r_1=0$.
    If $r_2=1$, then the policy should once again skip all remaining rounds.
    Let $X$ denote the optimal posterior policy for the $m-2$ episode
    with score function $g(.)$ defined as
    \begin{align*}
        f'(0) = 0, f'(1) = f'(2) = \dots f'(m-2) = f(m)
    \end{align*}
    and
    arm parameters
    \begin{align*}
        (\alpha'_{1}, \beta'_{1}) = (\alpha_1, \beta_1 + 1)
        ,\quad
        (\alpha'_{2}, \beta'_{2}) = (\alpha_2, \beta_2 + 1)
        .
    \end{align*}
    Then the optimal posterior utility for the original instance equals
    \begin{align*}
        -\cost + \gamma_2 f(1)
        + (1-\gamma_2)(-\cost + \gamma_1 f(1) + (1-\gamma_1) X),
    \end{align*}
    which can be rewritten as
    \begin{align*}
        -\cost(2-\gamma_2)
        + f(1)(\gamma_1 + \gamma_2 - \gamma_1\gamma_2)
        + (1-\gamma_1)(1-\gamma_2)X
        .
    \end{align*}
    Consider the following alternative policy however. We first choose arm $1$. If $r_1=1$ then we skip all remaining episodes. Otherwise, we choose arm $2$. If $r_2=1$ we again skip all remaining episodes. If $r_1=2$ however, we play the optimal policy for the length $m-2$ episode $(g', (\alpha'_{i}, \beta'_{i})_{i=1}^2)$. The alternative policy has posterior utility equal to
    \begin{align*}
        -\cost(2-\gamma_1)
        + f(1)(\gamma_1 + \gamma_2 - \gamma_1\gamma_2)
        + (1-\gamma_1)(1-\gamma_2)X
        .
    \end{align*}
    which exceeds the posterior utility of the posterior optimal policy by $\cost (\gamma_1 - \gamma_2) > 0$, contradicting the initial optimality assumption.
\end{proof}

Recall that $N\subp = \floor{(m-1)\frac{\beta_{2,0}}{\alpha_{2,0}}} + 1$.
Define the \stableEvent{} event $\event_1$ as in \Cref{eq:def_ev1}
and $\event_2$ as in \Cref{def:event_2_gen_m_symm}.

\begin{lemma} \label[lemma]{lm:bounded_pulls_sym_m_ge_2_min_max}
    Assume $\event_1$ and $\event_2$ hold.
    Assume further that either $f(1)=f(m)$ or $f(0) = f(m-1)$.
    Then arm $2$ is pulled at most $N\subp$ times.
\end{lemma}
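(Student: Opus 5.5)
The argument follows the proof of \Cref{lm:bounded_pulls_sym_m_ge_2} almost verbatim. The only change is that the no-pull step uses \Cref{lm:no_pull_sym_large_m_min_max} instead of \Cref{lm:no_pull_sym_large_m}. Since \Cref{lm:no_pull_sym_large_m_min_max} needs no condition on $\cost$, we no longer need \Cref{eq:assump_more_useful} or \Cref{lm:sym_bound_implies}. The extra hypothesis ``$f(1)=f(m)$ or $f(0)=f(m-1)$'' is passed directly to that lemma.

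We argue by contradiction. Suppose arm $2$ is pulled at least $N\subp+1$ times, and let $t$ be the round of the $(N\subp+1)$-st pull.

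First, $\event_1$ holds, so \Cref{lm:event1-mart} gives $\gamma_{1,t-1}\ge \gamma_{1,0}-\Delta/4$.

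Second, the first $N\subp$ pulls of arm $2$ all occurred before round $t$. By $\event_2$ they all returned $0$, so $\alpha_{2,t-1}=\alpha_{2,0}$ and $\beta_{2,t-1}=\beta_{2,0}+N\subp$.

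Third, we reuse the computation of \Cref{cl:arm_2_optimism_sad_large_m}. That claim depends only on $N\subp>(m-1)\beta_{2,0}/\alpha_{2,0}$, which holds by the definition of $N\subp$. It yields
\[
\frac{\alpha_{2,t-1}+m-1}{\alpha_{2,t-1}+\beta_{2,t-1}+m-1}<\gamma_{2,0}=\gamma_{1,0}-\Delta<\gamma_{1,0}-\Delta/4\le\gamma_{1,t-1}.
\]
This is exactly the posterior hypothesis of \Cref{lm:no_pull_sym_large_m_min_max}. The remaining hypotheses are that $f$ is symmetric and coordinatewise non-decreasing, and that $f(1)=f(m)$ or $f(0)=f(m-1)$. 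These hold by the standing assumptions and by assumption, so the lemma says arm $2$ is not selected in round $t$. This contradicts the choice of $t$.

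The one point to check is that \Cref{lm:no_pull_sym_large_m_min_max} applies at an arbitrary round $t$, not only at the first round of an episode. If $t$ is in the middle of an episode, the remaining rounds form an episode of length $m'<m$ with a shifted score function $g(\ell)=f(\ell+j)$. The posterior condition with $m'-1$ in place of $m-1$ is weaker than the one established above, so it still holds. The structural condition $f(1)=f(m)$ or $f(0)=f(m-1)$ must carry over to $g$. In its proof, \Cref{lm:no_pull_sym_large_m_min_max} handles this reduction itself via its induction, so here we only need to invoke it as stated. If stricter verification were needed, the check is direct. For $\max$- or $\min$-like $f$, after seeing a $1$ (resp. a $0$) the remaining rounds are irrelevant and the policy skips. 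Otherwise the shifted $g$ keeps the same structure.
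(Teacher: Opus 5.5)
Your proof is correct and follows the paper's argument: contradiction at the $(N\subp+1)$-st pull, $\gamma_{1,t-1}\ge\gamma_{1,0}-\Delta/4$ from $\event_1$, the optimistic-posterior bound from \Cref{cl:arm_2_optimism_sad_large_m}, and then \Cref{lm:no_pull_sym_large_m_min_max}. Your extra paragraph on mid-episode rounds is not needed here, since the paper handles that case inside the induction of \Cref{lm:no_pull_sym_large_m_min_max}, as you note; your check that the shifted $g$ still satisfies $g(1)=g(m')$ or $g(0)=g(m'-1)$ (possibly by being constant) is sound.
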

\begin{proof}
    Suppose, for contradiction, that arm $2$ is pulled $N\subp + 1$ times. Let $t$ be the round of the $(N\subp + 1)$-st pull.
    Since $\event_1$ holds, Lemma~\ref{lm:event1-mart} implies
    \begin{align*}
        \gamma_{1,t-1} \ge \gamma_{1,0} - \Delta/4.
    \end{align*}
    Invoking \Cref{cl:arm_2_optimism_sad_large_m} we obtain
    \begin{align*}
        \frac{\alpha_{2,t-1} + m-1}{\alpha_{2,t-1} + \beta_{2,t-1} + m-1}
            \le \gamma_{2,0}
    \end{align*}
    Since $\gamma_{1,0}- \gamma_{2,0} = \Delta > 0$, it follows that
    \begin{align*}
        \frac{\alpha_{2,t-1} + m-1}{\alpha_{2,t-1} + \beta_{2,t-1} + m-1}
        \le \gamma_{2, 0}
        < \gamma_{1,0} - \Delta/4 \le \gamma_{1,t-1}.
    \end{align*}
    By \Cref{lm:no_pull_sym_large_m_min_max}, arm $2$ cannot be played in this round, contradicting the assumption that it is played more than $N\subp$ times.
\end{proof}

We can now prove \Cref{thm:m_2_symm}(b).
\begin{proof}[Proof of \Cref{thm:m_gen_symm}(b)]
    The proof is similar to \Cref{thm:m_2_symm}.
    As before, we have
    $\Pr{\event_1} \ge \Delta/4$ by \Cref{lm:event1-prob}.
    As before, define $\tau$ such that $\Pr{\mu_{1} \notin (\tau, 1-\tau)} = \Delta/16$ and note that
    $\tau > 0$ since $\alpha_{1, 0}, \beta_{1, 0} > 0$.
    Taking union bound, we conclude that
    \begin{align*}
        \Pr{\event_1 \text{ and } \mu_1 \in (\tau, 1-\tau)} \ge
        \Delta/8.
    \end{align*}

    Define $\event_3$ as
    $\mu_2 \in (1-\nottau, 1-\nottau/2)$ where $\nottau = \frac{\tau}{2}$. The event holds with probability $\Theta\subp(1) > 0$ as before since the density function for arm $2$ is strictly positive.
    Conditioned on $\event_3$,
    we have $\event_2$ (see \Cref{def:event_2_gen_m_symm}) with probability at least
    $(\nottau/2)^{N\subp}$ which is
    $(\Theta\subp(1))^{m}$.
    If all of these hold, then
    $\mu_2 \ge \mu_1 + c\subp$ and
    $\failexpl$ happens.
    Concretely, we have $\mu_2 \ge 1-\nottau =1-\tau + \nottau \ge \mu_1 + \nottau$. Setting $c\subp=\nottau/2$,
    we have $\mu_2 \ge \mu_1 + c\subp$.
    We also have $\mu_1 \in (c\subp, 1-c\subp)$
    because $\mu_1 \in (\tau, 1-\tau)$
    and $\mu_2 \in (c\subp, 1-c\subp)$ because
    $\mu_2 \in (1-\nottau, 1-\nottau/2)$.
    Finally, since $\event_2$ and $\event_1$ hold, by \Cref{lm:bounded_pulls_sym_m_ge_2_min_max} we conclude that arm $2$ is not played more than $N\subp$ times.
    Since arms $1$ and $2$ have independent priors and reward tapes, the probability of all of the mentioned events holding is at least
    $\Theta\subp(1)^m$ as required and the proof is complete.
\end{proof}

%% file: sec-proof-f.tex
\section{Proof of \Cref{thm:m_2_gen}: $m=2$, Arbitrary Aggregation Function}
\label{sec:proof-f}

The proof follows the same overall structure as \Cref{thm:m_2_symm} but requires important changes that we discuss here. Firstly,
in order to obtain a condition that ensures arm $2$ is not pulled, we cannot rely on
\Cref{sec:lm_no_pull} as it only holds for symmetric functions. Indeed, somewhat surprisingly, without some assumption on the cost function, the lemma is actually false (see \Cref{sec:counter-example}). To overcome this, we use a modified result as formalized by \Cref{lm:no_pull_gen_f_advanced}. Importantly, the result also requires that the
$\frac{\alpha_{1, t-1}}{\alpha_{1, t-1} + \beta_{1, t-1} + 1} \ne \gamma_{2, t-1}$. As such, subsequent steps in the proof, appearing in \Cref{lm:bounded-pulls} in the previous proof, also become more involved to ensure the condition is met (see \Cref{lm:bounded-pulls_gen_f}).

\begin{lemma} \label[lemma]{lm:no_pull_gen_f_advanced}
    Assume that
    \begin{math}
        \frac{\alpha_{2,t-1}+1}{\alpha_{2,t-1}+\beta_{2,t-1}+1}
        < \gamma_{1,t-1}
    \end{math}
    and
    \begin{math}
        \frac{\alpha_{1,t-1}}{\alpha_{1,t-1}+\beta_{1,t-1}+1}
        \ne \gamma_{2,t-1}.
    \end{math}
    Assume further that
    either
    $\cost < \gamma_{1, t-1}\cdot(f(1, 1) - f(1, 0))$
    or $\cost > f(0, 1) - f(1, 0)$.
    Arm $2$ will not be selected in round $t$.
\end{lemma}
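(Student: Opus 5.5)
Following the proof of \Cref{lm:no_pull}, we show that any deterministic policy playing arm $2$ in round $t$ is strictly beaten by some deterministic policy that does not. We write $\gamma_1,\gamma_2$ for $\gamma_{1,t-1},\gamma_{2,t-1}$, and set
\[
\gamma_1^+=\frac{\alpha_{1,t-1}+1}{\alpha_{1,t-1}+\beta_{1,t-1}+1},\qquad
\gamma_1^-=\frac{\alpha_{1,t-1}}{\alpha_{1,t-1}+\beta_{1,t-1}+1}.
\]

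\textbf{Second round.} \Cref{cl:1_beats_2_last_round} is stated only for symmetric $f$, but its proof uses only monotonicity of $f(r_1,\cdot)$, so it carries over verbatim. Thus if $t$ is the second round of the episode, arm $2$ is not chosen.

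\textbf{First round.} Assume $t$ is the first round and arm $2$ is played there. By hypothesis, $\gamma_2^+:=\frac{\alpha_{2,t-1}+1}{\alpha_{2,t-1}+\beta_{2,t-1}+1}<\gamma_1$. So after either outcome of arm $2$, arm $1$ is posterior-better than arm $2$, and by the claim above arm $2$ is not chosen in round $2$. This leaves the same four cases for round $2$ as in \Cref{lm:no_pull}: always skip; always play arm $1$; play arm $1$ only if $r_1=0$; play arm $1$ only if $r_1=1$.

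Each case is compared with an alternative that plays arm $1$ first. In the round-$2$ decisions of these alternatives, the posterior of arm $2$ is unchanged, while arm $1$ moves to $\gamma_1^+$ or $\gamma_1^-$. The non-symmetric cost is that $f(1,0)$ and $f(0,1)$ can no longer be identified.

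\emph{Always skip.} Utility is $-\cost+\gamma_2 f(1,0)$. Switching to arm $1$ in round $1$ gains $(\gamma_1-\gamma_2)f(1,0)$, which is positive unless $f(1,0)=0$. In that case skipping both rounds strictly wins, since it avoids $\cost$.

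\emph{Play arm $1$ only if $r_1=1$.} This is handled as Case IV of \Cref{lm:no_pull}, which only uses $f(1,1)\ge f(1,0)$ and $\gamma_1^+\ge\gamma_1$.

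\emph{Always play arm $1$, and play arm $1$ only if $r_1=0$.} These are the cases where asymmetry bites, and where the cost conditions and the hypothesis $\gamma_1^-\neq\gamma_2$ are needed. Both orders "2 then 1" and "1 then 2" now give expected score
\[
\gamma_1\gamma_2 f(1,1)+\gamma_2(1-\gamma_1)f(1,0)+\gamma_1(1-\gamma_2)f(0,1)+(1-\gamma_1)(1-\gamma_2)f(0,0)
\]
only under a fixed assignment of rewards to positions. The two orders differ by $(\gamma_1-\gamma_2)(f(1,0)-f(0,1))$ in score, together with a cost term in the conditional case. So I would use a menu of alternatives that start with arm $1$ and then, conditional on $r_1$, pick the best of $\{1,2,\text{skip}\}$ for round $2$ under posteriors $\gamma_1^\pm,\gamma_2$. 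I would take the maximum of these alternatives and compare it with the arm-$2$-first policy.

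The two cost regimes split this comparison:
\begin{itemize}
\item If $\cost<\gamma_1(f(1,1)-f(1,0))$, then after $r_1=1$ from arm $1$, playing arm $1$ again beats skipping, since $\gamma_1^+\ge\gamma_1$. This pins down the round-$2$ behaviour of the alternative on $r_1=1$, and the gain then comes from $\gamma_1>\gamma_2$ exactly as in Case II of \Cref{lm:no_pull}.
\item If $\cost>f(0,1)-f(1,0)$, then the "risky" benefit of starting with arm $2$ and recovering with arm $1$ after a $0$ (the counterexample in \Cref{sec:counter-example}) is worth less than the cost saved by playing arm $1$ first and skipping after a $1$.
\end{itemize}
The hypothesis $\gamma_1^-\ne\gamma_2$ is there to rule out ties. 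When $\gamma_1^->\gamma_2$, arm $1$ remains the choice after its own failure; when $\gamma_1^-<\gamma_2$, arm $2$ becomes the choice after a failure of arm $1$. In each of these two sub-cases the alternative's round-$2$ behaviour is explicit, so the utility difference can be written as $(\gamma_1-\gamma_2)$ times a bracket, or as a sum of terms each signed by one of the cost conditions. Equality would make the difference exactly zero, with no strict gain.

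I expect the main obstacle to be the case "always play arm $1$ in round $2$" with $\gamma_1^-<\gamma_2$ and $f(0,1)>f(1,0)$. My first attempt there is to expand both utilities explicitly in the four values $f(\cdot,\cdot)$ and $\gamma_1,\gamma_1^\pm,\gamma_2$, and then verify positivity of the difference separately under each of the two cost conditions.
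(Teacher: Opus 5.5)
Your treatment of the second round, of the always-skip case, and of the ``arm $1$ only if $r_1=1$'' case is correct and matches the paper. The gap is in the two remaining cases. These carry the whole content of the lemma, you only sketch them, and the tools you assign to them are the wrong ones.

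For the policy that plays arm $2$ and then arm $1$ unconditionally, compare it with (arm $1$; arm $1$ if $r_1=1$; arm $2$ if $r_1=0$). The score difference is
\[
\gamma_1(\gamma_1^+-\gamma_2)\bigl(f(1,1)-f(1,0)\bigr)+(\gamma_1-\gamma_2)\bigl(f(1,0)-f(0,1)\bigr),
\]
and both policies pay exactly $2\cost$. So neither cost hypothesis can sign the possibly negative second term. Your appeal to ``exactly as in Case II of \Cref{lm:no_pull}'' also fails, because that case begins by swapping the two arms, which needs $f(1,0)=f(0,1)$.

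The missing idea is the total-expectation identity $\gamma_1=\gamma_1\gamma_1^++(1-\gamma_1)\gamma_1^-$.
\begin{itemize}
\item If $\gamma_1^-<\gamma_2$, the identity gives $\gamma_1(\gamma_1^+-\gamma_2)>\gamma_1-\gamma_2$. Together with $f(1,1)\ge f(1,0)$, the difference above is then at least $(\gamma_1-\gamma_2)\bigl(f(1,1)-f(0,1)\bigr)\ge 0$.
\item If $\gamma_1^->\gamma_2$, compare instead with arm $1$ in both rounds, revealing $r_2$ first. This gives a difference of $\gamma_1(\gamma_1^+-\gamma_2)\bigl(f(1,1)-f(0,1)\bigr)+(1-\gamma_1)(\gamma_1^--\gamma_2)f(1,0)\ge 0$.
\end{itemize}
Equality is broken by a cost-saving skip. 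Skip round $1$ when $r_1$ is irrelevant, or skip after $r_1=1$ when $f(1,1)=f(1,0)$. So the case you flag as the main obstacle is exactly where $\gamma_1^-\ne\gamma_2$ is used, and it needs no cost assumption at all. Your plan to verify positivity ``separately under each of the two cost conditions'' would not find this.

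The cost hypotheses are needed only for the policy (arm $2$; skip if $r_1=1$; arm $1$ if $r_1=0$).
\begin{itemize}
\item If $\cost>f(0,1)-f(1,0)$, your intuition is right. The alternative (arm $1$; skip if $r_1=1$; arm $2$ if $r_1=0$) gains $(\gamma_1-\gamma_2)\bigl(f(1,0)-f(0,1)+\cost\bigr)>0$.
\item If $\cost<\gamma_1\bigl(f(1,1)-f(1,0)\bigr)$, an arm-$1$-first alternative only works by passing through the previous case, and hence through the identity. The direct argument, which is the paper's, improves the policy in place: play arm $1$ instead of skipping after $r_1=1$. Arm $2$ was pulled in round $1$, so arm $1$'s posterior is still $\gamma_1$. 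The gain is $\gamma_2\bigl(\gamma_1(f(1,1)-f(1,0))-\cost\bigr)>0$, which contradicts posterior optimality directly.
\end{itemize}
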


\begin{proof}
    As in \Cref{lm:no_pull}, we assume that a deterministic posterior optimal policy plays arm $2$ and use this to obtain a contradiction.
    \Cref{cl:1_beats_2_last_round} still holds in this setting as well since it does not use the symmetry of $f$. As such, we assume as before that $t$ is the first round of the episode.
    We begin by defining some notation which will be used throughout the rest of the proof.
    We define $\gamma_1 = \gamma_{1, t-1}$ and
    $\gamma_2 = \gamma_{2, t-1}$ for conveneince.
    Define $\gamma_{1}^+
    = \frac{\alpha_{1, t-1} + 1}{\alpha_{1, t-1} + \beta_{1, t-1} + 1}$. To interpret this quantity, suppose we pull arm $1$ once and observe the reward $1$. The posterior distribution of arm $1$'s parameter then becomes $\textnormal{Beta}(\alpha_{1, t-1} + 1, \beta_{1, t-1})$.
    The quantity $\gamma_{1}^+$ corresponds to precisely the expectation of this posterior distribution.

    We similarly define
    $\gamma_{1}^- = \frac{\alpha_{1, t-1}}{\alpha_{1, t-1} + \beta_{1, t-1} + 1}$
    to be the expectation of the posterior of arm $1$'s parameter if we were to draw it once and observe the reward $0$.
    The quantities $\gamma_{2}^+$ and $\gamma_{2}^-$ are defined similarly.

    As before, \Cref{cl:1_beats_2_last_round} implies that arm $2$ will not be played in the second round of the episode because, regardless of the value of $r_1$, we have $\gamma_{2, t} < \gamma_{1, t}$.
    This means there are $4$ cases we need to consider for what happens in the second round based on the value of $r_1$.
    \begin{itemize}
        \item Case I: always skip.
        \item Case II: always play arm $1$.
        \item Case III: if $r_1=0$, play arm $1$; if $r_1=1$ skip.
        \item Case IV: if $r_1=1$, play arm $1$; if $r_1=0$ skip.
    \end{itemize}
    We analyze each of these $4$ cases separately.
    In each case, we obtain a contradiction by considering an alternative policy to the posterior optimal policy and show that
    its posterior utility exceeds that of the posterior optimal policy.
    Throughout, we use $\estopt$ and $\estalt$ to denote the posterior utility of the posterior optimal and alternative policies respectively,
    \paragraph{Case I.}
    In this case, consider the alternative policy of playing arm $1$ and then skipping.
    Since both strategies always have $r_2=0$, we obtain
    \begin{align*}
        \estopt =
        \Pr{r_1=1}f(1, 0)
        + \Pr{r_1=0}f(0, 0)
        = \gamma_1 f(1, 0),
    \end{align*}
    where we have used the fact
    that arm $1$ is played to plug in the value of $\Pr{r_1=1}$ and
    the fact that $f(0, 0)=0$ to remove the second term.
    Similarly,
    \begin{align*}
        \estopt =
        \Pr{r_1=1}f(1, 0)
        + \Pr{r_1=0}f(0, 0)
        = \gamma_2 f(1, 0).
    \end{align*}
    It follows that
    \begin{align*}
        \estalt - \estopt =
        (\gamma_1 - \gamma_2) (f(1, 0) - f(0, 0)),
    \end{align*}
    which is always non-negative. If equality holds, we must have $f(1, 0) = f(0, 0)$. In this case however,
    the policy of always skipping both rounds would strictly improve as it does not pay the cost $\cost$ in the first round.

    \paragraph{Case II.}
    We first establish the relationship between $\gamma_1^+, \gamma_1, \gamma_1^-$ given that both $\gamma_1^+$ and $\gamma_1^-$ effectively correspond to the conditional expectation of $\gamma_1$ under some event. Formally, assume we first sample some value $r$ from the distribution of arm $1$ and then look at the expectation of arm $1$'s parameter conditioned on $r$.
    If $r=1$, then the expectation equals $\gamma_1^+$ while if $r=0$, the expectation equals $\gamma_1^-$.
    Moreover, the probability of observing $r=1$ and $r=0$ are
    $\gamma_1$ and $(1-\gamma_1)$ respectively. If we were to not condition on the value of $r$ however, the expected value of arm $1$'s parameter is $\gamma_1$.
    By total expectation,
    it follows that
     \begin{align}
        \gamma_1
        &= \Ex{\mu_1}
        \notag
        \\&=
        \Pr{r=1}\Ex{\mu_1 \mid r=1}
        + \Pr{r=0}\Ex{\mu_1 \mid r=0}
        \notag
        \\&=
        \gamma_1\gamma_1^+ + (1-\gamma_1)\gamma_1^-
        \label{eq:gamma_plus_minus_relation}
    \end{align}

    We divide the proof into two further sub-cases based on the relationship between $\gamma_{1}^-$ and $\gamma_2$. Specifically, recall that we have $\gamma_2 < \gamma_2^+ < \gamma_1$ by assumption; i.e., even if we see reward $1$ from arm $2$, we would not prefer it to arm $1$. However, if we were to see the reward $0$ from arm $1$, then
    its expected parameter would drop from $\gamma_{1}$ to $\gamma_{1}^-$ which may potentially be lower than $\gamma_2$.
    We separately hande the cases $\gamma_1^-< \gamma_2$ and $\gamma_1^->\gamma_2$.
    Note that $\gamma_1^-\ne \gamma_2$ by assumption of the lemma.

    We start with the case $\gamma_1^- > \gamma_2$.
    In this case, consider the alternative policy that plays arm $1$ in both rounds.
    We compare the posterior utility of the posterior optimal policy with this new alternative policy.
    We start with the posterior optimal policy.
    The value of $r_1$ is sampled from the Bernoulli distribution with parameter $\gamma_2$ and $r_2$ is sampled from the Bernoulli distribution with parameter $\gamma_1$.
    Since the two arms are independent, the values are sampled independently which means $\estopt$ equals
    \begin{align*}
        &
        -2\cost +
        \sum_{(a, b)\in\{0, 1\}^2}
        f(a, b)\Pr{(r_1, r_2)=(a, b)}
        \\&=
        -2\cost +
        \sum_{(a, b)\in\{0, 1\}^2}
            f(a, b)\Pr{r_1=a}\Pr{r_2=b}
        \\&=
        -2\cost +
        \gamma_2\gamma_1f(1, 1) +
        \gamma_2(1-\gamma_1)f(1, 0)
        + (1-\gamma_2)\gamma_1f(0, 1)
        + (1-\gamma_2)(1-\gamma_1)f(0, 0).
    \end{align*}
    where for the last equality we have plugged in the values of $f(a, b)$ (note that $f(0, 0)=0$ by assumption).

    For the alternative policy, both draws are from arm $1$.
    While the two draws are independent if we condition on the value of arm $1$'s parameter, this value is not known; as such, the value we observe for the first draw would cause an update in the posterior distribution, affecting the second draw. More specifically,
    $r_1$ is sampled from the Bernoulli distribution with parameter
    $\gamma_1$ and, depending on whether $r_1=1$ or $r_1=0$, the value of $r_2$ is sampled from
    a Bernoulli with parameter equal to either $\gamma_1^+$ or $\gamma_1^-$.
    As such, the posterior utility of the policy equals
    \begin{align*}
    -2\cost +
        \gamma_1\gamma_1^+f(1, 1)
        + \gamma_1(1-\gamma_1^+)f(1, 0)
        + (1-\gamma_1)\gamma_1^-f(0, 1)
        +(1-\gamma_1)(1-\gamma_1^-)f(0, 0).
    \end{align*}

    In order to better compare with the posterior utility of the posterior optimal policy however, we consider a \enquote{backwards} view to this process. We first sample the value of $r_2$ and then sample the value of $r_1$ based on the updated posterior. This simplifies our analysis because both strategies sample arm $1$ in the second round; by realizing the value of $r_2$ first, we can more easily compare the strategies as $r_2$ would now be sampled from the same distribution.
    Formally, since $r_2$ is sampled from a Bernoulli with parameter $\gamma_1$ and, depending on the value of arm $r_2$, the value of $r_1$ is sampled from Bernoulli with parameter either $\gamma_1^+$ or $\gamma_1^-$, the posterior utility of the alternative policy $\estalt$ can be written as
    \footnote{One can alternatively obtain this alternative characterization
    using the initial characterization and applying \Cref{eq:gamma_plus_minus_relation}.
    }
    \begin{align*}
        -2\cost +
        \gamma_1\gamma_1^+f(1, 1)
        + \gamma_1(1-\gamma_1^+)f(0, 1)
        + (1-\gamma_1)\gamma_1^-f(1, 0)
        +(1-\gamma_1)(1-\gamma_1^-)f(0, 0).
    \end{align*}

    It follows that
    \begin{align*}
        \estalt - \estopt
        = \gamma_1(\gamma_1^+ - \gamma_2)(f(1, 1) - f(0, 1))
        + (1-\gamma_1)(\gamma_1^- - \gamma_2)(f(1, 0) - f(0, 0)).
    \end{align*}
    Since $f(., .)$ is never decreasing in its coordinates and $\gamma_1^+, \gamma_1^- > \gamma_2$, we have
    $\estalt \ge \estopt$. Moreover, if $\estopt=\estalt$,
    then we must have
    $f(1, 1) = f(0, 1)$ and $f(1, 0) = f(0, 0)$;
    in other words, the value of $r_1$ does not matter.
    In this case, the utility of both strategies can be simplified
    as
    \begin{align*}
        -2\cost +
        \gamma_1f(1, 1) + (1-\gamma_1)f(0, 0).
    \end{align*}
    Note however that if we were to skip the first round and play arm $1$ in the secound round, then the posterior utility would equal
    \begin{align*}
        -\cost + \gamma_1f(1, 1) + (1-\gamma_1)f(0, 0),
    \end{align*}
    which is strictly higher since $\cost > 0$. Therefore we have a contradiction as the posterior optimal policy is not maximizing posterior utility, finishing the proof in this case.

    We next handle the case that $\gamma_1^- < \gamma_2$.
    In this case, we consider a different alternative policy:
    we choose arm $1$ in the first round and choose arm $1$ again in the second round if $r_1=1$, choosing arm $2$ instead if $r_1=0$.
    The posterior utility of this policy equals
    \begin{align*}
        -2\cost +
        \gamma_1(\gamma_1^+f(1, 1) + (1-\gamma_1^+)f(1, 0))
        + (1-\gamma_1)(\gamma_2 f(0, 1) + (1-\gamma_2)f(0, 0))
        .
    \end{align*}
    Recall however that $f(0, 0)=0$ by assumption.
    Taking the difference with $\estopt$,
    \begin{align*}
        \estalt -\estopt
        &=
        f(0, 1)((1-\gamma_1)\gamma_2 - (1-\gamma_2)\gamma_1)
        +
        f(1, 1)\gamma_1(\gamma_1^+ - \gamma_2)
        + f(1, 0)(\gamma_1(1-\gamma_1^+) - \gamma_2(1-\gamma_1))
        \\&=
        -f(0, 1)(\gamma_1 - \gamma_2)
        +
        f(1, 1)\gamma_1(\gamma_1^+ - \gamma_2)
        + f(1, 0)(\gamma_1(1-\gamma_1^+) - \gamma_2(1-\gamma_1))
        \\&=
        -f(0, 1)(\gamma_1 - \gamma_2)
        +
        f(1, 1)\gamma_1(\gamma_1^+ - \gamma_2)
        + f(1, 0)(\gamma_1 - \gamma_2)
        - f(1, 0)\gamma_1(\gamma_1^+ - \gamma_2)
        \\&=
        (\gamma_1 - \gamma_2)(f(1,0) - f(0, 1))
        +
        \gamma_1(\gamma_1^+ - \gamma_2)(f(1, 1) - f(1, 0))
    \end{align*}
    Recall however that
    $\gamma_2 > \gamma_1^-$ by assumption.
    Plugging in \Cref{eq:gamma_plus_minus_relation},
    this implies
    \begin{align*}
        \gamma_1\gamma_1^+ +(1-\gamma_1)\gamma_2 >
        \gamma_1,
    \end{align*}
    which can be rewritten as
    \begin{align*}
        \gamma_1(\gamma_1^+ - \gamma_2) > \gamma_1 - \gamma_2.
    \end{align*}
    Therefore, since $f(1, 1) \ge f(1, 0)$,
    \begin{align*}
        \estalt - \estopt
        &\ge
        (\gamma_1 - \gamma_2)
        (f(1, 0) - f(0, 1))
        + (\gamma_1 - \gamma_2)(f(1, 1) - f(1, 0))
        \\&= (\gamma_1 - \gamma_2)(f(1, 0) - f(0, 1) + f(1, 1) - f(1, 0))
        \\&= (\gamma_1 - \gamma_2)(f(1, 1) - f(0, 1))
    \end{align*}
    which is always non-negative.
    Furthermore, in order to have equality, we must have
    $f(1, 1) = f(0, 1)$ and
    $f(1, 1) = f(1, 0)$; the latter inequality needs to hold
    since we lower bounded $\gamma_1(\gamma_1^+ -\gamma_2)$ with $\gamma_1 - \gamma_2$.
    If $f(1, 1) = f(1, 0)$ however, consider the slight alteration of the alternative policy we considered: if $r_1=1$ then skip the second round. With probability $\gamma_1$, this would save a cost of $\cost$ for exploring the second round. Therefore,  the posterior optimal policy is not maximizing posterior utility, giving us the contradiction we wanted.

    \paragraph{Case III.}
    By assumption of the lemma, we have either
    $\cost < \gamma_1(f(1, 1) - f(1, 0))$
    or $\cost > f(0, 1) - f(1, 0)$.

    In the first case, observe that if $r_1=1$, then
    it is better to play arm $1$ than to skip.
    Since $\gamma_2>0$, this means the policy that plays arm $1$ instead of skipping when $r_1=1$ would improve on the posterior optimal policy.

    In the second case, observe that
    by definition, we have
    \begin{align*}
        \estopt = -\cost +
        \gamma_2 f(1, 0)
        + (1-\gamma_2)(-\cost + \gamma_1 f(0, 1)).
    \end{align*}
    Consider the alternative policy of
    picking arm $1$ first and picking arm $2$ if $r_1=0$ an skipping in the second round otherwise.
    We have
    \begin{align*}
        \estalt =
        -\cost +
        \gamma_1 f(1, 0)
        + (1-\gamma_1)(-\cost + \gamma_2 f(0, 1)).
    \end{align*}
    It follows that
    \begin{align*}
        \estalt - \estopt =
        (\gamma_1 - \gamma_2) (f(1, 0) + \cost - f(0, 1)).
    \end{align*}
    Since $\cost \ge f(0, 1) - f(1, 0)$ and $\gamma_1 > \gamma_2$, we have $\estalt > \estopt$, finishing the proof.

    \paragraph{Case IV.}
    In this case, consider the alternative policy that plays arm $1$ in the first round and then skips if $r_1=0$ and plays arm $1$ again otherwise.
    The policy has posterior utility
    \begin{align*}
        -\cost + \gamma_1(-\cost + \gamma_1^+f(1, 1) + (1-\gamma_1^+)f(1, 0))
        + (1-\gamma_1)f(0, 0),
    \end{align*}
    while the posterior optimal policy has posterior utility
    \begin{align*}
        -\cost + \gamma_2(-\cost + \gamma_1 f(1, 1) + (1-\gamma_1)f(1, 0))
        + (1-\gamma_2)f(0, 0).
    \end{align*}
    We first claim that
    \begin{align*}
        -\cost + \gamma_1f(1, 1) + (1-\gamma_1)f(1, 0) > 0.
    \end{align*}
    If this is not the case, then the posterior utility of the posterior posterior optimal policy would be at most $-\cost$; this is not possible however since skipping both rounds would give the utility $0$ which is strictly better.
    Since $\gamma_1 > \gamma_2$, this means that
    \begin{align*}
        \estopt <
        -\cost + \gamma_1(-\cost + \gamma_1 f(1, 1) + (1-\gamma_1)f(1, 0)).
    \end{align*}
    Subtracting from $\estalt$ we obtain
    \begin{align*}
        \estalt - \estopt
        >
        \gamma_1(\gamma_1^+-\gamma_1)(f(1, 1) - f(1, 0)) \ge 0,
    \end{align*}
    finishing the proof.
\end{proof}

Now, define the \stableEvent{} event 
$\event_1$ as in \Cref{eq:def_ev1}.
Set $N_0 = \floor{\frac{\beta_{2,0}}{\alpha_{2,0}}} + 1$ and
$N_1 = \floor{\frac{\gamma_{2, 0}}{3\Delta/4}} + 1$.
Define
\begin{align*}
     \qquad
    \event_2 = \cbr{ \tape_{2,\ell} = 0 \text{ for all } \ell \in [N_0 + N_1]  }.
\end{align*}

\begin{lemma}\label[lemma]{lm:bounded-pulls_gen_f}
    Assume that $\event_1$ and $\event_2$ hold.
    Assume further that either
    $\cost < (\gamma_{1, 0} - \Delta/4)\cdot(f(1, 1) - f(1, 0))$
    or $\cost > f(0, 1) - f(1, 0)$.
    Arm $2$ is not selected more than $N_0 + N_1$ times.
\end{lemma}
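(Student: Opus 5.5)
Suppose, for contradiction, that arm $2$ is pulled $N_0+N_1+1$ times, and let $t$ be the round of that last pull. I will check that every hypothesis of \Cref{lm:no_pull_gen_f_advanced} holds at round $t$; that lemma then says arm $2$ is not selected in round $t$, a contradiction. As in \Cref{lm:bounded-pulls}, $\event_1$ and \Cref{lm:event1-mart} give $\gamma_{1,t-1}\ge \gamma_{1,0}-\Delta/4$. By $\event_2$, arm $2$ has produced $N_0+N_1$ zeros before round $t$. Hence $\alpha_{2,t-1}=\alpha_{2,0}$ and $\beta_{2,t-1}=\beta_{2,0}+N_0+N_1$.

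\textbf{Optimistic condition.} Since $N_0>\beta_{2,0}/\alpha_{2,0}$, the computation of \Cref{claim:arm_2_optimist_small}, with $N_0$ replaced by the larger $N_0+N_1$, gives
\[
\frac{\alpha_{2,t-1}+1}{\alpha_{2,t-1}+\beta_{2,t-1}+1}\le\gamma_{2,0}=\gamma_{1,0}-\Delta<\gamma_{1,0}-\Delta/4\le\gamma_{1,t-1}.
\]

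\textbf{Cost condition.} The cost hypothesis of the present lemma transfers directly. If $\cost<(\gamma_{1,0}-\Delta/4)(f(1,1)-f(1,0))$, then $\cost<\gamma_{1,t-1}(f(1,1)-f(1,0))$, because $\gamma_{1,t-1}\ge\gamma_{1,0}-\Delta/4$ and $f(1,1)\ge f(1,0)$. The alternative branch $\cost>f(0,1)-f(1,0)$ is identical in both lemmas.

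\textbf{Non-degeneracy condition.} This is the main obstacle: I must show $\gamma_1^-:=\frac{\alpha_{1,t-1}}{\alpha_{1,t-1}+\beta_{1,t-1}+1}\neq\gamma_{2,t-1}$. The extra $N_1$ zeros are there for this purpose. I will aim for the strict inequality $\gamma_{2,t-1}<\gamma_1^-$. Writing $n=\alpha_{1,t-1}+\beta_{1,t-1}$, we have $\gamma_1^-=\gamma_{1,t-1}\cdot\frac{n}{n+1}=\gamma_{1,t-1}-\frac{\gamma_{1,t-1}}{n+1}$. So a single zero from arm $1$ moves its posterior mean down by only $\gamma_{1,t-1}/(n+1)$.

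\emph{Step 1 (target for arm $2$).} Bound $\gamma_{2,t-1}=\frac{\alpha_{2,0}}{\alpha_{2,0}+\beta_{2,0}+N_0+N_1}$ from above. The idea is that the $N_1$ extra zeros shrink it from $\gamma_{2,0}$ by roughly a factor $(\alpha_{2,0}+\beta_{2,0})/(\alpha_{2,0}+\beta_{2,0}+N_1)$. The choice $N_1>\gamma_{2,0}/(3\Delta/4)$ should make this decrease large relative to the gap.

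\emph{Step 2 (target for arm $1$).} Bound $\gamma_1^-$ from below using $\gamma_{1,t-1}\ge\gamma_{1,0}-\Delta/4$. If arm $1$ has been pulled often, then $n$ is large and $\gamma_1^-$ is close to $\gamma_{1,t-1}$, which exceeds $\gamma_{2,0}$ by at least $3\Delta/4$. If arm $1$ has been pulled rarely, then $n$ is close to $\alpha_{1,0}+\beta_{1,0}$, and I get the prior-only bound $\gamma_1^-\ge(\gamma_{1,0}-\Delta/4)\frac{\alpha_{1,0}+\beta_{1,0}}{\alpha_{1,0}+\beta_{1,0}+1}$.

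\emph{Step 3 (compare).} Combine Steps 1 and 2 to get $\gamma_{2,t-1}<\gamma_1^-$. I will first try to write the comparison in terms of the gaps, namely $\gamma_{1,t-1}-\gamma_1^-$ versus $\gamma_{1,t-1}-\gamma_{2,t-1}\ge 3\Delta/4+(\gamma_{2,0}-\gamma_{2,t-1})$. That form seems the one for which the constant $\gamma_{2,0}/(3\Delta/4)$ in $N_1$ is designed.

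\textbf{Fallback.} If the direct bound on $\gamma_1^-$ does not close, I will use a weaker statement. Over the pulls of arm $2$ numbered $N_0+1,\dots,N_0+N_1+1$, the value $\gamma_{2,t'-1}$ strictly decreases with each pull. Meanwhile, equality $\gamma_1^-=\gamma_{2,t'-1}$ forces $\gamma_{1,t'-1}-\gamma_{2,t'-1}=\gamma_{1,t'-1}/(n+1)\le 1/(n+1)$, so it can only occur while $n+1\le 4/(3\Delta)$. I would then argue that arm $1$'s count cannot stay small across too many of these pulls while equality keeps recurring. In that case it suffices that at least one of these pulls satisfies $\gamma_1^-\ne\gamma_2$: \Cref{lm:no_pull_gen_f_advanced} forbids that pull, giving the contradiction at that round instead of at $t$.
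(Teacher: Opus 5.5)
You verified the optimistic condition and the cost condition correctly. The gap is in the non-degeneracy step.

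\textbf{Steps~1--3 cannot work.} The strict inequality $\gamma_{2,t-1}<\gamma_1^-$ you target does not follow from $\event_1$, $\event_2$ and the pull counts. On the arm-$2$ side, $\gamma_{2,0}-\gamma_{2,t-1}=\gamma_{2,0}\frac{N_0+N_1}{\alpha_{2,0}+\beta_{2,0}+N_0+N_1}$ is negligible when $\alpha_{2,0}+\beta_{2,0}$ is large. This is because $N_0,N_1$ depend only on $\beta_{2,0}/\alpha_{2,0}$, $\gamma_{2,0}$ and $\Delta$, so $N_1$ does not make this decrease large relative to anything. On the arm-$1$ side, nothing known at round $t$ forces arm~$1$ to have been pulled, and in that case $\gamma_1^-=\frac{\alpha_{1,0}}{\alpha_{1,0}+\beta_{1,0}+1}$ can be tiny. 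For example, $(\alpha_{1,0},\beta_{1,0})=(0.05,0.05)$ and $(\alpha_{2,0},\beta_{2,0})=(100,150)$ give $\Delta=0.1$, $N_0=2$, $N_1=6$ and $\gamma_{2,t-1}=100/258\approx 0.39$, but $\gamma_1^-=0.05/1.1\approx 0.045$. Worse, for suitably tuned priors the round-$t$ data are consistent with exact equality $\gamma_1^-=\gamma_{2,t-1}$. Such configurations are ruled out only by looking at earlier pulls, so no argument confined to round $t$ can succeed. The sign of $\gamma_1^--\gamma_2$ genuinely goes either way; that is why \Cref{lm:no_pull_gen_f_advanced} treats both sub-cases of Case~II and excludes only equality. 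The role of $N_1$ is a pigeonhole count, not a device to push $\gamma_2$ below $\gamma_1^-$.

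\textbf{The fallback is the right route, but its key step is missing.} The fallback is essentially the paper's approach, but as phrased it is inaccurate and quantitatively too weak. Arm~$1$'s count \emph{can} stay small, even constant, across many arm-$2$ pulls. What is needed is a count of \emph{values}. In the tape model, $\gamma_1^-$ at any time equals $\gamma_1^{-,k}:=\frac{\alpha_{1,0}+\sum_{\ell\le k}\tape_{1,\ell}}{\alpha_{1,0}+\beta_{1,0}+k+1}$, where $k$ is the current number of arm-$1$ pulls. So it depends only on the arm-$1$ tape and $k$. Let $\badset=\{\gamma_1^{-,k}: k\ge 0,\ \gamma_1^{-,k}\le\gamma_{2,0}\}$.

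Suppose equality held at every pull in your window, i.e.\ prior arm-$2$ counts $n\in[N_0,N_0+N_1]$; the other two hypotheses of \Cref{lm:no_pull_gen_f_advanced} hold there by your argument. Then the $N_1+1$ \emph{distinct} values $\frac{\alpha_{2,0}}{\alpha_{2,0}+\beta_{2,0}+n}\le\gamma_{2,0}$ would all lie in $\badset$. But for $\gamma_1^{-,k}\in\badset$, $\event_1$ gives $(\gamma_{2,0}+3\Delta/4)\frac{A+k}{A+k+1}\le\gamma_{2,0}$ with $A=\alpha_{1,0}+\beta_{1,0}$. This means $A+k\le\gamma_{2,0}/(3\Delta/4)<N_1$, so $k\le N_1-1$ and $|\badset|\le N_1$, a contradiction.

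Your estimate $n+1\le 4/(3\Delta)$ comes from bounding $\gamma_{1,t'-1}$ by $1$, and it does not suffice. It permits about $4/(3\Delta)$ values of $k$. That exceeds the window size $N_1+1\approx 4\gamma_{2,0}/(3\Delta)+2$ whenever $\gamma_{2,0}$ is small. You need the sharper bound obtained from $\gamma_1^-\le\gamma_{2,0}$, as above.
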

\begin{proof}

    Define $\gamma_{1}^{-, n}$
    as
    $\frac{\alpha_{1, 0} + \sum_{\ell \le n} \tape_{1, \ell}}{\alpha_{1, 0} + \beta_{1, 0} + n + 1}$.
    Define the set $\badset$ as
    \begin{align*}
        \badset =
        \cbr{
            \gamma_{1}^{-, n}:
            \gamma_{1}^{-, n} \le \gamma_{2, 0}
        }
    \end{align*}

    Recall that
    \begin{align*}
        \gamma_{2}^{(n)} = \frac{\alpha_{2, 0} + \sum_{\ell \le n} \tape_{2, \ell}}{\alpha_{2,0} + \beta_{2, 0} + n}.
    \end{align*}
    By definition of $\event_2$, this implies that
    \begin{align*}
        \gamma_{2}^{(n)} = \frac{\alpha_{2, 0}}{\alpha_{2,0} + \beta_{2, 0} + n}
        \text{ for all }
        n \le N_0 + N_1
        .
    \end{align*}

    We first claim that there exists some $n \in [N_0, N_0 + N_1]$ such that
    $\gamma_{2}^{(n)} \notin \badset$.
    Assume that this is not the case; i.e.,
    $\gamma_{2}^{(n)} \in \badset$ for all $n \in [N_0, N_0 + N_1]$.
    Since $\gamma_{2}^{(n)}$ are distinct for $n$, it follows that
    \begin{align*}
        N_1+ 1 =
        \abs{\cbr{ \gamma_2^{(n)}: n \in [N_0, N_0 + N_1], }}
        \le \abs{\badset}.
    \end{align*}
    Take some $n \ge N_1$
    such that
    $\gamma_{1}^{-, n} \in \badset$. Such an $n$ must exists
    because otherwise we would have
    $\abs{\badset} \le N_1$ (note that we also consider $n=0$ when forming $\badset$).
    By definition of $\badset$, we have
    $\gamma_{1}^{-, n} \le \gamma_{2, 0}$.
    This means
    $\frac{\alpha_{1, 0} + \sum_{\ell \le n}\tape_{1, \ell}}{\alpha_{1, 0} + \beta_{1, 0} + n + 1} \le \gamma_{2, 0}$.
    Note however that
    $
    \frac{\alpha_{1, 0} + \sum_{\ell \le n}\tape_{1, \ell}}{\alpha_{1, 0} + \beta_{1, 0} + n}
    \ge
    \gamma_{1, 0} - \Delta/4
    $
    by $\event_1$.
    Therefore,
    \begin{align*}
        \frac{\alpha_{1, 0} + \beta_{1, 0} + n + 1}{\alpha_{1, 0} + \beta_{1, 0} + n}
        \ge
        \frac{\gamma_{1,0} - \Delta/4}{\gamma_{2,0}}
        .
    \end{align*}
    It follows that
    \begin{align*}
        \frac{1}{\alpha_{1, 0} + \beta_{1, 0} + n}
        \ge
        \frac{\gamma_{1,0} - \Delta/4 - \gamma_{2, 0}}{\gamma_{2,0}}
        = \frac{3\Delta/4}{\gamma_{2, 0}}
        .
    \end{align*}
    Therefore,
    \begin{align*}
        n \le \frac{\gamma_{2, 0}}{3\Delta/4}.
    \end{align*}
    Since $n \ge N_1$, it follows that
    $N_1 \le \frac{\gamma_{2, 0}}{3\Delta/4}$.This contradicts the definition of $N_1$.
    Therefore, the initial assumption was incorrect and
    $\gamma_{2}^{(n)} \notin \badset$ for some $n \in [N_0, N_0 + N_1]$.

    We next claim that
    if $n \in [N_0, N_0 + N_1]$
    and
    $\gamma_{2}^{(n)} \notin \badset$, then arm $2$ will not be played more than $n$ times.
    Assume for the sake of contradiction that it is and let $t$ be the round at which arm $2$ is played for the $(n+1)$-th time.
    We claim that the preconditions \Cref{lm:no_pull_gen_f_advanced} holds.
    \begin{itemize}
        \item We need to show that
        $\frac{\alpha_{2, t-1} + 1}{\alpha_{2, t-1} + \beta_{2, t-1} + 1} < \gamma_{1, t-1}$.
        By \Cref{claim:arm_2_optimist_small},
        since $n \ge N_0 > \frac{\beta_{2, 0}}{\alpha_{2, 0}}$, we have
        $\frac{\alpha_{2, t-1} + 1}{\alpha_{2, t-1} + \beta_{2, t-1} + 1} \le \gamma_{2, 0}$.
        By $\event_1$  and \Cref{lm:event1-mart} we have
        $\gamma_{1, t-1} \ge \gamma_{1, 0} - \Delta/4 > \gamma_{2, 0}$ and we are done.
        \item Letting
        $\gamma_{1, t-1}^{-}$ denote
        $\frac{\alpha_{1, t-1}}{\alpha_{1, t-1} + \beta_{1, t-1} + 1}$,
        we need to show that
        $\gamma_{1, t-1}^{-} \ne \gamma_{2, t-1}$.
        Assume for contradiction that $\gamma_{1, t-1}^{-} = \gamma_{2, t-1}$. Since arm $2$ has been pulled at most $n \le N_0 + N_1$ times and $\tape_{2, \ell} = 0$ for $\ell \le N_0 + N_1$ by $\event_2$, we have
        $\gamma_{2, t-1} \le \gamma_{2, 0}$.
        This means that
        $\gamma_{1, t-1}^{-} \le \gamma_{2, 0}$ which in turn means that
        $\gamma_{1, t-1}^{-} \in \badset$. Since arm $2$ has been pulled exactly $n$ times before, it follows that
        $\gamma_2^{(n)} = \gamma_{2, t-1} = \gamma_{1, t-1}^{-} \in \badset$ which contradicts the choice of $n$.
        \item We need to show that either
        $\cost < \gamma_{1, t-1}(f(1, 1) - f(1, 0))$ or
        $\cost > f(0, 1) - f(1, 0)$.
        By the lemma statement, we have either
        $\cost < (\gamma_{1, 0} - \Delta/4)\cdot(f(1, 1) - f(1, 0))$
    or $\cost > f(0, 1) - f(1, 0)$.
    Since $\gamma_{1, t-1} \ge \gamma_{1, 0} - \Delta/4$ because of $\event_1$  and \Cref{lm:event1-mart}, the claim follows.
    \end{itemize}
    By \Cref{lm:no_pull_gen_f_advanced}, arm $2$ cannot be pulled at round $t$ and the proof is complete.
\end{proof}

This leads us to the following theorem.
\begin{proof}[Proof of \Cref{thm:m_2_gen}]
    The proof is similar to that of \Cref{thm:m_2_symm}.
    As before, we have
    $\Pr{\event_1} \ge \Delta/4$.
    Setting an appropriate $\tau > 0$ we have
    $\Pr{\mu_1 \notin (\tau, 1-\tau)} \le \Delta/16$. It follows that with probability at least $\Delta/8$ we have
    $\event_1$ and $\mu_1 \in (\tau, 1-\tau)$.

    Set $\nottau = \tau/2$ as before. As before
    we define $\event_3$ to be the event that $\mu_2 \in (1-\nottau, 1-\nottau/2)$.
    The probability of this is strictly positive (since the density function is strictly positive) and therefore it is at least $\Theta_{P}(1)$.
    Conditioned on $\event_3$ holding,
    $\event_2$ holds with probability at least
    $\nottau^{N_0 + N_1}$. Since
    $\nottau$ is decided based on the prior and so are $N_0, N_1$, this probability is $\Theta_{P}(1)$.
    So with probability $\Theta\subp(1)$ we have both $\event_3$ and $\event_2$ as well.

    Now, since the priors were independent, we have
    $\event_1, \event_2, \event_3$ and $\mu_1 \in (\tau, 1-\tau)$ with probability $\Theta\subp(1)$. Setting $c\subp=\nottau/2$ (note that $\nottau$ depends only on the prior) we have $\mu_2 \ge\mu_1 + c\subp$ and
    $\event_1, \event_2$ hold.
    We also have $\mu_1 \in (c\subp, 1-c\subp)$
    because $\mu_1 \in (\tau, 1-\tau)$
    and $\mu_2 \in (c\subp, 1-c\subp)$ because
    $\mu_2 \in (1-\nottau, 1-\nottau/2)$.

    Finally,
    the precondition of \Cref{lm:bounded-pulls_gen_f} on $\cost$ also holds by the assumption in the theorem.
    Specifically, either
    $f(1, 1) > f(1, 0)$, in which case
    $\cost < \costupbound = (\gamma_{1, 0} - \Delta/4)$, or
    $f(1, 1) = f(1, 0)$, in which case
    $\cost > 0 = f(1, 1) - f(1, 0) \ge f(0, 1) - f(1, 0)$.
    It follows that
    arm $2$ is not pulled more than $N_0 + N_1$ times, finishing the proof.
\end{proof}

%% file: sec-counter-example.tex
\section{Examples}
\subsection{Comparison with single round episodes}
\label{sec:counter-example-simpler-thing}

In this section, we demonstrate via a simple example why, unlike the single round episodes, the agents do not always pick the arm that has the highest current posterior mean.

Consider the following instance.
\begin{align*}
(\alpha_1,\beta_1) &= (2,9),
\qquad
(\alpha_2,\beta_2) = (1,5).
\end{align*}
We set $f(r_1, r_2) = \min(r_1, r_2)$ and $\cost = 10^{-3}$.

It is clear that
$\Ex{\mu_1} = \frac{2}{11} > \frac{1}{6} = \Ex{\gamma_2}$. However, 
a myopic agent will play arm $2$ on the first round and play arm $2$ in the second instance if the initial reward is $0$ and play skip otherwise.

To see why, observe that if we were to play arm $2$ once and observe the reward $1$, then its posterior value would be update to
$\frac{\alpha_2 + 1}{\alpha_2 + \beta_2 + 1} = \frac{2}{7}$. As such, the probability of observing two draws of $1$ by pulling arm $2$ is $\frac{1}{6}\cdot \frac{2}{7} = \frac{1}{22}$. In contrast, if we were to pull arm $1$ once and see a reward of $1$, then its posterior mean would equal
$\frac{3}{12} = \frac{1}{4}$, which means the probability of both rewards being $1$ is $\frac{2}{11}\cdot\frac{1}{4} = \frac{1}{22}$. It follows that choosing arm $2$ in the first round and then re-choosing it if we see a reward of $1$ (skipping otherwise) is strictly preferable to following a similar strategy with arm $1$. Note that, despite the positive cost, skipping from the beginning is not desirable given the low cost.

\subsection{Hard instance for non-symmetric functions}
\label{sec:counter-example}

In this section we show that without the assumption on cost, \Cref{lm:no_pull_gen_f_advanced} does not hold in general. Note that this is in contrast to \Cref{lm:no_pull} for symmetric functions which does not require an assumption on cost.

Consider the following instance.
\begin{align*}
(\alpha_1,\beta_1) = (2.54375,\,2.08125)
\quad\text{and}\quad
(\alpha_2,\beta_2) = (450,\,550).
\end{align*}
This implies that
\begin{align*}
\gamma_1 = \frac{\alpha_1}{\alpha_1+\beta_1} = 0.55
\quad\text{and}\quad
\gamma_2 = \frac{\alpha_2}{\alpha_2+\beta_2} = 0.45 .
\end{align*}
Moreover, letting
\[
\gamma_i^+ \coloneqq \frac{\alpha_i+1}{\alpha_i+\beta_i+1}
\quad\text{and}\quad
\gamma_i^- \coloneqq \frac{\alpha_i}{\alpha_i+\beta_i+1},
\]
we have
\begin{align*}
\gamma_1^+ = \frac{2.54375+1}{2.54375+2.08125+1} = 0.63,
\quad
\gamma_1^- = \frac{2.54375}{2.54375+2.08125+1} \approx 0.45,
\quad\text{and}\quad
\gamma_2^+ = \frac{450+1}{1000+1} \approx 0.45055 .
\end{align*}

Set $\cost = 0.13$.
Define the function $f$ as
\begin{align*}
    f(0, 0) = 0,\quad
    f(1, 0) = 1,\quad
    f(0, 1) = 1.2,\quad
    f(1, 1) = 1.2
\end{align*}

We claim that the posterior optimal policy is as follows.
In the first round choose arm $2$ and observe reward $r_1$. If $r_1=0$, then choose arm $1$ in the second round. Otherwise, skip the second round.

To see why, note the that the posterior utility of this policy equals
\begin{align*}
    -0.13 + 0.45 \times 1 + 0.55 \times (-0.13 + 0.55 \times 1.2) = 0.6115
\end{align*}

If we were to skip in the first round, then either we skip in the second round as well, obtaining a posterior utility of $0$, or we play arm $1$ or arm $2$. Since $\gamma_1 \ge \gamma_2$, playing arm $1$ would be better and the posterior utility in this case equals
\begin{align*}
    -0.13 + 0.55 \times 1.2 = 0.53 < 0.6115.
\end{align*}
As such, a posterior optimal policy cannot skip in the first round.

Next consider policies playing arm $1$ in the first round. If $r_1=1$, then the policy should skip because the expected gain from playing arm $1$ is at most
$0.63 \times (1.2 - 1.0) = 0.126$ which is less than the cost $0.13$. 
If $r_1=0$ however, the policy should play arm $2$ because the expected gain equals
$0.45 \times 1.2 = 0.54$ exceeding the cost. Note that playing arm $1$ would also lead to the same utility: since $r_1=0$, the probability of $r_1=1$ is $\gamma_{1}^{-} = 0.45$.
Therefore, if a policy chooses arm $1$ in the first round, then its posterior utility is maximized by choosing skip for $r_1=1$ and either arm $1$ or arm $2$ for $r_1=0$. The posterior utility of this policy equals
\begin{align*}
    -0.13 + 0.55 \times 1 + 0.45 \times (-0.13 + 0.45 \times 1.2) = 0.6045 < 0.6115.
\end{align*}
As such, the posterior optimal policy cannot play arm $1$ in the first round either.

Finally, we consider all policies choosing arm $2$ in the first round. If $r_1=1$, then the expected gain from playing arm $1$ is at most
$0.55 \times (1.2 - 1)=0.11$ which is less than the cost $0.13$. The expected gain from playing arm $2$ is even lower since
$\gamma_{2}^+ < \gamma_1$. As such, after observing $r_1=1$ the policy should skip. As for $r_1=0$, the expected gain from playing arm $1$ is $0.55 \times 1.2 = 0.66$ which far exceeds the cost of $0.13$. Therefore, the optimal policy is as described.

%% file: new-appendix-kb-1.tex
\section{Extension to $K>2$ Arms}
\label{sec:k-arms}

We extend \Cref{thm:m_gen_symm} to instances with $K>2$ arms.
The key observation is that if every arm beyond the first two looks sufficiently worse than arm~$2$ according to the prior, the extra arms are never selected under the same events used in the two-arm analysis, so the proof carries through essentially unchanged.

\paragraph{Setup.}
Fix an \EpiBSL instance with $K\ge 2$ arms, episode length $m\ge 2$, and symmetric aggregation function $f$.
We retain the notation of \Cref{sec:proof-main}: arms~$1$ and~$2$ are the \emph{main} arms with prior means $\gamma_{1,0}>\gamma_{2,0}$ and gap $\Delta=\gamma_{1,0}-\gamma_{2,0}>0$.
Each additional arm $a\in\{3,\dots,K\}$ has a $\mathrm{Beta}(\alpha_a,\beta_a)$ prior with prior mean $\gamma_{a,0}=\alpha_a/(\alpha_a+\beta_a)$.
We write $\alpha_{a,t}$ and $\beta_{a,t}$ for the posterior parameters at the start of round $t$ (so $\alpha_{a,0}=\alpha_a$, $\beta_{a,0}=\beta_a$).

Because arms $3,\dots,K$ may themselves be high-quality, bounding arm~$2$'s pull count alone does not establish failure: the agent could be doing well by pulling one of those arms instead.
We therefore use the following strengthened failure event.

\begin{definition}\label[definition]{def:fail_k}
Fix $c>0$, $N\in\N$, and $K\ge 3$.
Event $\failexpl_{c,N}$ occurs if all of the following hold:
\begin{itemize}
  \item \emph{(Bounded mean rewards)} $\mu_a\in(c,1-c)$ for every $a\in[K]$.
  \item \emph{(Arm~$2$ dominates)} $\mu_2\ge\mu_a+c$ for every $a\ne 2$.
  \item \emph{(Bounded pulls of arm~$2$)} Arm~$2$ is pulled at most $N$ times.
  \item \emph{(Arms $3,\dots,K$ never pulled)} Each arm $a\in\{3,\dots,K\}$ is pulled zero times.
\end{itemize}
\end{definition}

Together, these conditions unambiguously describe failure: the agent is stuck on arm~$1$ despite arm~$2$ being demonstrably better than every other arm, and it never explores arms $3,\dots,K$ at all.
For $K=2$, condition~(iv) is vacuous and conditions~(i)--(iii) reduce to \Cref{def:fail}, so \Cref{def:fail_k} is a strict generalization of the two-arm failure event.

\begin{theorem}\label{thm:k_arms_gen_symm}
Consider the setup above.
Suppose further that every arm $a\in\{3,\dots,K\}$ satisfies
\begin{align}
  \frac{\alpha_a + m - 1}{\alpha_a + \beta_a + m - 1} < \gamma_{2,0}.
  \label{eq:extra_arm_cond}
\end{align}
Then \refprop{pr:fail} holds, with $\failexpl_{c\subp,N\subp}$ referring to \Cref{def:fail_k}.
\end{theorem}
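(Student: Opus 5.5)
The plan is to reuse the proof of \Cref{thm:m_gen_symm} essentially verbatim and add one ingredient: a no-pull statement for the extra arms that is robust to the presence of arm~$2$. Fix $\costupbound$ exactly as in the proof of \Cref{thm:m_gen_symm}, so that \Cref{eq:assump_more_useful} holds. Keep the events $\event_1$ (the \stableEvent event of \Cref{eq:def_ev1}), $\event_2$ (the first $N\subp$ tape entries of arm~$2$ are $0$) and $\event_3$ ($\mu_2\in(1-\nottau,1-\nottau/2)$), together with $\mu_1\in(\tau,1-\tau)$. Add a new event $\event_4$: $\mu_a\in(c\subp,\,1-\tau)$ for every $a\ge 3$. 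It depends only on the independent priors of the extra arms, so it has probability $\Theta\subp(1)$, and it multiplies the previous lower bound $\Theta\subp(1)^m$ by a constant. On all these events, exactly as before, $\mu_2\ge 1-\nottau\ge\mu_a+\nottau$ for every $a\ne 2$, and all means lie in $(c\subp,1-c\subp)$. So the first two bullets of \Cref{def:fail_k} hold. It remains to show that arm~$2$ is pulled at most $N\subp$ times and that arms $3,\dots,K$ are never pulled.

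The key step is a $K$-arm version of \Cref{lm:no_pull_sym_large_m}. Call an arm $b\ne 1$ \emph{dominated at round $t$} if $\frac{\alpha_{b,t-1}+m-1}{\alpha_{b,t-1}+\beta_{b,t-1}+m-1}<\gamma_{1,t-1}$. The claim is: under \Cref{eq:assump_main}, no dominated arm is selected in round $t$. This holds even when some non-dominated arm (here arm~$2$, early on) is present.

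Under $\event_1$ and \Cref{lm:event1-mart}, every never-pulled arm $a\ge 3$ is dominated at every round. Indeed, by \Cref{eq:extra_arm_cond} its optimistic mean is below $\gamma_{2,0}<\gamma_{1,0}-\Delta/4\le\gamma_{1,t-1}$. A simple induction over rounds then shows the extra arms are never pulled, so their posteriors stay at the prior. Likewise, after $N\subp$ zero pulls, arm~$2$ is dominated by \Cref{cl:arm_2_optimism_sad_large_m}. So the argument of \Cref{lm:bounded_pulls_sym_m_ge_2} caps arm~$2$ at $N\subp$ pulls. \Cref{lm:event1-prob} and the product-of-independent-events computation then go through unchanged.

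To prove the $K$-arm no-pull lemma, I would run the induction on $m$ from \Cref{lm:no_pull_sym_large_m}. Suppose a posterior-optimal deterministic policy opens with a dominated arm $b$. By induction, which applies because pulling $b$ raises its optimistic mean by at most what the $m-1$ slack allows, no dominated arm is played in round~$2$. Round~$2$ thus uses arm~$1$, a non-dominated arm (arm~$2$), or skip, possibly depending on $r_1$. I would handle the options as follows.
\begin{itemize}
\item Skip and arm~$1$: treat these as in Cases~I--IV of \Cref{lm:no_pull_sym_large_m}, with skips pushed to the end by symmetry of $f$.
\item A non-dominated arm $c$ in round~$2$: use the symmetry of $f$ and the independence of the arms' posteriors to swap the pulls of $b$ and $c$, so the policy is equivalent to one that opens with $c$ and pulls $b$ later. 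Then either $b$ is eventually replaced by arm~$1$, a strict improvement by the case analysis, or $b$ is pulled at a later round, contradicting the induction hypothesis for the shorter remaining episode.
\end{itemize}

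The main obstacle is the second bullet: showing that this reordering preserves posterior utility while keeping the domination hypothesis intact after $c$ has been observed. This is exactly where I expect to need care. If it fails, my fallback is to prove directly that replacing every pull of $b$ by a pull of arm~$1$, while continuing with the policy's conditional decisions, weakly improves posterior utility. I would prove this by backward induction using the monotonicity of $u(\cdot)$ established in the proof of \Cref{lm:special_case_sym_and}, and strictness would come from the cost saving in the cases with ties.
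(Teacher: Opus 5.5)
\paragraph{The gap: the stronger no-pull lemma is not proved.}
Your global structure works. The extra event on $\mu_a$, the round-by-round induction keeping arms $3,\dots,K$ at their priors (hence dominated under $\event_1$), and the cap on arm~$2$ once it is dominated all go through. You also identify the real crux correctly: you need a no-pull lemma for a dominated arm $b$ that holds while a non-dominated arm is present. Before arm~$2$ has $N\subp$ zeros it need not be dominated; e.g.\ $\alpha_2=\beta_2=1$, $m=3$ gives optimistic mean $3/4$, which can exceed $\gamma_{1,0}$. But your proposal does not prove that lemma, and the sketch has two concrete holes.

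\begin{enumerate}
\item With a non-dominated arm $c$ available, the induction hypothesis no longer confines round~$2$ to arm~$1$ or skip. The round-$2$ choice can also depend on $r_1$, because the two continuations have different score functions $g^+\neq g^-$; an example is ``$c$ if $r_1=0$, arm~$1$ or skip if $r_1=1$''. Your swap of $b$ and $c$ gives an equally good policy only when $c$ follows both outcomes of $b$ (the analogue of Case~II). For the $r_1$-dependent mixtures there is no reordered equivalent, and Cases~I--IV do not cover them.
\item Even among the arm-$1$/skip cases, Case~IV with $f(m-1)=f(0)$ is closed in \Cref{lm:no_pull_sym_large_m} by citing \Cref{lm:special_case_sym_and}. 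That lemma's conclusion (skip everything, or play only arm~$1$) is false once a non-dominated arm exists: for $f=\min$ the sequence $2,2,\dots,2$ can be optimal. Its proof also forces the sequence $[b,1,\dots,1]$ by using that every non-$1$ arm is dominated. So this case needs a new argument too.
\end{enumerate}

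Your fallback does not fill these holes. ``Replace each pull of $b$ by a pull of arm~$1$ and keep the policy's conditional decisions'' does not define a comparable policy. The extra arm-$1$ pull changes arm~$1$'s posterior, and hence the law of every later arm-$1$ reward the policy reacts to. Moreover, the monotonicity of $u(\cdot)$ from \Cref{lm:special_case_sym_and} only compares fixed arm sequences in the min-like case.

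\paragraph{How the paper handles this.}
The paper avoids the issue inside its lemma: \Cref{lm:no_pull_k_arms} and \Cref{lm:special_case_k_arms} assume that \emph{every} arm in $\{2,\dots,K\}$ is dominated. Then round~$2$ is arm~$1$ or skip, and the two-arm case analysis transfers verbatim.

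However, Part~(b) of \Cref{lm:bounded_pulls_k_arms} applies this lemma at every round after checking \eqref{eq:no_pull_k_cond} only for the extra arm $a$, not for arm~$2$. In the early rounds, where arm~$2$ is not yet dominated, the lemma's hypothesis fails, which is exactly the regime you flagged. So the paper's written argument does not supply this step either. To complete your proof you still need an actual proof of the stronger no-pull lemma: one that handles $r_1$-dependent continuations through the non-dominated arm and replaces \Cref{lm:special_case_sym_and} in Case~IV.
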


The following corollary derives linear Bayesian regret via the same failure $\to$ strong failure $\to$ regret chain as in \Cref{sec:failure}; the proof is in \Cref{sec:k-arms-regret}.

\begin{corollary}\label{cor:breg_k}
Under the conditions of \Cref{thm:k_arms_gen_symm}, there exist constants $c_0>0$ and $N_0,\costup<\infty$ depending on $\prior$, $f$, $m$, and $K$, such that small enough cost $\cost<\costup$ implies
\[
  \BReg(T)\ge c_0\cdot(T-N_0)\qquad\text{for any episode }T.
\]
This is the same form as \Cref{cor:high_regret_fail}.
\end{corollary}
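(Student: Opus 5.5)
The plan is to follow the chain used for \Cref{cor:high_regret_fail}: failure $\to$ strong failure $\to$ linear regret. The $K$-arm analogues of \Cref{thm:weak-to-strong}, \Cref{thm:strong-failure-regret} and \Cref{lm:util-gap} need only small changes. First, \Cref{thm:k_arms_gen_symm} gives constants $c\subp,N\subp,q\subp$ and a threshold $\costup'$ such that $\cost<\costup'$ implies $\Pr{\failexpl_{c\subp,N\subp}}\ge q\subp^m=:\delta>0$, with the event as in \Cref{def:fail_k}.

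Next, apply the argument of \Cref{thm:weak-to-strong} essentially verbatim. Let $\mE$ be the event that all $\mu_a\in(c\subp,1-c\subp)$. Then any episode that considers arm $2$ plays it with probability at least $c\subp^m$, because some within-episode reward realization of length at most $m$ leads the policy to arm $2$, and each such realization has probability at least $c\subp^m$ under $\mE$. Running the same Azuma/\Cref{lm:alex-process} argument with the sequence $\wtilde Y_j$ yields $N'$, depending on $c\subp,N\subp,\delta,m$, such that
\[
\Pr{\text{at most } N' \text{ episodes consider arm } 2 \mid \failexpl_{c\subp,N\subp}}\ge 1/2 .
\]
Combined with $\mu_2\ge\mu_a+c\subp$ for all $a\ne2$, this gives a $K$-arm strong failure.

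For the regret step, redefine $\badPi$ as the set of per-episode policies that do not consider the good arm (here arm $2$). The deterministic bound $\Reg(T)\ge (T-N')\,\UGap(\muvec)$ then follows exactly as in \Cref{thm:strong-failure-regret}. The main task is the $K$-arm version of \Cref{lm:util-gap}: when $\mu_2\ge\mu_a+c$ for all $a\ne2$ and $\cost\le c'/(2m)$, we need $\UGap(\muvec)\ge c'/2$. I expect this to be the main obstacle, since a policy in $\badPi$ may adaptively mix several bad arms, so the comparison against the single policy $\pi'_1$ no longer applies directly.

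To handle this, compare cost-free values $V'$. Fix any $\pi\in\badPi$. In every round, couple each pull of a bad arm $a$ with a pull of arm $2$ via the coordinatewise coupling from the proof of \Cref{lm:util-gap}: whenever arm $a$ yields $1$, arm $2$ also yields $1$, and otherwise arm $2$ yields $1$ with probability $(\mu_2-\mu_a)/(1-\mu_a)\ge c$. Skips stay as skips. This turns $\pi$ into a policy $\hat\pi$ with $V'(\hat\pi)\ge V'(\pi)$, and $\hat\pi$ uses only arm $2$ and skip, so $V'(\pi)\le V'(\hat\pi)\le V'(\pi'_2)$ by monotonicity of $f$. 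This only gives a weak inequality, so a quantitative gap is still needed. Condition on the event that $\pi$'s realized rewards are all $0$; under $\mE$-type bounds it has probability at least $c^m$. On that event the coupled arm-$2$ rewards are all $1$ with probability at least $c^m$, so the gain is $f(\vec1)-f(\vec0)$ whenever $\pi$ pulls in all rounds. If $\pi$ skips some rounds, replace the skipped rounds by arm $2$ in the comparison policy $\pi'_2$; this only helps. The result is $V'(\pi'_2)-V'(\pi)\ge c^{2m}(f(\vec1)-f(\vec0))=c'$ uniformly over $\pi\in\badPi$.

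Subtracting the cost error $m\cost\le c'/2$ gives $\UGap\ge c'/2$. Finally, set $\costup=\min\{\costup',c'/(2m)\}$ and take expectations:
\[
\BReg(T)\ge \delta\cdot\tfrac12\cdot\tfrac{c'}{2}\,(T-N'),
\]
so $c_0=\delta c'/4$ and $N_0=N'$, all depending only on $(\prior,f,m,K)$.
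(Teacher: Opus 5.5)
Your proposal is correct and follows the paper's own chain: the $K$-arm weak-to-strong step, the deterministic bound $\Reg(T)\ge (T-N')\,\UGap(\muvec)$, a $K$-arm utility-gap lemma, and then taking expectations with $c_0=q_{\prior}^m c'/4$ and $N_0=N'$. The only variation is in the utility-gap step. The paper compares $\pi'_2$ with $\pi'_{a^*}$ for $a^*=\argmax_{a\ne 2}\mu_a$, asserting in one line that $\pi'_{a^*}$ maximizes $V'$ over $\Pi_{\mathtt{bad}}$; you instead couple an arbitrary adaptive $\pi\in\Pi_{\mathtt{bad}}$ directly with $\pi'_2$, which uses the same coupling-and-monotonicity idea but makes the adaptive case explicit.
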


\subsection{Proof of \Cref{thm:k_arms_gen_symm}}
\label{sec:k-arms-proof}

We prove \Cref{thm:k_arms_gen_symm}.
The proof mirrors that of \Cref{thm:m_gen_symm} step by step; we record the three supporting lemmas that generalize the corresponding lemmas from \Cref{sec:proof-main} and indicate the (minor) changes in each proof.

\paragraph{Step~1: Extending \Cref{lm:special_case_sym_and}.}

\begin{lemma}\label[lemma]{lm:special_case_k_arms}
Assume $f:\{0,1\}^m\to\R$ is symmetric, coordinatewise increasing, and $f(m-1)=f(0)$.
Assume further that for every arm $a\in\{2,\dots,K\}$,
\begin{align}
  \frac{\alpha_{a,t-1}+m-1}{\alpha_{a,t-1}+\beta_{a,t-1}+m-1} < \gamma_{1,t-1}.
  \label{eq:special_case_k_cond}
\end{align}
Then any deterministic posterior optimal policy has one of the following two forms:
\begin{itemize}
  \item Skip all rounds.
  \item Start by playing arm~$1$ and continue doing so while observed rewards equal~$1$; skip all remaining rounds upon observing a reward of~$0$.
\end{itemize}
\end{lemma}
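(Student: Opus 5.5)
We follow the proof of \Cref{lm:special_case_sym_and} essentially verbatim, inducting on $m$ and replacing ``arm $2$'' by ``any arm $a\in\{2,\dots,K\}$'' throughout. The base case $m=1$ is the $K$-arm analogue of \Cref{cl:1_beats_2_last_round}. Since $\gamma_{a,t-1}\le\frac{\alpha_{a,t-1}+m-1}{\alpha_{a,t-1}+\beta_{a,t-1}+m-1}<\gamma_{1,t-1}$ for every $a\ge 2$, playing arm $1$ dominates playing arm $a$ in the last round. If the two choices tie, then $f$ is flat in the last coordinate, and skipping is strictly better. The argument of \Cref{cl:1_beats_2_last_round} uses only the comparison of one arm against arm $1$, so it transfers directly.

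For the inductive step we first repeat the two structural observations without change. (i) After any observed reward $0$, the policy aborts, since $f(m-1)=f(0)$ makes future rewards irrelevant and $\cost>0$. (ii) If the policy does not abort at the start, it keeps pulling some non-skip arm while the rewards are $1$. Otherwise its score is always $0$ and it pays a positive cost, so skipping everything is strictly better. Hence any non-trivial deterministic posterior-optimal policy is described by a sequence $\seq\in[K]^m$ that it follows while it sees rewards $1$, aborting on the first $0$. If $\seq$ starts with arm $1$, then after a reward $1$ the continuation is a length-$(m-1)$ instance with $g^+(\ell)=f(\ell+1)$, which still satisfies $g^+(m-2)=g^+(0)$. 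Condition \eqref{eq:special_case_k_cond} persists for every $a\ge2$ with $m-2$ in place of $m-1$: arm $a$'s posterior is unchanged, the optimistic ratio decreases in the added count, and $\gamma_{1,t}>\gamma_{1,t-1}$. The induction hypothesis then forces the continuation to be all arm $1$, since aborting after paying for a success is excluded as before.

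It remains to rule out that $\seq$ starts with some arm $a\ge 2$. After a reward $1$ from arm $a$, the posterior of each arm $b\ge2$ satisfies the length-$(m-1)$ version of \eqref{eq:special_case_k_cond}. For $b=a$ this holds because $\frac{\alpha_{a,t}+m-2}{\alpha_{a,t}+\beta_{a,t}+m-2}=\frac{\alpha_{a,t-1}+m-1}{\alpha_{a,t-1}+\beta_{a,t-1}+m-1}<\gamma_{1,t-1}=\gamma_{1,t}$. For $b\ne a$ the posterior is unchanged and the ratio only shrinks. By induction, the sequence is $[a,1,\dots,1]$. We then reuse the recursive utility $u(x_1,\dots,x_m)$ from the proof of \Cref{lm:special_case_sym_and}, with $(x_1,\dots,x_m)=(\gamma_a,\gamma_1,\gamma_1^{+,(1)},\dots,\gamma_1^{+,(m-2)})$. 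Either $u<0$, in which case skipping beats the policy, or all suffix values are positive. In the latter case the backward-induction monotonicity argument, applied with $\gamma_1>\gamma_a$ and $\gamma_1^{+,(i)}>\gamma_1^{+,(i-1)}$, shows that always playing arm $1$ is strictly better. This is a contradiction.

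The only step that needs care is checking that the inductive preconditions hold simultaneously for \emph{all} arms $b\ge2$ after the first pull, including arms other than the one just pulled. This check is the monotonicity noted above, so I expect no genuine obstacle. The rest is a relabeling of the two-arm proof.
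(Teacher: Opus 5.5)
Your proposal is correct and follows the paper's proof. It uses induction on $m$ exactly as in \Cref{lm:special_case_sym_and}, with arm $2$ replaced by an arbitrary arm $a\ge 2$, the same reduction to a fixed pull sequence, and the same recursive-utility comparison relying only on $\gamma_{1,t-1}>\gamma_{a,t-1}$. You also check explicitly that, after the first pull, the inductive precondition holds for every arm $b\ge 2$ and not only the arm just pulled; the paper leaves this detail implicit under ``the same monotonicity argument,'' and you handle it correctly.
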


\begin{proof}
The proof is by induction on $m$, exactly as in \Cref{lm:special_case_sym_and}.
The only difference is that the policy may start with any arm $a\in\{2,\dots,K\}$, not just arm~$2$.

The induction structure is unchanged.
Condition~\eqref{eq:special_case_k_cond} for arm~$a$ plays the same role as the corresponding condition on arm~$2$ in the original proof: it ensures that (i)~when the induction hypothesis is applied to the sub-episode after a first pull, the precondition still holds (the same monotonicity argument: $\alpha_{a,t}\le\alpha_{a,t-1}+1$ and $\beta_{a,t}\ge\beta_{a,t-1}$), and (ii)~the sequence of pulls must be $[a,1,\dots,1]$ if the policy starts with arm~$a$.

To rule out starting with arm $a\ge 2$, the same utility comparison applies: the comparison in \Cref{lm:special_case_sym_and} uses only $\gamma_{1,t-1}>\gamma_{a,t-1}$.
Since $\gamma_{a,t-1} = \frac{\alpha_{a,t-1}}{\alpha_{a,t-1}+\beta_{a,t-1}} \le \frac{\alpha_{a,t-1}+m-1}{\alpha_{a,t-1}+\beta_{a,t-1}+m-1} < \gamma_{1,t-1}$ (by \eqref{eq:special_case_k_cond}), the comparison is arm-$a$-agnostic and carries through for all $a\ge 2$.
\end{proof}

\paragraph{Step~2: Extending \Cref{lm:no_pull_sym_large_m}.}

\begin{lemma}\label[lemma]{lm:no_pull_k_arms}
Assume $f:\{0,1\}^m\to\R$ is symmetric and coordinatewise increasing.
Assume \Cref{eq:assump_main} holds.
Suppose that for every arm $a\in\{2,\dots,K\}$,
\begin{align}
  \frac{\alpha_{a,t-1}+m-1}{\alpha_{a,t-1}+\beta_{a,t-1}+m-1} < \gamma_{1,t-1}.
  \label{eq:no_pull_k_cond}
\end{align}
Then no arm from $\{2,\dots,K\}$ is selected in round $t$.
\end{lemma}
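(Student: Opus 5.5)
We mirror the proof of \Cref{lm:no_pull_sym_large_m}, by induction on $m$, with arm $2$ replaced by an arbitrary ``other'' arm $a\in\{2,\dots,K\}$. As there, we show that every deterministic policy whose first action is some arm $a\ge 2$ is strictly beaten by a deterministic policy that does not play $a$ then; randomized policies therefore put zero weight on such policies. \textbf{Base case $m=2$.} We redo the four-case analysis of \Cref{lm:no_pull}. Condition~\eqref{eq:no_pull_k_cond} gives $\gamma_{b,t}<\gamma_{1,t}$ for every $b\ge 2$ after any first pull, in both the symmetric and the non-symmetric readings. Then \Cref{cl:1_beats_2_last_round}, applied arm-by-arm, shows that only arm $1$ or skip can occur in the second round. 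Cases I--IV use only $\gamma_1>\gamma_a$, $\gamma_1^+\ge\gamma_1$ and the monotonicity and symmetry of $f$, so they go through with $\gamma_2$ replaced by $\gamma_a$.

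\textbf{Reduction to the first round.} If $t$ is not the first round of the episode, we view the remainder as an episode of length $m'<m$ with $g(\ell)=f(\ell+j)$. \Cref{lm:assump_f_induct} gives \Cref{eq:assump_main} for $g$. Condition~\eqref{eq:no_pull_k_cond} with $m'$ in place of $m$ follows because $x\mapsto(\alpha+x)/(\alpha+\beta+x)$ is increasing. The induction hypothesis then applies.

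\textbf{First round.} Suppose the policy opens with arm $a\ge 2$. After this pull, $\alpha_{a,t}\le\alpha_{a,t-1}+1$, $\beta_{a,t}\ge\beta_{a,t-1}$, and every other arm's posterior is unchanged. Hence \eqref{eq:no_pull_k_cond} holds for all arms $b\ge 2$ with $m-1$ in place of $m$. The induction hypothesis, applied with $g^\pm$ and \Cref{lm:assump_f_induct}, therefore forces round two to be arm $1$ or skip. Skips are pushed to the end by symmetry. This leaves the same Cases I--IV, and we treat them as in \Cref{lm:no_pull_sym_large_m}:
\begin{itemize}
\item Case I compares against playing arm $1$ first.
\item Case II swaps the first two pulls. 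After $r_1=1$ on arm $1$, the induction hypothesis forbids playing $a$, because $\alpha_{a}$ and $\beta_{a}$ are untouched, $\gamma_{1}$ has increased, and $\rho'_i\ge\rho_i$.
\item Case III uses \Cref{eq:assump1} or the exchange argument with gain $(\gamma_1-\gamma_a)\cost$.
\item Case IV uses \Cref{eq:assump2}, or \Cref{lm:special_case_k_arms} when $f(m-1)=f(0)$.
\end{itemize}

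\textbf{Main obstacle.} The new difficulty is that the continuation policy in the swapped or alternative policies may now use a third arm $b\notin\{1,a\}$. The exchange arguments must therefore be checked to remain valid when the continuation is an arbitrary sub-policy. This holds because those arguments only relabel reward orders, and the posterior updates of distinct arms commute; in Case III the value $X=Y$ still follows from this commutativity. The second point needing care is verifying that \eqref{eq:no_pull_k_cond} for \emph{all} arms $b\ge 2$ survives each recursive call. For untouched arms this is immediate, since $\gamma_1$ does not decrease along the branches used: arm $1$ was either not pulled or returned $1$.
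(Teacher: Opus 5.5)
Your proposal is correct and is essentially the paper's proof. The paper also reruns the induction of \Cref{lm:no_pull_sym_large_m} with arm $2$ replaced by an arbitrary arm $a\in\{2,\dots,K\}$, resolving Cases I--III by the same arguments (which use only $\gamma_{1,t-1}>\gamma_{a,t-1}$ and \Cref{eq:assump_main,eq:assump1,eq:assump2}) and Case IV via \Cref{lm:special_case_k_arms}. Your extra checks are details the paper leaves implicit, and they go through as you describe: that \eqref{eq:no_pull_k_cond} persists for every arm $b\ge 2$ at each recursive call, and that the exchange arguments survive continuations using a third arm. One small correction: your base-case claim that $\gamma_{b,t}<\gamma_{1,t}$ holds ``after any first pull'' should be restricted to a first pull of an arm other than arm $1$, which is the only case you actually use.
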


\begin{proof}
The proof is similar to that of \Cref{lm:no_pull_sym_large_m}, with arm~$2$ replaced by an arbitrary arm $a\in\{2,\dots,K\}$.

The four cases (I:~always skip; II:~always arm~$1$; III:~arm~$1$ if $r_1=0$; IV:~arm~$1$ if $r_1=1$) are resolved by the same arguments:
\begin{itemize}
  \item Cases~I, II, and~III depend only on $\gamma_{1,t-1}>\gamma_{a,t-1}$ and on \Cref{eq:assump_main,eq:assump1,eq:assump2}; neither of these is arm-$2$-specific.
  \item Case~IV invokes \Cref{lm:special_case_sym_and} in the original proof; here it invokes \Cref{lm:special_case_k_arms} instead.
\end{itemize}
Since none of the arguments require the non-arm-$1$ arm to be specifically arm~$2$, the conclusion holds for all $a\in\{2,\dots,K\}$.
\end{proof}

\paragraph{Step~3: Extending \Cref{lm:bounded_pulls_sym_m_ge_2}.}

\begin{lemma}\label[lemma]{lm:bounded_pulls_k_arms}
Adopt the notation and assumptions of \Cref{lm:bounded_pulls_sym_m_ge_2}
(events $\event_1$, $\event_2$, and \Cref{eq:assump_more_useful}).
Assume additionally that condition~\eqref{eq:extra_arm_cond} holds for every arm $a\in\{3,\dots,K\}$.
Then:
\begin{enumerate}[label=(\alph*)]
  \item Arm~$2$ is pulled at most $N\subp$ times (same bound as \Cref{lm:bounded_pulls_sym_m_ge_2}).
  \item Arms $3,\dots,K$ are never pulled.
\end{enumerate}
\end{lemma}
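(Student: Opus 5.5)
The plan is to prove (a) and (b) together by a single contradiction argument on the first "bad" round. Let $t$ be the first round in which either arm $2$ is pulled for the $(N\subp+1)$-st time, or some arm $a\in\{3,\dots,K\}$ is pulled. (If no such round exists, both claims hold.) By minimality of $t$, arms $3,\dots,K$ have never been pulled before round $t$. Hence their posteriors are still the priors: $\alpha_{a,t-1}=\alpha_a$ and $\beta_{a,t-1}=\beta_a$. In addition, arm $2$ has been pulled at most $N\subp$ times before round $t$, so under $\event_2$ all its observed rewards are zeros. We then aim to verify the hypotheses of \Cref{lm:no_pull_k_arms} at round $t$. That lemma says no arm in $\{2,\dots,K\}$ is selected in round $t$, which contradicts the definition of $t$.

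The hypotheses are checked one at a time. First, under $\event_1$, \Cref{lm:event1-mart} gives $\gamma_{1,t-1}\ge\gamma_{1,0}-\Delta/4$. Combining this with \Cref{lm:sym_bound_implies} and \Cref{eq:assump_more_useful} yields \Cref{eq:assump_main} at round $t$, exactly as in the proof of \Cref{lm:bounded_pulls_sym_m_ge_2}. Second, for each extra arm $a\ge3$, condition \eqref{eq:extra_arm_cond} gives
\[
\frac{\alpha_{a,t-1}+m-1}{\alpha_{a,t-1}+\beta_{a,t-1}+m-1}=\frac{\alpha_a+m-1}{\alpha_a+\beta_a+m-1}<\gamma_{2,0}<\gamma_{1,0}-\Delta/4\le\gamma_{1,t-1}.
\]
So \eqref{eq:no_pull_k_cond} holds for $a\ge3$.

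Third, we need \eqref{eq:no_pull_k_cond} for arm $2$, which requires splitting into two cases. If the violating event at round $t$ is the $(N\subp+1)$-st pull of arm $2$, then arm $2$ has exactly $N\subp$ zero rewards. \Cref{cl:arm_2_optimism_sad_large_m} then gives the optimistic bound $\le\gamma_{2,0}<\gamma_{1,t-1}$.

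The remaining case is the main obstacle: the violating event is a pull of an extra arm $a\ge3$, while arm $2$ has been pulled only $n<N\subp$ times. Here the optimistic posterior of arm $2$ need not lie below $\gamma_{1,t-1}$ (e.g.\ when $n=0$), so \Cref{lm:no_pull_k_arms} is not directly applicable with arm $2$ included.

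To handle this case, I would first try to establish a variant of \Cref{lm:no_pull_k_arms}: if \eqref{eq:no_pull_k_cond} holds for an arm $a$ \emph{and} the optimistic mean of $a$ is also below $\gamma_{2,t-1}$, then arm $a$ is not selected. The idea is to run the same swapping and four-case argument with either arm $1$ or arm $2$ as the comparison arm, whichever is relevant. Since arm $2$ has only zero rewards so far, $\gamma_{2,t-1}\le\gamma_{2,0}$, which is the wrong direction for this bound. So I would instead use the comparison with arm $1$ only, and drop arm $2$ from the hypothesis set. This means re-checking that the arguments in Cases I--IV of \Cref{lm:no_pull_sym_large_m} and \Cref{lm:special_case_k_arms} compare arm $a$ only against arm $1$ (or skip). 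If so, they never need any condition on the other non-main arms, so \Cref{lm:no_pull_k_arms} holds arm-by-arm: each arm $a$ satisfying \eqref{eq:no_pull_k_cond} is individually not selected.

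The point requiring care in that re-check is the inductive step. After a first pull of arm $a$, the sub-episode may in principle use arm $2$. To exclude any beneficial use of arm $2$ there, I would bound the continuation value by the best policy over arms $\{1,2,\text{skip}\}$, which does not depend on arm $a$'s outcome beyond the score shift. Then I would show the swap argument (playing arm $1$ first instead of $a$) still dominates, because $\gamma_{a,\cdot}^{+}<\gamma_{1,t-1}$ along all relevant paths. If this arm-by-arm strengthening goes through, the case of an extra arm at round $t$ is covered, and (b) follows, with (a) then identical to \Cref{lm:bounded_pulls_sym_m_ge_2}.
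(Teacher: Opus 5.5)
\paragraph{Where you agree with the paper.} Your reduction is the paper's. At the offending round you verify the hypotheses of \Cref{lm:no_pull_k_arms} using \Cref{lm:event1-mart}, \Cref{lm:sym_bound_implies}, Claim~\ref{cl:arm_2_optimism_sad_large_m} and \eqref{eq:extra_arm_cond}. Taking the first round at which either kind of violation occurs is cleaner than the paper's ordering: the paper's part (a) needs \eqref{eq:no_pull_k_cond} for arms $3,\dots,K$, so it tacitly relies on (b). The case where that first round is the $(N\subp+1)$-st pull of arm~$2$ is handled correctly.

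You are also right about the obstacle. Suppose an extra arm $a$ is pulled while arm~$2$ has fewer than $N\subp$ zero observations. Then arm~$2$ may violate \eqref{eq:no_pull_k_cond}, so \Cref{lm:no_pull_k_arms}, whose hypothesis concerns every arm in $\{2,\dots,K\}$, does not apply. The paper's proof of (b) does not resolve this. It checks \eqref{eq:no_pull_k_cond} for the single arm $a$ and invokes \Cref{lm:no_pull_k_arms} anyway. In effect it reads that lemma arm-by-arm, relying only on the remark in its proof that the case analysis is arm-agnostic.

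\paragraph{The gap.} Your proposal does not supply this arm-by-arm version either. It is left conditional, and the sketch does not work as written.
\begin{itemize}
\item If only arm $a$ is assumed strongly posterior-bad, the inductive hypothesis excludes arm $a$ in round~$2$ but not arm~$2$. So for $m\ge 3$, Cases I--IV of \Cref{lm:no_pull_sym_large_m} are no longer exhaustive. For instance, round~$2$ may play arm~$1$ after $r_1=1$ and arm~$2$ after $r_1=0$.
\item The existing arguments do not cover such behaviors. The Case~II reordering needs the same arm in round~$2$ on both branches. In Case~III (subcase $f(m)=f(1)$) with arm~$2$ in place of arm~$1$, the exchange gains $(\gamma_{2,t-1}-\gamma_{a,t-1})\cost$. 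This can be negative here, because arm~$2$ has absorbed zeros while arm~$a$ still sits at its prior mean. Case~IV goes through \Cref{lm:special_case_k_arms}, which carries the same all-arms hypothesis.
\end{itemize}

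Your proposed repair does not close this either.
\begin{itemize}
\item Restricting the continuation to $\{1,2,\text{skip}\}$ gives a lower bound, not an upper bound, on the continuation value of the policy you want to show is suboptimal. The continuation can legitimately return to arm~$a$ once arm~$1$ has produced zeros and arm~$a$ is no longer strongly posterior-bad relative to it.
\item Replacing the round-$1$ pull of arm~$a$ by arm~$1$ changes the posterior state the continuation sees, so the continuation cannot simply be copied.
\end{itemize}

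For $m=2$ the issue disappears. There $\gamma_{2,t-1}\le\gamma_{2,0}<\gamma_{1,t-1}$, so Claim~\ref{cl:1_beats_2_last_round} excludes arm~$2$ in the last round, and the four cases remain exhaustive. For $m\ge 3$, however, your minimal-round argument needs both kinds of violation ruled out, so the missing step affects (a) as well as (b). To finish, you need an actual proof of \Cref{lm:no_pull_k_arms} from \eqref{eq:no_pull_k_cond} for arm $a$ alone, together with $\gamma_{2,t-1}<\gamma_{1,t-1}$, that handles the additional round-$2$ behaviors involving arm~$2$.
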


\begin{proof}
\textbf{Part~(a).}
Similar to \Cref{lm:bounded_pulls_sym_m_ge_2}: after $N\subp$ all-zero pulls of arm~$2$, its optimistic posterior mean $\frac{\alpha_{2,t-1}+m-1}{\alpha_{2,t-1}+\beta_{2,t-1}+m-1}$ drops below $\gamma_{2,0}$, and event $\event_1$ gives $\gamma_{1,t-1}\ge\gamma_{1,0}-\Delta/4>\gamma_{2,0}$, so \Cref{lm:no_pull_k_arms} prohibits a further pull of arm~$2$.

\textbf{Part~(b).}
Consider any round $t$.  Since arm $a\in\{3,\dots,K\}$ has never been pulled, $\alpha_{a,t-1}=\alpha_a$ and $\beta_{a,t-1}=\beta_a$.
By condition~\eqref{eq:extra_arm_cond} and event $\event_1$:
\begin{align*}
  \frac{\alpha_a+m-1}{\alpha_a+\beta_a+m-1}
  < \gamma_{2,0}
  < \gamma_{1,0}-\frac{\Delta}{4}
  \le \gamma_{1,t-1}.
\end{align*}
Hence condition~\eqref{eq:no_pull_k_cond} holds for arm~$a$, and \Cref{lm:no_pull_k_arms} prohibits selecting arm~$a$ in round~$t$.
Since this holds for every $t$, arm~$a$ is never pulled.
\end{proof}

\paragraph{Proof of \Cref{thm:k_arms_gen_symm}.}

\begin{proof}
We use the same $\costup$, $N\subp$, $c\subp$, and events $\event_1$, $\event_2$ as in the proof of \Cref{thm:m_gen_symm} (defined only for arms~$1$ and~$2$).

\textbf{Conditions (iii)--(iv) of $\failexpl$} (bounded pulls of arm~$2$; arms $3,\dots,K$ never pulled) follow from \Cref{lm:bounded_pulls_k_arms}, exactly as in the original proof.

\textbf{Conditions (i)--(ii) for arms~$1$ and~$2$} (bounded mean rewards; arm~$2$ dominates arm~$1$) follow by the same argument as in the proof of \Cref{thm:m_gen_symm}.

\textbf{Conditions (i)--(ii) for arms $3,\dots,K$.}
For each $a\in\{3,\dots,K\}$, define the event $\event_a^+:=\{\mu_a\in(\tau,1-\tau)\}$, where $\tau$ is the same prior-determined constant as in the proof of \Cref{thm:m_gen_symm}.
Each $\event_a^+$ satisfies $\Pr{\event_a^+}=\Theta\subp(1)>0$ and is independent of the events for arms~$1$ and~$2$.
Under $\event_3$ we have $\mu_2\ge 1-\nottau$, and under $\event_a^+$ we have $\mu_a\le 1-\tau=1-2\nottau$, so
\[
  \mu_2 - \mu_a \ge (1-\nottau)-(1-2\nottau) = \nottau = 2c\subp > c\subp,
\]
giving condition~(ii) for arm~$a$.
Condition~(i) holds since $\mu_a\in(\tau,1-\tau)\subset(c\subp,1-c\subp)$.

\textbf{Probability.}
As in the proof of \Cref{thm:m_gen_symm}, four events control the arms~$1$ and~$2$ part of $\failexpl$:
\begin{enumerate}[label=(\roman*)]
  \item $\event_1$ (stability of arm~$1$): $\Pr{\event_1}\ge\Delta/4$ by \Cref{lm:event1-prob}.
  \item $\{\mu_1\in(\tau,1-\tau)\}$: combined with $\event_1$ via union bound, $\Pr{\event_1\text{ and }\mu_1\in(\tau,1-\tau)}\ge\Delta/8$.
  \item $\event_3=\{\mu_2\in(1-\nottau,1-\nottau/2)\}$: probability $\Theta\subp(1)>0$ since the prior density of arm~$2$ is strictly positive.
  \item $\event_2$ (arm~$2$'s tape all zeros for the first $N\subp$ pulls): $\Pr{\event_2\mid\event_3}\ge(\nottau/2)^{N\subp}=(\Theta\subp(1))^m$.
\end{enumerate}
These four events concern only arms~$1$ and~$2$; since each arm $a\in\{3,\dots,K\}$ has an independent prior and reward tape, the events $\event_a=\{\mu_a\in(\tau,1-\tau)\}$ are independent of events~(i)--(iv) above and of each other.
Multiplying all probabilities:
\[
  \Pr{\failexpl_{c\subp,N\subp}}
  \;\ge\;
  \frac{\Delta}{8}\cdot\Theta\subp(1)\cdot(\Theta\subp(1))^m
  \cdot\prod_{a=3}^{K}\Pr{\event_a^+}
  \;=\;\Theta\subp(1)^m
  \;>\;0.
\]
Absorbing all prior-dependent constants into $q\subp$ gives $\Pr{\failexpl_{c\subp,N\subp}}\ge q\subp^m$.
\end{proof}

\subsection{Proof of \Cref{cor:breg_k}}
\label{sec:k-arms-regret}

We prove \Cref{cor:breg_k} by following the same failure $\to$ strong failure $\to$ regret chain as in \Cref{sec:failure}.

\begin{definition}\label{def:strong_fail_k}
Fix $c>0$, $N\in\N$, $K\ge 3$.
Event $\StrongFail_{c,N}$ occurs if:
\begin{itemize}
  \item \emph{(Arm~$2$ dominates)} $\mu_2\ge\mu_a+c$ for every $a\ne 2$.
  \item \emph{(Arm~$2$ rarely considered)} At most $N$ episodes consider arm~$2$.
\end{itemize}
\end{definition}

\begin{lemma}[K-arm weak-to-strong]\label{lm:weak_to_strong_k}
Fix $c\in(0,\nicefrac12)$ and $N\in\N$.
Let $\delta:=\Pr{\failexpl_{c,N}}>0$.
Then
\[
  \Pr{\StrongFail_{c,N'}\mid\failexpl_{c,N}}\ge\tfrac12,
\]
for some $N'<\infty$ determined by $c$, $N$, $\delta$, and $m$.
\end{lemma}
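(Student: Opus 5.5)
The plan is to repeat the proof of \Cref{thm:weak-to-strong} almost word for word, since the extra arms enter only through the reward-gap condition and the bounded-means condition. The reward-gap condition of \Cref{def:strong_fail_k} (arm $2$ beats every other arm by $c$) is already part of \Cref{def:fail_k}. So it suffices to show that, conditioned on $\failexpl_{c,N}$, with probability at least $\nicefrac12$ fewer than $N'$ episodes consider arm $2$.

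\emph{Setup.} Let $\mE$ be the event that $\mu_a\in(c,1-c)$ for every $a\in[K]$. Reveal the randomness episode by episode. Let $E_j$ be the $j$-th episode whose policy considers arm $2$. Define $\wtilde Y_j=1$ if $E_j$ does not exist or arm $2$ is actually pulled in $E_j$, and $\wtilde Y_j=0$ otherwise.

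\emph{Conditional probability bound.} Conditioned on $\mE$ and on everything revealed before $E_j$, we claim $\Ex{\wtilde Y_j}\ge c^m$. Since $\pi_e$ is deterministic given the history and considers arm $2$, there is a within-episode reward sequence of length at most $m$ that leads it to pull arm $2$. Each reward in that sequence comes from some arm whose mean lies in $(c,1-c)$, so that specific sequence occurs with probability at least $c^m$. The number of arms plays no role here; it only matters that every arm satisfies the bounded-means condition. The skip arm is deterministic and contributes probability one.

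\emph{Concentration.} Choose
\begin{align*}
N'\ge\max\cbr{\frac{8}{c^{2m}}\log\frac{2}{\delta},\ \frac{2}{c^m}(N+1)}.
\end{align*}
Applying \Cref{lm:alex-process} with parameters $c^m$ and $\delta/2$ gives $\Pr{\sum_{j\le N'}\wtilde Y_j\ge N+1\mid\mE}\ge 1-\delta/2$. Call the event inside this probability $\Bridge_{N'}$.

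\emph{Inclusion and conditioning.} On $\failexpl_{c,N}\cap\Bridge_{N'}$, suppose at least $N'$ episodes consider arm $2$. Then every $\wtilde Y_j=1$ among the first $N'$ corresponds to an actual pull of arm $2$, giving more than $N$ pulls. This contradicts condition (iii) of \Cref{def:fail_k}. Hence fewer than $N'$ episodes consider arm $2$, and $\StrongFail_{c,N'}$ holds.

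Since $\failexpl_{c,N}\subseteq\mE$, we get $\Pr{\Bridge_{N'}^c\cap\failexpl_{c,N}}\le\Pr{\Bridge_{N'}^c\cap\mE}\le\delta/2$. Dividing by $\Pr{\failexpl_{c,N}}=\delta$ gives the bound $\nicefrac12$.

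\emph{Main obstacle.} The only step needing care is the conditional lower bound $\Ex{\wtilde Y_j}\ge c^m$. It has to be stated as a bound conditional on the past, in the martingale form that \Cref{lm:alex-process} requires. This works because the event $\mE$ concerns only the realized means $\muvec$, so conditioning on it leaves the reward tapes i.i.d. given $\muvec$. It is also why the bounded-means condition in \Cref{def:fail_k} is imposed on all $K$ arms rather than only on arms $1$ and $2$.
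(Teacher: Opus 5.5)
Your proposal is correct and follows the paper's proof essentially step for step. It uses the same event $\mE$ over all $K$ arms, the same indicators $\widetilde Y_j$ with the per-episode bound $c^m$, the same choice of $N'$ via \Cref{lm:alex-process}, and the same inclusion $\failexpl_{c,N}\cap\Bridge_{N'}\subseteq\StrongFail_{c,N'}$ followed by division by $\delta$; your remark that the $c^m$ bound must hold conditionally on the past, with the tapes still i.i.d.\ given $\muvec$ after conditioning on $\mE$, spells out a point the paper leaves implicit.
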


\begin{proof}
Similar to the proof of \Cref{thm:weak-to-strong}.
Let $\mE:=\{\mu_a\in(c,1-c)\text{ for all }a\in[K]\}$; since $\mE$ is condition~(i) of $\failexpl_{c,N}$, it holds whenever $\failexpl_{c,N}$ does, so conditioning on $\mE$ does not alter the conditional probability.
In any good episode (one that considers arm~$2$), arm~$2$ is pulled with probability at least $c^m$, using $\mu_a\in(c,1-c)$ for all $a$ exactly as in the original proof.
The concentration argument (\Cref{lm:alex-process}) and the choice of $N'$ are unchanged.
\end{proof}

\begin{lemma}[K-arm utility gap]\label{lm:util_gap_k}
Fix $c>0$. Consider a realized instance with $\mu_2\ge\mu_a+c$ for every $a\ne 2$ and $\mu_a\in(c,1-c)$ for all $a\in[K]$.
Then
\[
  \cost\le\frac{c'}{2m}
  \;\implies\;
  \UGap(\muvec)\ge\frac{c'}{2},
\]
where $c':=c^{2m}(f(\vec{1})-f(\vec{0}))>0$ depends only on $c$ and $f$.
\end{lemma}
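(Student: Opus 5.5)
The plan is to follow the proof of \Cref{lm:util-gap} almost verbatim, replacing the two-arm coupling by a coupling between arm~$2$ and whichever arm the bad policies might use. With $K$ arms, $\badPi$ consists of all per-episode policies that never choose arm~$2$. Such policies may now mix arms $1,3,\dots,K$ and skips adaptively, so the single benchmark $\pi'_1$ from the two-arm proof has to be replaced by a bound valid for every policy in $\badPi$.

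\textbf{Step 1 (benchmark for the optimum).} Let $\pi'_2$ be the policy that plays arm~$2$ in all $m$ rounds. Since every policy pays cost at most $m\cost$, we have $U^*(\muvec)\ge V(\pi'_2)\ge V'(\pi'_2)-m\cost$, where $V'$ is the cost-free expected score.

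\textbf{Step 2 (upper bound for $\badPi$).} Fix any $\pi\in\badPi$. Since $V(\pi)\le V'(\pi)$, it suffices to bound $V'(\pi)$. I will use reward tapes and couple round by round. Whenever $\pi$ chooses an arm $a\ne 2$, its reward is a fresh draw from $\Bern(\mu_a)$ conditional on the past (under the realized $\muvec$ the draws are i.i.d.). Couple this reward with a $\Bern(\mu_2)$ draw $r^{(2)}$ as in \Cref{lm:util-gap}: whenever the $\pi$-reward is $1$, set $r^{(2)}=1$; when it is $0$, flip $r^{(2)}$ to $1$ with probability $(\mu_2-\mu_a)/(1-\mu_a)\ge c$. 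Whenever $\pi$ skips, the $\pi$-reward is $0$ and we draw $r^{(2)}\sim\Bern(\mu_2)$ independently. In every round the arm-$2$ coordinate then dominates the $\pi$ coordinate, and the arm-$2$ coordinates are i.i.d.\ $\Bern(\mu_2)$. Because the domination holds in every coordinate, the adaptivity of $\pi$ causes no problem. By monotonicity of $f$, $V'(\pi)\le V'(\pi'_2)$.

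\textbf{Step 3 (the gap).} For the quantitative gap, note that in each round, conditional on the past, the $\pi$-coordinate is $0$ and the arm-$2$ coordinate is $1$ with probability at least $(1-\mu_a)\cdot c\ge c^2$ if an arm $a\ne 2$ is played, and at least $\mu_2\ge c$ if $\pi$ skips. Multiplying over the $m$ rounds, the event $\{\rvec^{\pi}=\vec 0,\ \rvec^{(2)}=\vec 1\}$ has probability at least $c^{2m}$. On this event the score difference is $f(\vec 1)-f(\vec 0)$, and it is nonnegative everywhere else. Hence $V'(\pi'_2)-V'(\pi)\ge c'$ uniformly over $\pi\in\badPi$. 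Combining the three steps gives $\UGap(\muvec)\ge c'-m\cost\ge c'/2$ when $\cost\le c'/(2m)$.

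The main obstacle is Step~2: making the coupling rigorous for adaptive policies, and in particular handling the $\sup$ over $\badPi$ uniformly rather than comparing against a fixed $\pi'_1$. I would handle this by constructing the coupling sequentially on a common probability space, which yields the per-round conditional lower bound $c^2$ in Step~3 directly.
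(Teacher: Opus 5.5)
Your argument is correct, and it differs from the paper's proof in how it handles the supremum over $\badPi$. The paper sets $a^*=\argmax_{a\neq 2}\mu_a$ and asserts that the non-adaptive policy $\pi'_{a^*}$ maximizes the cost-free score $V'$ over $\badPi$, on the grounds that skips are dominated and $a^*$ is the best remaining arm. It then reuses the two-arm coupling of \Cref{lm:util-gap} between the two fixed policies $\pi'_2$ and $\pi'_{a^*}$.

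You instead couple an arbitrary, possibly adaptive, $\pi\in\badPi$ round by round with the all-arm-$2$ policy. In one step this gives both the coordinatewise domination and the lower bound on $\Pr{\rvec^{\pi}=\vec 0,\ \rvec^{(2)}=\vec 1}$, uniformly in $\pi$.

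The paper's route is shorter because it recycles the two-arm lemma. However, its key claim that $\pi'_{a^*}$ is $V'$-optimal over $\badPi$ is stated without proof. Proving it for adaptive or randomized policies needs exactly the sequential coupling you construct, with $\mu_{a^*}$ in place of $\mu_2$. Your route makes that step explicit and never needs to single out $a^*$.

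As a side remark, your per-round bound is loose: when arm $a$ is played the probability is exactly $(1-\mu_a)\cdot\frac{\mu_2-\mu_a}{1-\mu_a}=\mu_2-\mu_a\ge c$, and it is $\mu_2>c$ on skips. This gives $c^m$ rather than $c^{2m}$, although $c^{2m}$ is all the stated $c'$ requires.
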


\begin{proof}
Let $a^*:=\argmax_{a\ne 2}\mu_a$; then $\mu_2\ge\mu_{a^*}+c$.
The proof of \Cref{lm:util-gap} applies with arm~$1$ replaced by arm~$a^*$ throughout.
The coupling uses $\mu_2\ge\mu_{a^*}+c$ and proceeds similarly: with probability $\ge c^{2m}$ we have $\rvec^{(2)}=\vec{1}$ and $\rvec^{(a^*)}=\vec{0}$ simultaneously, giving $V'(\pi'_2)-V'(\pi'_{a^*})\ge c'$.
The bound $V'(\pi'_{a^*})\ge V'(\pi)$ for all $\pi\in\badPi$ holds because skip is dominated by any arm with positive mean reward, and arm~$a^*$ has the highest mean among all arms $a\ne 2$, so $\pi'_{a^*}$ maximizes $V'$ over $\badPi$.
The cost adjustment is similar: $\cost\le c'/(2m)$ implies $\UGap(\muvec)\ge c'-m\cost\ge c'/2$.
\end{proof}

\begin{proof}[Proof of \Cref{cor:breg_k}]
We apply the chain from \Cref{sec:failure} consecutively.
\Cref{thm:k_arms_gen_symm} gives $\Pr{\failexpl_{c\subp,N\subp}}\ge q\subp^m>0$.
\Cref{lm:weak_to_strong_k} gives $\Pr{\StrongFail_{c\subp,N'}\mid\failexpl_{c\subp,N\subp}}\ge\tfrac12$ for some $N'<\infty$, hence $\Pr{\failexpl_{c\subp,N\subp}\cap\StrongFail_{c\subp,N'}}\ge q\subp^m/2$.
\Cref{thm:strong-failure-regret} applies to $\StrongFail_{c\subp,N'}$: its proof counts episodes in which arm~$2$ is not considered (of which there are at least $T-N'$ by definition of $\StrongFail$), and in each such episode the policy is in $\badPi$, so $\Reg(T)\ge(T-N')\cdot\UGap(\muvec)$ holds deterministically.
Under $\failexpl_{c\subp,N\subp}$, conditions~(i) and~(ii) give $\mu_a\in(c\subp,1-c\subp)$ for all $a$ and $\mu_2\ge\mu_a+c\subp$ for all $a\ne 2$, so \Cref{lm:util_gap_k} gives $\UGap(\muvec)\ge c'/2$ for $\cost\le c'/(2m)$.
Taking expectations over $\prior$:
\[
  \BReg(T)
  \;\ge\;
  \frac{q\subp^m}{2}\cdot\frac{c'}{2}\cdot(T-N').
\]
Setting $c_0:=q\subp^m c'/4$ and $N_0:=N'$ completes the proof.
\end{proof}

%% file: new-appendix-kb-2.tex
\section{Zero-Cost Model}
\label{sec:zero-cost}

Our analysis sheds light on the model with zero selection costs.
For concreteness, we focus on $m=2$ and a symmetric aggregation function $f$.
We assume skips are allowed and ties in posterior utility are broken in favor of skips.

\begin{theorem}\label{thm:zero_cost}
Consider an \EpiBSL instance with $m=2$, symmetric $f$, selection cost $\cost=0$, skips allowed, and ties broken in favor of skips.
Then there exist constants $N\subp<\infty$ and $c\subp,q\subp>0$ depending only on $\prior$ such that
\[
  \Pr{\failexpl_{c\subp,N\subp}} \ge q\subp > 0.
\]
\end{theorem}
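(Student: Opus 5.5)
The plan is to reuse the proof of \Cref{thm:m_2_symm}(a) almost verbatim. Only the ingredient that uses $\cost>0$, namely \Cref{lm:no_pull}, needs attention. The probabilistic parts do not involve costs at all, so they carry over unchanged. These are \Cref{lm:event1-mart}, \Cref{lm:event1-prob}, the event $\event_2$ that the first $N\subp=1+\floor{\beta_2/\alpha_2}$ tape entries of arm $2$ are zero, \Cref{claim:arm_2_optimist_small}, and the final construction of $\event_3$, $\tau$, $\nottau$ and $c\subp=\nottau/2$. Once we have a zero-cost analogue of \Cref{lm:no_pull}, the argument of \Cref{lm:bounded-pulls} shows that under $\event_1\cap\event_2$ arm $2$ is pulled at most $N\subp$ times. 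The independence and product-of-probabilities argument then gives $\Pr{\failexpl_{c\subp,N\subp}}\ge q\subp>0$.

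\textbf{Key step: zero-cost version of \Cref{lm:no_pull}.} We assume $\frac{\alpha_{2,t-1}+1}{\alpha_{2,t-1}+\beta_{2,t-1}+1}<\gamma_{1,t-1}$ and show that no posterior-optimal policy, chosen under the tie-breaking rule, plays arm $2$ in round $t$. We interpret ``ties broken in favor of skips'' lexicographically: among posterior-optimal policies, the agent picks one that skips as much as possible. Equivalently, it maximizes $\postU_\pi-\epsilon\cdot(\text{expected number of non-skip pulls})$ for an infinitesimal $\epsilon>0$.

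Under this reading, every place where \Cref{lm:no_pull} uses ``strictly better by $\cost$'' becomes ``equal utility but fewer expected pulls''. \Cref{cl:1_beats_2_last_round} goes through: if arm $1$ and arm $2$ give equal second-round utility, then $f(r_1,0)=f(r_1,1)$, and skipping ties while using fewer pulls. The four cases are then handled as follows.
\begin{itemize}
\item In Case I, either $f(1,0)>f(0,0)$, so arm $1$ strictly improves, or the two are equal and skipping ties with fewer pulls.
\item In Case II, the comparison gains $\gamma_1(\gamma_1^+-\gamma_2)(f(1,1)-f(1,0))$. If this is zero, the alternative that skips after a reward of $1$ ties and saves pulls.
\item In Case III, the alternative has the same score distribution. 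Its expected number of pulls is $1+(1-\gamma_1)<1+(1-\gamma_2)$, which breaks the tie.
\item In Case IV, the difference is $(\gamma_1-\gamma_2)(\gamma_1f(1,1)+(1-\gamma_1)f(1,0))$. If this is zero, then $f(1,0)=f(1,1)=0$, so the policy's utility is $0$ and skipping everything ties with fewer pulls.
\end{itemize}

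\textbf{Main obstacle.} The delicate point is making the tie-breaking notion precise and showing it is consistent. The agent's choice must be a well-defined deterministic policy, and the claim ``arm $2$ is not selected'' has to hold for whichever tie-broken optimum is chosen. I would formalize tie-breaking as lexicographic optimization: first $\postU$, then minimal expected pulls. Then every contradiction above exhibits a policy that is lexicographically better. The dominating policies in Cases II and III tie in score only in degenerate configurations of $f$, and I would check each such configuration explicitly.

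One more degenerate situation must be excluded. If $f(1,1)=f(1,0)=f(0,1)=0$, every policy ties, and skipping always wins. Since $f$ is symmetric and non-constant, this forces $f(1,1)>0$ only, i.e.\ $f=\min$-like. That case is already covered by Case IV and Case II above, so nothing new arises. The remaining steps (\Cref{lm:bounded-pulls} and the probability bound) require no modification.
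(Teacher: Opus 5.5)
Your argument is correct under the tie-breaking rule you fix: maximize $\postU_\pi$, then minimize the expected number of non-skip pulls, i.e.\ the limit $\cost\to0^+$. It takes a different route from the paper, however, and the difference lies exactly where that rule is used.

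With your rule, every strict inequality in \Cref{lm:no_pull} that came from a $\cost$ term becomes a strict lexicographic improvement. So the $\cost>0$ proof transfers uniformly to every symmetric $f$, including $f=\max$, with $N\subp=1+\floor{\beta_2/\alpha_2}$.

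The paper's proof uses tie-breaking only when skip itself ties with the best action at a decision node, never to choose between two non-skip actions. Under that weaker reading your Case III fails when $f(1,1)=f(1,0)$. The policy ``arm $2$, then skip if $r_1=1$ and arm $1$ if $r_1=0$'' and the policy ``arm $1$, then skip if $r_1=1$ and arm $2$ if $r_1=0$'' have identical score distributions. The round-one choice is then a tie between two non-skip arms, so arm $2$ may be chosen even though $\frac{\alpha_{2,t-1}+1}{\alpha_{2,t-1}+\beta_{2,t-1}+1}<\gamma_{1,t-1}$. For instance, take $f=\max$ with posteriors $\mathrm{Beta}(1,1)$ and $\mathrm{Beta}(4,6)$. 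Then $\frac{5}{11}<\frac12$, yet both policies have posterior utility $0.7$, while repeating arm $1$ after a $0$ gives only $\frac23$.

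This is why the paper proves \Cref{lm:no_pull_zero_cost} only when $f(1,1)>f(1,0)$, using that strict inequality rather than tie-breaking in Cases II and III. It treats $f=\max$ separately, with two ingredients:
\begin{itemize}
\item the pessimistic condition $\frac{\alpha_{1,t-1}}{\alpha_{1,t-1}+\beta_{1,t-1}+1}>\gamma_{2,t-1}$ (\Cref{lm:no_pull_max});
\item a counting argument (\Cref{lm:bounded_pulls_max}): under $\event_1$ and an all-zero tape for arm $2$, each arm-$2$ pull happens in round one and is followed by an arm-$1$ pull. After $N\subp>\frac{4\gamma_{2,0}}{3\Delta}-(\alpha_{1,0}+\beta_{1,0})$ such pulls, the pessimistic posterior of arm $1$ exceeds $\gamma_{2,0}$.
\end{itemize}

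Your route is shorter and uniform in $f$. The paper's route also works under the weaker, node-local reading of ``ties broken in favor of skips.'' Preferring fewer expected pulls even between two non-skip first moves is a substantive behavioral assumption, so you should state it explicitly as part of the model rather than present it as a mere formalization.
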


The following corollary derives linear Bayesian regret via the same chain as in \Cref{sec:failure}.

\begin{corollary}\label[corollary]{cor:zero_cost_regret}
Under the conditions of \Cref{thm:zero_cost}, there exist constants $c_0>0$ and $N_0<\infty$ depending on $\prior$ and $f$ such that
\[
  \BReg(T) \ge c_0\cdot(T-N_0) \qquad\text{for any episode }T.
\]
This is the same form as \Cref{cor:high_regret_fail}.
\end{corollary}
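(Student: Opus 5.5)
The statement to prove is \Cref{cor:zero_cost_regret}. I would follow the chain used for \Cref{cor:high_regret_fail}: failure event, then strong failure, then linear regret, taking \Cref{thm:zero_cost} as given.

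\Cref{thm:zero_cost} gives $\Pr{\failexpl_{c\subp,N\subp}} \ge q\subp > 0$. Apply \Cref{thm:weak-to-strong} with $\delta = q\subp$, shrinking $c\subp$ below $\nicefrac12$ if needed. Its proof does not use the cost at all, only that in an episode considering arm $2$ some within-episode reward sequence of length $m=2$ triggers the pull, which has probability at least $c\subp^2$. This yields some $N'$ with
\[
\Pr{\StrongFail_{c\subp,N'}\cap\failexpl_{c\subp,N\subp}} \ge q\subp/2.
\]
On this event, \Cref{thm:strong-failure-regret} gives deterministically $\Reg(T)\ge (T-N')\,\UGap(\muvec)$. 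That theorem only counts episodes whose policy lies in $\badPi$, so it is cost-free.

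It remains to lower-bound $\UGap(\muvec)$ on this event. With $\cost=0$, the cost-adjustment step in \Cref{lm:util-gap} disappears: $V = V'$. By the coupling argument, $U^*(\muvec)\ge V'(\pi'_2)$ and $\sup_{\pi\in\badPi}V(\pi)\le V'(\pi'_1)$, since skipping never increases the score and arm $1$ dominates skip. Hence
\[
\UGap(\muvec)\ge c' := c\subp^{4}\,(f(\vec 1)-f(\vec 0)) > 0.
\]
Tie-breaking does not enter here, because $\UGap$ is defined through suprema over policies rather than through the agents' choices. The inequality $\sup_{\pi\in\badPi} V'(\pi) \le V'(\pi'_1)$ holds because the policies in $\badPi$ use only arm $1$ and skip, and under the pointwise coupling replacing skips by arm-$1$ pulls only raises $f$.

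Combining these facts, and using that regret is nonnegative off the event (the comparator is $U^*$), we get
\[
\BReg(T)\ge \tfrac{q\subp}{2}\cdot c'\cdot (T-N').
\]
So we may take $c_0 = q\subp c'/2$ and $N_0 = N'$. The main point needing care is the claim that \Cref{thm:weak-to-strong} and \Cref{thm:strong-failure-regret} remain valid under the zero-cost, ties-to-skip rule. The policy $\pi_e$ is still a deterministic function of the history, so the martingale/Azuma argument goes through verbatim. Nothing else changes.
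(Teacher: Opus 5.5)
Your proposal is correct and is essentially the paper's proof: the same chain \Cref{thm:zero_cost} $\to$ \Cref{thm:weak-to-strong} $\to$ \Cref{thm:strong-failure-regret} $\to$ \Cref{lm:util-gap}, with the cost term in the utility-gap bound vanishing because $\cost=0$. You also spell out why each link is unaffected by zero cost and by skip-favoring tie-breaking, and you make the constants explicit ($c_0=q\subp c'/2$, $N_0=N'$), whereas the paper leaves both implicit.
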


\begin{proof}
We apply the same chain as in the proof of \Cref{cor:high_regret_fail}: consecutively apply \Cref{thm:weak-to-strong}, \Cref{thm:strong-failure-regret}, and \Cref{lm:util-gap}, with \Cref{thm:zero_cost} in place of \Cref{thm:m_2_symm}. Since $\cost=0$, the cost bound in \Cref{lm:util-gap} is trivially satisfied.
\end{proof}

We prove \Cref{thm:zero_cost} by splitting into two cases based on whether $f=\max$ or not.

\subsection{Proof of \Cref{thm:zero_cost}: Case $f \neq \max$}
\label{sec:zero-cost-fneqmax}

We impose the condition
\begin{align}
  f(1,1) > f(1,0) = f(0,1),
  \label{eq:f_not_max}
\end{align}
where the equality follows from symmetry of $f$; this excludes $f=\max$, for which $f(1,1)=f(1,0)=1$.

We first establish the analogue of \Cref{lm:no_pull} for zero cost.

\begin{lemma}[Zero-cost no-pull]\label[lemma]{lm:no_pull_zero_cost}
Assume cost $\cost=0$ and condition~\eqref{eq:f_not_max}.
If
\[
  \frac{\alpha_{2,t-1}+1}{\alpha_{2,t-1}+\beta_{2,t-1}+1} < \gamma_{1,t-1},
\]
then arm~$2$ is not selected in round~$t$.
\end{lemma}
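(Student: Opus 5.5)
We follow the template of the proof of \Cref{lm:no_pull}. Suppose a deterministic posterior-optimal policy plays arm~$2$ in round $t$; we derive a contradiction. Because $\cost=0$, strictness can no longer come from saved costs. Instead it must come either from the strict inequality $\gamma_1>\gamma_2$ combined with \eqref{eq:f_not_max}, or from the tie-breaking rule in favor of skips, which excludes any policy whose posterior utility merely ties with an alternative that skips more.

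First we handle the second round. Here $\gamma_{2,t-1}<\gamma_{1,t-1}$ holds. Playing arm~$1$ yields $\gamma_1 f(r_1,1)+(1-\gamma_1)f(r_1,0)$, which is at least the corresponding value for arm~$2$. Equality forces $f(r_1,1)=f(r_1,0)$. In that case skipping ties both arms and is selected by the tie-breaking rule, so arm~$2$ is never chosen. This is the zero-cost analogue of \Cref{cl:1_beats_2_last_round}.

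Now let $t$ be the first round. After a pull of arm~$2$ we have $\gamma_{2,t}\le\gamma_2^+<\gamma_1=\gamma_{1,t}$, so by the previous step the second round is either arm~$1$ or skip. This gives the same four cases I--IV as before. With $\cost=0$ and monotone $f$, playing arm~$1$ in round two weakly dominates skipping. Hence the posterior utility of starting with arm~$2$ is at most that of Case~II, namely
\[
\gamma_2\gamma_1 f(1,1)+\bigl(\gamma_2(1-\gamma_1)+(1-\gamma_2)\gamma_1\bigr)f(1,0).
\]
We compare this with the alternative that plays arm~$1$ first, plays arm~$1$ again after a reward of $1$, and plays arm~$2$ after a reward of $0$. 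As in Case~II of \Cref{lm:no_pull}, the gain is
\[
\gamma_1(\gamma_1^+-\gamma_2)\bigl(f(1,1)-f(1,0)\bigr).
\]
This is strictly positive because $\gamma_1^+>\gamma_1>\gamma_2$, $\gamma_1>0$, and $f(1,1)>f(1,0)$ by \eqref{eq:f_not_max}.

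Cases I, III and IV are dominated by Case~II when costs are zero. This holds because replacing a skip by arm~$1$ weakly increases utility. So we have a strict contradiction in all cases. We should still check that the Case~I and Case~III values do not exceed the Case~II value for some exotic reason. Since they are pointwise dominated, they are at most the Case~II value, which is strictly below the alternative. Finally, we pass to randomized policies exactly as in \Cref{lm:no_pull}.

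The main obstacle is the loss of the cost-based tie-breaking arguments. In the positive-cost proof, the degenerate sub-cases ($f(1,1)=f(1,0)$ in Cases~II and~IV) were settled by saving $\cost$. Here they are excluded by \eqref{eq:f_not_max}. We must also confirm that the only tie that can arise, in the last-round claim, is resolved by the skip rule. This needs care because tie-breaking is applied between whole policies, so we should argue that any optimal policy whose last-round action differs from skip on a tie is never the one selected.
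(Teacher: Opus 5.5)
Your proof is correct, but for the first-round case it takes a shorter route than the paper.

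\textbf{The paper's route.} It mirrors the positive-cost proof of \Cref{lm:no_pull}. It splits into Cases~I--IV and beats each case with its own alternative policy. It invokes the skip tie-breaking rule inside Case~I (when $f(1,0)=f(0,0)$) and inside Case~IV.

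\textbf{Your route.} You use a single dominance reduction. With $\cost=0$ and monotone $f$, playing arm~$1$ in round two weakly dominates skipping. It also weakly dominates arm~$2$, since $\gamma_{2,t}\le\gamma_2^+<\gamma_1$. Hence any policy opening with arm~$2$ has posterior utility at most that of $[\text{arm }2,\ \text{arm }1]$. By symmetry of $f$, this equals the utility of $[\text{arm }1,\ \text{arm }2]$. That policy is strictly beaten by the adaptive one, by $\gamma_1(\gamma_1^+-\gamma_2)\bigl(f(1,1)-f(1,0)\bigr)>0$.

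\textbf{What your route buys.} The first-round argument needs no tie-breaking at all; even your appeal to the last-round claim there is dispensable. Strictness comes uniformly from \eqref{eq:f_not_max}, including the degenerate situation $f(1,0)=f(0,0)$. It also sidesteps the posterior bookkeeping in the paper's Cases~III and~IV. After an arm-$2$ pull, arm~$1$'s posterior mean is still $\gamma_1$, not $\gamma_1^{+}$ or $\gamma_1^{-}$ as written there. These slips are harmless to the paper's conclusions.

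\textbf{What the paper's route buys.} Its case-by-case structure is uniform with the positive-cost lemma. There, skipping can be strictly better than playing arm~$1$, so your reduction is unavailable.

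Your last-round claim and the passage to randomized policies match the paper's treatment, at the same level of formality regarding tie-breaking.
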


\begin{proof}
We show that any deterministic policy playing arm~$2$ in round~$t$ is strictly suboptimal.
Since any randomized policy is a mixture of deterministic ones, this implies the lemma.

We first handle round~$t$ being the \emph{second} round of the episode.

  \begin{claim}\label[claim]{cl:1_beats_2_last_round_zero}
  If $t$ is the second round and $\gamma_{2,t-1}<\gamma_{1,t-1}$, then arm~$2$ is not selected.
\end{claim}
\begin{proof}
  Since $\gamma_{1,t-1}>\gamma_{2,t-1}$ and $f(r_1,\cdot)$ is non-decreasing, playing arm~$1$ yields weakly higher posterior utility than arm~$2$.
  If strictly higher, the claim follows.
  If equal, then $f(r_1,0)=f(r_1,1)$, so skipping also yields the same expected score at zero cost — and ties go to skip, so arm~$2$ is not chosen.
\end{proof}

For the remainder, assume $t$ is the first round.
Since arm~$2$ is played in round~$t$, its posterior does not update arm~$1$, so
\[
  \gamma_{2,t} \le \frac{\alpha_{2,t-1}+1}{\alpha_{2,t-1}+\beta_{2,t-1}+1} < \gamma_{1,t-1}=\gamma_{1,t}.
\]
By \Cref{cl:1_beats_2_last_round_zero}, arm~$2$ cannot be played in round~$2$.
Write $\gamma_1:=\gamma_{1,t-1}$, $\gamma_2:=\gamma_{2,t-1}$, and $\gamma_1^+:=(\alpha_{1,t-1}+1)/(\alpha_{1,t-1}+\beta_{1,t-1}+1)$.
  Note that $\gamma_1^+>\gamma_1>\gamma_2$ (the first inequality holds because $\beta_{1, t-1} > 0$, and the second by lemma hypothesis).
The posterior optimal policy plays arm~$2$ in round~$1$ and then one of four alternatives in round~$2$:

\paragraph{Case~I: always skip in round~$2$.}
Posterior utility equals $\gamma_2 f(1,0)+(1-\gamma_2)f(0,0)$.
If $f(1,0)=f(0,0)$, this equals $f(0,0)$, which is the same as skipping both rounds; by tie-breaking in favor of skips, the policy would skip round~$1$ rather than play arm~$2$, a contradiction.
If $f(1,0)>f(0,0)$, playing arm~$1$ in round~$1$ with skip in round~$2$ yields $\gamma_1 f(1,0)+(1-\gamma_1)f(0,0)$, which is strictly higher because $\gamma_1>\gamma_2$, a contradiction.

\paragraph{Case~II: always play arm~$1$ in round~$2$.}
Since $f$ is symmetric, $[arm_2, arm_1]$ has the same utility distribution as $[arm_1, arm_2]$ (the outcomes $(r_1,r_2)$ have the same joint law after swapping, by symmetry of $f$ and independence of the arms).
We compare with the alternative policy $[arm_1 \text{ in round }1,\ arm_1 \text{ if }r_1{=}1,\ arm_2 \text{ if }r_1{=}0]$.
The utilities of the two policies differ only on the event $r_1=1$.
Conditional on $r_1=1$: the alternative plays arm~$1$ with updated posterior $\gamma_1^+$, while $[arm_1, arm_2]$ plays arm~$2$ with posterior $\gamma_2$.
Since $\gamma_1^+>\gamma_2$ and $f(1,1)>f(1,0)$ by \eqref{eq:f_not_max}, playing arm~$1$ in round~$2$ yields strictly higher utility.
The event $r_1=1$ has probability $\gamma_1>0$, so the alternative is strictly better than $[arm_1, arm_2]$, hence strictly better than $[arm_2, arm_1]$, a contradiction.

\paragraph{Case~III: play arm~$1$ in round~$2$ if $r_1{=}0$; skip if $r_1{=}1$.}
Skipping when $r_1=1$ is only optimal if doing so matches the best alternative, i.e., playing any arm gives no more than $f(1,0)$ in expectation.
But playing arm~$1$ when $r_1=1$ yields $\gamma_1^+ f(1,1)+(1-\gamma_1^+)f(1,0)>f(1,0)$ because $\gamma_1^+>0$ and $f(1,1)>f(1,0)$ by \eqref{eq:f_not_max}.
So skipping when $r_1=1$ is strictly suboptimal, a contradiction.

\paragraph{Case~IV: play arm~$1$ in round~$2$ if $r_1{=}1$; skip if $r_1{=}0$.}
Skipping when $r_1=0$ is optimal (given ties go to skip) only if playing arm~$1$ gives no more than $f(0,0)$, i.e., $\gamma_1^- f(0,1)+(1-\gamma_1^-)f(0,0)\le f(0,0)$ where $\gamma_1^-:=\alpha_{1,t-1}/(\alpha_{1,t-1}+\beta_{1,t-1}+1)$.
Since $f(0,1)\ge f(0,0)$, this forces $f(0,1)=f(0,0)$; by symmetry $f(1,0)=f(0,0)$ as well.
The utility of $[arm_2,\, arm_1 \text{ if }r_1{=}1]$ is then
\[
  \gamma_2\bigl(\gamma_1 f(1,1)+(1-\gamma_1)f(0,0)\bigr)+(1-\gamma_2)f(0,0)
  = \gamma_2\gamma_1 f(1,1)+(1-\gamma_2\gamma_1)f(0,0).
\]
The alternative policy $[arm_1,\, arm_1 \text{ if }r_1{=}1]$ yields
\[
  \gamma_1\bigl(\gamma_1^+ f(1,1)+(1-\gamma_1^+)f(0,0)\bigr)+(1-\gamma_1)f(0,0)
  = \gamma_1\gamma_1^+ f(1,1)+(1-\gamma_1\gamma_1^+)f(0,0).
\]
Since $\gamma_1^+>\gamma_2$ we have $\gamma_1\gamma_1^+>\gamma_2\gamma_1$, and $f(1,1)>f(0,0)$ by \eqref{eq:f_not_max}, so the alternative is strictly better, a contradiction.
\end{proof}

\begin{proof}[Proof of \Cref{thm:zero_cost} for $f\ne\max$]
Similar to the proof of \Cref{thm:m_2_symm}(a) in \Cref{sec:proof-cost}, with \Cref{lm:no_pull} replaced by \Cref{lm:no_pull_zero_cost} throughout.
\end{proof}

\subsection{Proof of \Cref{thm:zero_cost}: Case $f = \max$}
\label{sec:zero-cost-max}

We now handle $f=\max$, for which $f(1,1)=f(1,0)=f(0,1)=:x>0$ and $f(0,0)=0$.
Condition~\eqref{eq:f_not_max} fails here, so \Cref{lm:no_pull_zero_cost} does not apply.
Instead we use a different no-pull condition based on arm~$1$'s \emph{pessimistic} posterior.

\begin{lemma}[No-pull for $f=\max$]\label[lemma]{lm:no_pull_max}
Assume $\cost=0$ and $f=\max$.
Write $\gamma_1^-:=\alpha_{1,t-1}/(\alpha_{1,t-1}+\beta_{1,t-1}+1)$.
If $\gamma_1^- > \gamma_{2,t-1}$, then arm~$2$ is not selected in round~$t$.
\end{lemma}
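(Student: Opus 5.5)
The plan is a direct comparison of posterior utilities, in the style of \Cref{lm:no_pull}, but now using the pessimistic posterior $\gamma_1^-$ instead of an optimistic one. First I show that any deterministic policy playing arm~$2$ in round~$t$ has strictly lower posterior utility than some alternative deterministic policy. Since a randomized policy is a mixture of deterministic ones, this shows that arm~$2$ is played with probability zero. (Without strictness the tie-breaking rule would not exclude arm~$2$, so strictness is what I must secure.)

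Throughout I write $x:=f(1,1)=f(1,0)=f(0,1)>0$, $\gamma_1:=\gamma_{1,t-1}$ and $\gamma_2:=\gamma_{2,t-1}$. The basic fact I use is
\[
\gamma_1 > \gamma_1^- > \gamma_2,
\]
where the first inequality holds because $\alpha_{1,t-1}>0$, and the second is the hypothesis. I also use $\gamma_1\in(0,1)$.

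\textbf{Case 1: $t$ is the second round of the episode.}
\begin{itemize}
\item If $r_1=1$, the score is $x$ whatever happens next. Every action gives the same posterior utility, and the ties-to-skip rule selects skip.
\item If $r_1=0$, arm~$1$ gives posterior utility $\gamma_1 x$ and arm~$2$ gives $\gamma_2 x$. Since $\gamma_1>\gamma_2$ and $x>0$, arm~$1$ is strictly better, so arm~$2$ is not chosen.
\end{itemize}

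\textbf{Case 2: $t$ is the first round of the episode.} Suppose a posterior optimal deterministic policy plays arm~$2$ first. I bound its posterior utility by optimizing over all second-round continuations.
\begin{itemize}
\item After $r_1=1$ the score is already $x$.
\item After $r_1=0$, the posterior of arm~$1$ is unchanged at $\gamma_1$, and the posterior mean of arm~$2$ drops to $\gamma_2^-<\gamma_2<\gamma_1$. The best continuation therefore yields at most $\gamma_1 x$.
\end{itemize}
Hence any arm-$2$-first policy has posterior utility at most
\[
\gamma_2 x+(1-\gamma_2)\gamma_1 x = x\bigl(1-(1-\gamma_1)(1-\gamma_2)\bigr).
\]

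\textbf{The main obstacle.} The natural alternative, which plays arm~$1$ first and then arm~$2$ after a zero, gives exactly the same value. This is because the ``one success among two independent arms'' probability is symmetric in the two arms. So that swap only produces a tie, which is insufficient. I will instead use the alternative that plays arm~$1$ first and, if $r_1=0$, plays arm~$1$ again. Arm~$1$'s second pull then succeeds with probability $\gamma_1^-$, so this policy's posterior utility is
\[
x\bigl(\gamma_1+(1-\gamma_1)\gamma_1^-\bigr).
\]
Subtracting the previous bound, rewritten as $x\bigl(\gamma_1+(1-\gamma_1)\gamma_2\bigr)$, the difference is
\[
x(1-\gamma_1)(\gamma_1^- - \gamma_2)>0,
\]
using $x>0$, $\gamma_1<1$ and $\gamma_1^->\gamma_2$. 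This contradicts posterior optimality, so arm~$2$ is never selected in round~$t$.
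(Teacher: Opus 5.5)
Your proof is correct and is essentially the paper's argument. The second round is handled by the last-round comparison, with ties going to skip when $r_1=1$. In the first round, the best arm-2-first policy (arm $2$, then arm $1$ after a zero) is compared with the policy that plays arm $1$ and replays it after a zero, and their posterior utilities differ by $x(1-\gamma_1)(\gamma_1^- - \gamma_2)>0$. Your version upper-bounds all arm-2-first continuations rather than asserting the optimal continuation, which is a slightly more careful phrasing of the same step.
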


\begin{proof}
The second round follows from \Cref{cl:1_beats_2_last_round_zero}, since $\gamma_{1,t-1}>\gamma_1^->\gamma_{2,t-1}$.

  For the first round, note that for $f=\max$ the optimal policy after playing any arm $a$ in round~$1$ is: skip round~$2$ if $r_1=1$ (tie-break: score is already $x$ regardless), and play arm~$1$ in round~$2$ if $r_1=0$ (since $f(0,1)=x>0=f(0,0)$ and $\gamma_{1,t-1}> \gamma_{1, t-1}^->\gamma_{2,t-1}$).
Thus the only policies to compare are $[arm_2,\, arm_1 \text{ if } r_1{=}0]$ and $[arm_1,\, arm_1 \text{ if } r_1{=}0]$, with posterior utilities
\begin{align*}
  U_2 &= \bigl(1-(1-\gamma_2)(1-\gamma_1)\bigr)x, \\
  U_1 &= \bigl(1-(1-\gamma_1)(1-\gamma_1^-)\bigr)x.
\end{align*}
Since $\gamma_1^->\gamma_2$, we have $(1-\gamma_1^-)<(1-\gamma_2)$, so $(1-\gamma_1)(1-\gamma_1^-)<(1-\gamma_1)(1-\gamma_2)$, giving $U_1>U_2$.
\end{proof}

Recall that $\Delta=\gamma_{1,0}-\gamma_{2,0}$.
Assume $N\subp > \frac{4\gamma_{2,0}}{3\Delta} - (\alpha_{1,0}+\beta_{1,0})$, and define
\[
  \event_2^{\max}:=\{\tape_{2,\ell}=0\text{ for all }\ell\in[N\subp]\}.
\]

\begin{lemma}\label[lemma]{lm:bounded_pulls_max}
Under $\event_1$ and $\event_2^{\max}$, arm~$2$ is pulled at most $N\subp$ times.
\end{lemma}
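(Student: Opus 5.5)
Following the template of \Cref{lm:bounded-pulls}, we argue by contradiction: suppose arm $2$ is pulled $N\subp+1$ times and let $t$ be the round of the $(N\subp+1)$-st pull. The aim is to verify the hypothesis $\gamma_1^- > \gamma_{2,t-1}$ of \Cref{lm:no_pull_max} at round $t$, which then forbids the pull and gives the contradiction.

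First, by $\event_2^{\max}$ all $N\subp$ earlier pulls of arm $2$ returned $0$, so $\alpha_{2,t-1}=\alpha_{2,0}$ and $\beta_{2,t-1}=\beta_{2,0}+N\subp$. Hence $\gamma_{2,t-1}\le\gamma_{2,0}$; in fact any number of zero-pulls suffices for this. Second, by $\event_1$ and \Cref{lm:event1-mart}, $\gamma_{1,t-1}\ge\gamma_{1,0}-\Delta/4$.

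The step needing care is converting this into a bound on the pessimistic posterior. Write $n=n_{1,t-1}$ and $S=\alpha_{1,t-1}+\beta_{1,t-1}=\alpha_{1,0}+\beta_{1,0}+n$. Then
\[ \gamma_1^- = \gamma_{1,t-1}\cdot\frac{S}{S+1}. \]
So $\gamma_1^->\gamma_{2,0}$ holds as soon as $(\gamma_{1,0}-\Delta/4)\,S/(S+1)>\gamma_{2,0}$. As in the computation inside \Cref{lm:bounded-pulls_gen_f}, this is equivalent to
\[ S>\frac{\gamma_{2,0}}{3\Delta/4}=\frac{4\gamma_{2,0}}{3\Delta}, \]
that is, $n>\frac{4\gamma_{2,0}}{3\Delta}-(\alpha_{1,0}+\beta_{1,0})$.

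The obstacle is that this requires arm $1$ to have been pulled enough times, whereas the assumption on $N\subp$ is phrased in terms of pulls of arm $2$. The plan to bridge the gap is to show that each episode in which arm $2$ was pulled (while the failure persists) also contains pulls of arm $1$, or else to count differently. For $f=\max$, the optimal continuation after a $0$ on arm $2$ in round one is to play arm $1$ (by the structure in \Cref{lm:no_pull_max}). Alternatively, the argument can be made at the level of arm $2$'s own count: either $\gamma_1^-$ already exceeds $\gamma_{2,t-1}$, or every earlier arm-$2$ pull occurred when $\gamma_1^-\le\gamma_{2,t-1}$. In the second case, the pessimistic step at that time forces $S$ small, bounding $n$ from above. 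Pigeonholing over the strictly decreasing values $\gamma_2^{(k)}=\alpha_{2,0}/(\alpha_{2,0}+\beta_{2,0}+k)$ then mirrors the $\badset$ argument of \Cref{lm:bounded-pulls_gen_f}.

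I would try the direct route first: show that during the first $N\subp$ pulls of arm $2$, each zero reward in round one of an episode is followed by an arm-$1$ pull. That gives $n\ge$ (number of such episodes), comparable to $N\subp$ up to a factor of $2$, which I would absorb by enlarging $N\subp$ if needed. With $S$ large enough, the inequality $\gamma_1^->\gamma_{2,0}\ge\gamma_{2,t-1}$ holds, and \Cref{lm:no_pull_max} completes the contradiction.
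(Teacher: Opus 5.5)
Your outline matches the paper's: contradiction at the $(N\subp+1)$-st pull, a lower bound on $\alpha_{1,t-1}+\beta_{1,t-1}$ from arm-1 pulls triggered by zeros of arm $2$, and the conversion to $\gamma_1^->\gamma_{2,0}$. Your computation that the requirement is $\alpha_{1,t-1}+\beta_{1,t-1}>\frac{4\gamma_{2,0}}{3\Delta}$ is correct and matches the hypothesis on $N\subp$ exactly.

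The gap is in the counting step. You only get $n_{1,t-1}\ge$ (number of episodes in which arm $2$ returned $0$ in round one and arm $1$ followed). You then claim this is within a factor $2$ of $N\subp$, and propose to absorb the factor by enlarging $N\subp$. The factor-$2$ bound does not follow from what you establish: nothing you prove rules out arm-2 pulls in round two of episodes that did not start with arm $2$. For example, a skip in round one followed by arm $2$ would be an arm-2 pull with no arm-1 pull. Enlarging $N\subp$ would also prove a different statement, since the lemma is claimed for every $N\subp>\frac{4\gamma_{2,0}}{3\Delta}-(\alpha_{1,0}+\beta_{1,0})$, with no slack.

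The missing observation is that, under $\event_1$ and $\event_2^{\max}$, arm $2$ is never played in the \emph{second} round of any episode up to round $t$. At every such round, $\gamma_{2,\cdot}\le\gamma_{2,0}<\gamma_{1,0}-\Delta/4\le\gamma_{1,\cdot}$, so \Cref{cl:1_beats_2_last_round_zero} applies. Consequently:
\begin{itemize}
\item Each of the first $N\subp$ arm-2 pulls is the round-one pull of a distinct episode.
\item In each such episode $r_1=0$, so the score equals $r_2$. At zero cost, arm $1$ (expected score $\gamma_{1,\cdot}f(0,1)>0$) strictly beats skipping, and arm $2$ is excluded by the same claim.
\item The $(N\subp+1)$-st pull at $t$ is itself a round-one pull, so all these episodes are complete before $t$.
\end{itemize}
This gives $n_{1,t-1}\ge N\subp$ exactly, and your inequality then follows from the stated assumption on $N\subp$ with no enlargement.

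Two smaller points:
\begin{itemize}
\item Justify the round-two choice directly, as above, rather than via ``the structure in \Cref{lm:no_pull_max}''. That lemma's description is derived under its hypothesis $\gamma_1^->\gamma_2$, which is precisely what may fail in the earlier episodes where arm $2$ is being pulled.
\item Your fallback pigeonhole route does not work here. The $\badset$ argument in \Cref{lm:bounded-pulls_gen_f} only has to avoid the single equality $\gamma_1^-=\gamma_2$. Here you need the strict inequality $\gamma_1^->\gamma_{2,t-1}$, and counting distinct values of $\gamma_2^{(k)}$ cannot produce that.
\end{itemize}
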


\begin{proof}
Suppose for contradiction that arm~$2$ is pulled $N\subp+1$ times; let $t^*$ be the round of this pull.
By round $t^*$, arm~$2$ has been pulled $N\subp$ times with all rewards $0$ (by $\event_2^{\max}$), so the posterior entering round $t^*$ satisfies
\[
  \gamma_{2,t^*-1} = \frac{\alpha_{2,0}}{\alpha_{2,0}+\beta_{2,0}+N\subp} < \gamma_{2,0}.
\]

We now characterize what happened during the $N\subp$ arm-$2$ pulls prior to round~$t^*$, establishing a lower bound on arm-$1$ pulls at time~$t^*$.

\textbf{Step~1: arm~$2$ only appears in first round of episodes (for all $t\le t^*$).}
Under $\event_1$, $\gamma_{1,t}\ge\gamma_{1,0}-\Delta/4>\gamma_{2,0}$ for all $t\le t^*$.
Under $\event_2^{\max}$, arm~$2$'s posterior only decreases with each zero-reward pull, so $\gamma_{2,t}\le\gamma_{2,0}<\gamma_{1,0}-\Delta/4\le\gamma_{1,t}$ for all $t\le t^*$.
By \Cref{cl:1_beats_2_last_round_zero}, arm~$2$ is never selected in round~$2$ of any episode up to~$t^*$.

\textbf{Step~2: each arm-$2$ pull generates an arm-$1$ pull (for all $t\le t^*$).}
In every episode prior to $t^*$ where arm~$2$ is pulled in round~$1$, it yields reward~$0$ (by $\event_2^{\max}$).
After observing $r_1=0$, the score is $\max(0,r_2)=r_2$, so playing arm~$1$ in round~$2$ gives $\gamma_{1,t}\cdot x>0$ while skipping gives~$0$; since $\cost=0$, arm~$1$ is strictly preferred over skip.
By Step~1, arm~$2$ is not played in round~$2$.
Hence arm~$1$ is played in round~$2$ in every such episode, contributing one arm-$1$ pull per arm-$2$ pull.
Over the $N\subp$ arm-$2$ pulls before $t^*$, arm~$1$ accumulates at least $N\subp$ pulls, giving $\alpha_{1,t^*-1}+\beta_{1,t^*-1}\ge\alpha_{1,0}+\beta_{1,0}+N\subp$.

\textbf{Step~3: apply \Cref{lm:no_pull_max}.}
Under $\event_1$ with $\alpha_{1,t^*-1}+\beta_{1,t^*-1}\ge\alpha_{1,0}+\beta_{1,0}+N\subp$:
\[
  \gamma_1^-
  = \frac{\alpha_{1,t^*-1}}{\alpha_{1,t^*-1}+\beta_{1,t^*-1}+1}
  \ge (\gamma_{1,0}-\tfrac{\Delta}{4})\cdot
  \frac{\alpha_{1,0}+\beta_{1,0}+N\subp}{\alpha_{1,0}+\beta_{1,0}+N\subp+1}.
\]
Let $n:=\alpha_{1,0}+\beta_{1,0}+N\subp$.
The condition $N\subp>\frac{4\gamma_{2,0}}{3\Delta}-(\alpha_{1,0}+\beta_{1,0})$ is equivalent to $n\cdot\frac{3\Delta}{4}>\gamma_{2,0}$, i.e., $(\gamma_{1,0}-\frac{\Delta}{4}-\gamma_{2,0})\cdot n > \gamma_{2,0}$, which in turn is equivalent to
\[
  (\gamma_{1,0}-\tfrac{\Delta}{4})\cdot\frac{n}{n+1} > \gamma_{2,0}.
\]
Hence $\gamma_1^- \ge (\gamma_{1,0}-\tfrac{\Delta}{4})\cdot\frac{n}{n+1} > \gamma_{2,0} > \gamma_{2,t^*-1}$, and by \Cref{lm:no_pull_max}, arm~$2$ is not pulled at $t^*$, a contradiction.
\end{proof}

\begin{proof}[Proof of \Cref{thm:zero_cost} for $f=\max$]
Similar to the $f\ne\max$ case, with \Cref{lm:bounded_pulls_max} and \Cref{lm:no_pull_max} replacing \Cref{lm:bounded-pulls} and \Cref{lm:no_pull_zero_cost} respectively.
\end{proof}